\newcommand{\defeq}{\mathrel{\mathop:}=}
\newtheorem{defn}{Definition}
\newtheorem{theorem}{Theorem}
\newtheorem{prop}{Proposition}
\newtheorem{corollary}{Corollary}
\newtheorem{lem}{Lemma}
\newtheorem{example}{Example}
\begin{document}

\title{Scalable tests of quantum contextuality from stabilizer-testing nonlocal games}

\author{Wanbing Zhao}
\affiliation{Department of
Physics and Astronomy, Rice University, 6100 Main Street
Houston, TX 77005, USA}

\author{H. W. Shawn Liew}
\affiliation{Department of Physics, National University of Singapore, Singapore 117551}
\affiliation{Centre for Quantum Technologies, National University of Singapore, 3 Science Drive 2, Singapore 117543}

\author{Wen Wei Ho}
\affiliation{Department of Physics, National University of Singapore, Singapore 117551}
\affiliation{Centre for Quantum Technologies, National University of Singapore, 3 Science Drive 2, Singapore 117543}

\author{Chunxiao Liu}
\affiliation{Department of Physics, University of California, Berkeley, California 94720, USA}
\affiliation{Université Paris-Saclay, CNRS, Laboratoire de Physique des Solides, 91405 Orsay, France}

\author{Vir B. Bulchandani}
\affiliation{Department of
Physics and Astronomy, Rice University, 6100 Main Street
Houston, TX 77005, USA}

\begin{abstract}
Soon after the dawn of quantum error correction, DiVincenzo and Peres observed that stabilizer codewords could give rise to simple proofs of quantumness via contextuality. This discovery can be recast in the language of nonlocal games: every $n$-qubit stabilizer state defines a specific ``stabilizer-testing'' $n$-player nonlocal game, which quantum players can win with probability one. If quantum players can moreover outperform all possible classical players, then the state is contextual. However, the classical values of stabilizer-testing games are largely unknown for scalable examples beyond the $n$-qubit GHZ state. We introduce several new methods for upper-bounding the classical values of these games. We first prove a general coding-theory bound for all stabilizer-testing games: if the classical value $p_{\mathrm{cl}}^* < 1$, then $p_{\mathrm{cl}}^* \leq 7/8$, i.e., there is no classical strategy that can perform as well as the optimal quantum strategy even in an asymptotic sense. We then show how to tighten this bound for the most common scalable examples, namely GHZ, toric-code and cyclic cluster states. In particular, we establish an asymptotically tight upper bound for cyclic cluster states using transfer-matrix methods. This leads to the striking conclusion that measuring an exponentially small fidelity to the cyclic cluster state will suffice to witness its contextuality. 
    
\end{abstract}

\maketitle

\section{Introduction}
 The laws of quantum physics are incompatible with a description of quantum measurement outcomes in terms of predetermined classical values~\cite{bell1966problem,kochenspecker,greenberger1989going,mermin1993hidden,Abramsky_2011}. This inherent feature of quantum mechanics, known as contextuality, 
enables quantum systems to outperform their classical counterparts at certain information-processing tasks~\cite{PhysRevA.88.022322,howard2014contextuality,Abramsky2015,Cabello21}. For example, in the early 2000s, Brassard, Broadbent and Tapp (building on earlier work by Mermin~\cite{mermin1990extreme}) showed that quantum contextuality implied the existence of a ``parity game'' that quantum players could 
 win with higher probability than any possible set of classical players, provided they were allowed to share an entangled Greenberger-Horne-Zeilinger (GHZ) state~\cite{greenberger1989going} before playing the game~\cite{brassard2004recasting}. More recently, such multi-party nonlocal games have attracted growing interest as a means 
 for rigorously demonstrating quantum computational advantage 
for constrained families of circuits~\cite{Bravyi_2018,watts2019exponential,bravyi2020quantum}, probing phases of quantum matter~\cite{Daniel_2021,bulchandani2023playing,Bulchandani_2023,Hart_2025,hart2025braiding} and experimentally benchmarking the preparation of non-trivial quantum states on noisy intermediate-scale quantum devices~\cite{Sheffer_2022,Daniel_2022,Drmota_2025,Hart_2025,kumar2025quantumclassicalseparationboundedresourcetasks}.

Given the crucial role that quantum error correction plays in quantum computing, a natural class of nonlocal games to consider from the viewpoint of benchmarking quantum devices are games whose entangled resource states are simultaneously the codewords of a quantum error-correcting code. The fact that quantum codewords could exhibit quantum contextuality was first noticed by DiVincenzo and Peres~\cite{DiVincenzo_1997} in the setting of few-qubit stabilizer codes~\cite{gottesman1997stabilizer}. This observation can be generalized to a Bell inequality for arbitrary graph states~\cite{Scarani_2005,guhne2005bell}, which admits an equivalent formulation as a nonlocal game~\cite{Cabello_2013}. In contrast to the parity game, which reflects the stabilizer algebra of the $n$-qubit GHZ state~\cite{greenberger1989going,mermin1990extreme,brassard2004recasting}, the resulting ``stabilizer-testing games'' can be defined for any stabilizer state whatsoever~\cite{zheng2024efficientverificationstabilizercode,chen2024quantumsubspaceverificationerror}. Such games have a quantum value of unity: a set of quantum players who share the appropriate stabilizer state before playing the game can \emph{always} win the game with probability one. However, the scaling of 
the optimal classical winning probabilities, or ``classical values'', of these games as the number of qubits increases remains largely unknown. (For precise definitions of quantum and classical players, see Section \ref{sec:stabtest}.)



Instead, previous studies~\cite{DiVincenzo_1997,guhne2005bell,Cabello_2013,Abramsky_2017} have mainly focused on determining whether or not stabilizer-testing games admit non-zero quantum advantage~\footnote{In this paper, we follow recent work on nonlocal games~\cite{Cabello21,lin2023quantumtasksassistedquantum,Hart_2025,Drmota_2025,kumar2025quantumclassicalseparationboundedresourcetasks} in referring to the advantage that quantum players attain at nonlocal games as ``quantum advantage'' for such games. This is a distinct concept from quantum computational advantage, which we do not address here.}, i.e., whether their classical values are strictly less than one, rather than attempting to estimate (or even upper bound) their classical values directly. This is presumably because obtaining the classical values of such games is expected to be computationally hard in general~\cite{XOR}.
Even for the best studied examples of toric code and cyclic cluster states, previous theoretical work~\cite{guhne2005bell,Cabello_2013,Bravyi_2018,Daniel_2021,Daniel_2022,Bulchandani_2023,bulchandani2023playing,Hart_2025,hart2025braiding} has tended to focus on many-body embeddings of either few-player games or parity games. Thus the classical values of these games do not clearly reflect the physics of the underlying states, being either independent of the number of qubits or identical to the classical value of a parity game played with some subset of the quantum degrees of freedom.
A better understanding of the range of possibilities for these games is desirable both from the viewpoint of benchmarking quantum state preparation for many-qubit codewords, such as surface-code states, and for clarifying the significance of quantum contextuality for quantum many-body physics, which remains relatively obscure~\cite{Daniel_2021,Bulchandani_2023,bulchandani2023playing,Hart_2025,hart2025braiding} compared to better studied measures of quantumness such as entanglement entropies~\cite{Vidal_2003,latorre2004groundstateentanglementquantum,Pasquale_Calabrese_2004,kitaev2006topological,levin2006detecting,ES}.

Motivated by such considerations, this paper develops a systematic theory of stabilizer-testing games. Our starting point is an expression for the classical values of such games in terms of classical coding theory (see also~\cite{trandafir2022irreducible}). A detailed analysis of this expression reveals that all deterministic classical strategies for the stabilizer-testing game lie in a vector subspace $V$ of the space of quadratic Boolean functions on $n$ input bits, and can thus be viewed as codewords of a classical Reed-Muller code. Meanwhile, the set of Pauli measurement outcomes for a given stabilizer state define a distinct $n$-input Boolean ``parity function'' $c(\mathbf{x})$, which can always be chosen to be at most cubic in $\mathbf{x}$ (Theorem \ref{thm:unifying bound}). The stabilizer-testing game admits non-zero quantum advantage if and only if there is a non-zero Hamming distance between $V$ and $c(\mathbf{x})$. The larger this Hamming distance, the smaller the classical value of the game. This intuitive picture allows us to develop a comprehensive theoretical understanding of nonclassicality for these games.

The paper is structured as follows. In Section \ref{sec:stabtest}, we summarize the basic properties of stabilizer-testing games, provide some explicit few-qubit examples of such games, and explain how these games connect to related constructions in the literature. We then prove (in Section \ref{sec:codingbounds}) a universal coding-theory upper bound on the classical values of stabilizer-testing games with quantum advantage, showing that the classical value $p_{\mathrm{cl}}^* \leq 7/8$ whenever the game queries a full stabilizer group. This bound follows from the code distance of the classical Reed-Muller code $\mathrm{RM}(3,n)$~\cite{macwilliams1977theory} and substantially improves upon the previous~\cite{guhne2005bell} generic bound $p_{\mathrm{cl}}^* \leq 1 - \frac{1}{2^n}$ reflecting a non-zero quantum advantage. We next explain how this $7/8$ threshold can be refined to an asymptotic upper bound of $3/4$ as $n \to \infty$ for GHZ and toric-code states, through a more detailed analysis of the nonlinearity profiles of the underlying Boolean functions. We then use transfer-matrix methods in Section \ref{sec:cluster} to establish an asymptotic upper bound of $1/2$ for cyclic cluster states. The latter result in particular implies that a fidelity-per-qubit $\epsilon \gtrsim 0.89$ to the cyclic cluster state is sufficient to demonstrate its quantum contextuality. In Section \ref{sec:MBQC}, we explain how our work connects to the established theory of contextuality in measurement-based quantum computing. We conclude by discussing various possible extensions of our results.


\section{The stabilizer-testing game}
\label{sec:stabtest}
\subsection{Overview}
The main object of study in this paper will be a specific stabilizer-testing nonlocal game that can be defined for any stabilizer state. This game was first introduced as a Bell inequality for graph states in Ref.~\cite{guhne2005bell} before being reformulated as the ``graph game'' in Ref.~\cite{Cabello_2013}. However, because our results do not make explicit use of the graph-state formalism and because this game is naturally applicable to self-testing~\cite{mayers2004selftestingquantumapparatus} of stabilizer states, we instead refer to this game as the ``stabilizer-testing game''.
    Indeed, variants of this game are implicit in recent work on verifying stabilizer codespaces~\cite{zheng2024efficientverificationstabilizercode,chen2024quantumsubspaceverificationerror}.
        
    To define the stabilizer-testing game, let $\mathcal{H}$ denote an $n$-qubit Hilbert space and let $|\psi_0\rangle \in \mathcal{H}$ be an arbitrary stabilizer codeword with stabilizer group $S = \langle g_1,....,g_{n}\rangle$. Write $(\sigma^0_j,\sigma^1_j,\sigma^2_j,\sigma^3_j) = (\mathbbm{1}_j,X_j,Y_j,Z_j)$ for the Pauli operators acting on qubit $j$. A stabilizer $M \in S$ is a signed Pauli string $M$ such that $M|\psi\rangle = |\psi\rangle$
    and can be written as 
    \begin{equation}
    \label{eq:Paulirep}
     M = (-1)^{c(M)} \prod_{j=1}^n \sigma^{\alpha_j(M)}_j
    \end{equation}
    where $c(M)$ labels the ``sign'' of the stabilizer and the indices $\alpha_j \in \{0,1,2,3\}$ specify the $j$th tensor factor of the stabilizer $M$.
    Every such state naturally defines the following nonlocal game~\cite{guhne2005bell,Cabello_2013}, that can be seen as a generalization \footnote{The Mermin--Brassard--Broadbent--Tapp parity game differs slightly from the version defined in Definition~\ref{def:stabdef}; the distinction will be clarified in Section~\ref{sec:codingbounds}.} of the Mermin-Brassard-Broadbent-Tapp parity game for the GHZ state~\cite{mermin1990extreme,brassard2004recasting}.
    
    \begin{defn}
        \label{def:stabdef}
        Let $|\psi_0\rangle$ be a stabilizer codeword with stabilizer group $S$. The \emph{stabilizer-testing game} $G(S)$ is the following $n$-player nonlocal game. At the start of the game, the referee chooses a stabilizer $M \in S$ uniformly at random and hands each player $j=1,2,\ldots,n$ the question $\alpha_j(M) \in \{0,1,2,3\}$, which specifies the $j$th Pauli operator of $M$. After this, the players are not allowed to communicate classically with one another. At the end of the game, each player $j$ must return a bit $b_j \in \{0,1\}$ to the referee. The players collectively win the game if
        
        \begin{equation}
            \sum_{j=1}^n b_j \equiv c(M) \pmod{2}.
        \end{equation} 
    \end{defn}
    Throughout this paper, we will distinguish between \emph{quantum players} and \emph{classical players} of this game, who may enact quantum strategies and classical strategies respectively. When the players of the game are quantum, they may coordinate strategies and share a globally entangled resource state before the game is played and perform gates or measurements on their own set of qubits after the game begins. When the players of the game are classical, they may coordinate strategies before the game is played, but are not allowed to share entangled resource states. Thus a \emph{quantum strategy} $\mathcal{S} =(|\psi\rangle,\mathcal{P})$ consists of a shared entangled resource state $|\psi\rangle$ and a protocol $\mathcal{P}$ that describes the set of gates and measurements that each player $j$ performs on their own set of qubits in response to the question $\alpha_j$. Meanwhile a \emph{classical strategy} consists of $n$ functions $\{b_j(\alpha_j)\}_{j=1}^n$ mapping input questions to output bits, where we consider only deterministic classical strategies without loss of generality~\cite{brassard2005quantum}. For more precise definitions, we refer the reader to Refs.~\cite{brassard2005quantum,1313847}.
    
    Every stabilizer-testing game admits the obvious perfect quantum strategy, that we call the ``default strategy'': the players share the state $|\psi_0\rangle$ before playing the game and after the game begins, each player $j$ only has access to the $j$th qubit. At the end of the game, player $j$ returns the bit $b_j$ corresponding to the outcome $(-1)^{b_j}$ of measuring the Pauli operator $\sigma_j^{\alpha_j(M)}$ specified by the question $\alpha_j(M)$ on their qubit. Let us denote this default strategy by $\mathcal{S}_0 = (|\psi_0\rangle,\mathcal{P}_0)$, where $\mathcal{P}_0$ denotes the ``default protocol'' of measurements that the players apply to their qubits.
    
    It follows that $p^*_{\mathrm{qu}}(G) = 1$, where $p^*_{\mathrm{qu}}(G)$ denotes the ``quantum value'' of $G$, i.e. the optimal probability for a set of quantum players to win the game. We would like to understand the ``classical value'' $p^*_{\mathrm{cl}}(G)$ of $G$, i.e. the optimal probability for a set of classical players to win the game. 
    
    If $p^*_{\mathrm{cl}}(G)=p^*_{\mathrm{qu}}(G) = 1$, then the stabilizer-testing game does not demonstrate quantumness of $|\psi\rangle$. If instead
    \begin{equation}
        \label{eq:quantumgame}
        p^*_{\mathrm{cl}}(G) < p^*_{\mathrm{qu}}(G) = 1,
    \end{equation}
then the stabilizer-testing game admits non-zero quantum advantage and yields a proof of quantumness for $|\psi\rangle$. 

In such cases, the stabilizer-testing game moreover furnishes a proof of quantum contextuality~\cite{bell1966problem,kochenspecker,greenberger1989going,mermin1993hidden,Abramsky_2011} for the state $|\psi_0\rangle$. This is because the subset of classical strategies that is constrained to return $b_j=0$ in response to the identity question $\alpha_j$ is in one-to-one correspondence with assignments of classical, local, hidden variables $(-1)^{b_j}$ to the outcome of measuring $\sigma_j^{\alpha_j(M)}$ in the state $|\psi_0\rangle$. Denoting the resulting ``local-hidden-variable'' value of the game by $p_{\mathrm{l.h.v.}}^*(G)$, it is clear that 
\begin{equation}
    \label{eq:LHVcomment}
    p_{\mathrm{l.h.v.}}^*(G) \leq p_{\mathrm{cl}}^*(G),
    \end{equation}
    implying that Eq. \eqref{eq:quantumgame} is a sufficient condition for contextuality of $C$. We find it simplest to focus on deriving upper bounds on $p_{\mathrm{cl}}^*(G)$ in this paper, with the understanding that this always implies an upper bound on $p_{\mathrm{l.h.v.}}^*(G)$ via Eq. \eqref{eq:LHVcomment}. In more formal language, $p_{\mathrm{l.h.v.}}^*(G) < 1$ demonstrates strong contextuality of the stabilizer subtheory~\cite{Abramsky_2011,frembs2023hierarchies}, which we will refer to simply as ``contextuality'' throughout this work.

Even though this game was introduced a long time ago~\cite{guhne2005bell,Cabello_2013}, its properties do not appear to have been studied for many-body states beyond the GHZ state, for the simple reason that $p_{\mathrm{cl}}^*(G)$ is prohibitively hard to compute in general (likely NP hard~\cite{XOR}). This means that for commonly arising stabilizer states like toric-code and cyclic cluster states, alternative games have been developed that are tailored to the specific state in question, see Refs.~\cite{Bulchandani_2023,Hart_2025,hart2025braiding} for three distinct proposals for the toric code and Refs.~\cite{Bravyi_2018,Daniel_2021} for cluster states. Possible downsides of these ``tailored'' constructions from the viewpoint of benchmarking quantum devices is that (i) they do not generalize straightforwardly to arbitrary stabilizer codewords (ii) if one applies the default protocol to a generic state $|\psi\rangle$, one generally learns little about the fidelity of the state $|\psi\rangle$ to the state $|\psi_0\rangle$ of interest.

In contrast, the probability of winning the stabilizer-testing game with the default protocol $\mathcal{P}_0$ applied to an arbitrary state $|\psi\rangle$ is
\begin{equation}
p_{\mathrm{qu}}(G,| \psi \rangle) = \frac{1}{2} + \frac{1}{2^{n+1}}\sum_{M \in S} \langle \psi | M | \psi \rangle.
\end{equation}
To express this as a fidelity, we can reintroduce the generators $g_i$ of the stabilizer group and write
\begin{equation}
\sum_{M \in S} \langle \psi | M | \psi \rangle = \sum_{\mathbf{x}\in\{0,1\}^n} \langle \psi | \prod_{j=1}^n g_j^{x_j}| \psi \rangle = \\\langle \psi | \prod_{j=1}^n (\mathbbm{1}+g_j)| \psi\rangle  = 2^n \langle \psi | \psi_0 \rangle \langle \psi_0 | \psi \rangle,
\end{equation}
which yields the concise ``fidelity formula''~\cite{Daniel_2022}
\begin{equation}
\label{eq:fidelityformula}
p_{\mathrm{qu}}(G,| \psi \rangle) = \frac{1}{2}\left(1+|\langle \psi|\psi_0\rangle|^2\right).
\end{equation}
This should be contrasted with the analogous result for the standard parity game~\cite{mermin1990extreme,brassard2004recasting} for the GHZ state, which states that~\cite{bulchandani2023playing}
\begin{equation}
\label{eq:fidelityformulaGHZ}
p_{\mathrm{qu}}(G,|\psi\rangle) = \frac{1}{2}(1+|\langle\psi|\mathrm{GHZ}^+\rangle|^2-|\langle\psi|\mathrm{GHZ}^-\rangle|^2)
\end{equation}
is penalized by the fidelity to the opposite-parity GHZ state, where $|\mathrm{GHZ}^{\pm}\rangle = \frac{1}{\sqrt{2}}(|00\ldots 0\rangle \pm |11\ldots 1\rangle)$ and we refer to $|\mathrm{GHZ}^+\rangle$ as the GHZ state throughout this paper.

We now discuss various representative examples of stabilizer-testing games and explain how they differ from related constructions in the literature.
\subsection{Examples of stabilizer-testing games}
\subsubsection{Example 1: the GHZ state}
\label{subsubsec:GHZexample}
First consider the three-qubit GHZ state $|\psi_0\rangle = \frac{1}{\sqrt{2}}(|000\rangle+|111\rangle)$ with stabilizer group
\begin{equation}
S = \langle XXX, ZZI, IZZ\rangle = \{III,IZZ,ZIZ,ZZI,XXX,-XYY,-YXY,-YYX\}.
\end{equation}
The corresponding three-player stabilizer-testing game can be expressed in terms of the following question-answer pairs
$$
\{(\boldsymbol{\alpha}(M_j),c(M_j))\}_{j=1}^{8} =\{(000, 0),\ (033, 0),\ (303,0),\ (330, 0),\ (111, 0),\ (122, 1),\ (212, 1),\ (221, 1)\},
$$
where the operators $M_j$ are the stabilizers of $|\psi_0\rangle$, the strings $\boldsymbol{\alpha}(M_j)$ encode questions for individual players and $c(M_j)$ are the stabilizer parities as defined in Eq. \eqref{eq:Paulirep}. 

As an illustrative example, consider one round of the game corresponding to the query $M_6 = -XYY$. This is encoded as the string $\mathbf{\alpha}(M_6) = 122$, which means that player one receives the question $\alpha_1 = 1$, player two receives the question $\alpha_2=2$ and player three receives the question $\alpha_3=2$, with each player ignorant of the others' questions. The corresponding parity bit $c(M_6)=1$ means that in order to win this round of the game, the players must each return bits $b_j\in\{0,1\}$ such that $\sum_{j=1}^3 b_j \equiv 1$ modulo two. Winning the game over many rounds is made classically hard by the fact that the players are not allowed to communicate with one another during each round of the game. 

This construction generalizes to an $N$-player game for all $N \geq 3$, with a classical winning probability $p_{\mathrm{cl}}^* \rightarrow 3/4^+$ as $N \rightarrow \infty$. While the precise details of the optimal classical strategy depend non-trivially on $N$~\cite{brassard2004recasting}, an asymptotically tight lower bound on $p_{\mathrm{cl}}^*$ is given by the strategy $b_j=0$ that always returns zero. Meanwhile the optimal quantum strategy (whereby player $j$ measures the Pauli matrix $\sigma_j^{\alpha_j(M_j)}$ in the state $|\psi_0\rangle$) succeeds with probability $p_{\mathrm{qu}}^* = 1$. This follows by construction: the questions that define the game correspond precisely to the stabilizers of the GHZ state.

Note that in querying \emph{all} the stabilizers of the GHZ state, our game differs from the standard formulation of the parity game~\cite{mermin1990extreme,brassard2004recasting}, according to which only stabilizers consisting of $X$ and $Y$ operators are queried.  In our three-qubit example, this restricts the set of allowed question-answer pairs to $\{(111, 0),\ (122, 1),\ (212, 1),\ (221, 1)\}$, and more generally, corresponds to querying only $2^{n-1}$ of the $2^n$ stabilizers of the $n$-qubit GHZ state. By contrast, the stabilizer-testing game for the GHZ state involves querying its full set of $2^n$ stabilizers, which is why the latter exhibits the asymptotic classical value $p_{\mathrm{cl}}^* \to 3/4^+$ as $n \to \infty$, as opposed to $p_{\mathrm{cl}}^* \to 1/2^+$ as $n \to \infty$ for the conventional parity game~\cite{brassard2004recasting}.
    
\subsubsection{Example 2: the toric code}
The stabilizer-testing game for the toric code~\cite{Kitaev_2003} is specified by the full set of local stabilizers for the toric code on a two-dimensional $d \times d$ lattice with periodic boundary conditions. The stabilizer-testing game can be extended to arbitrary genus and arbitrary surface-code terminations~\cite{bravyi1998quantumcodeslatticeboundary}, as we discuss in more detail in Section \ref{sec:codingbounds}. This stands in contrast to previous constructions of nonlocal games for toric code states~\cite{Bulchandani_2023,Hart_2025,hart2025braiding}, all of which involve querying specific subsets of the stabilizer group related to the toric code's topological order. The relative merits of each of these constructions depends on the context in which they are being applied. One appealing feature of the present construction is that it directly probes the fidelity to the toric code state via Eq. \eqref{eq:fidelityformula}. For the previous constructions, the analogous formula is either far more complicated or simply unknown. By the same token, a shortcoming of our construction is that the classical value $p_{\mathrm{cl}}^*$ is difficult to determine (though we derive upper bounds in Section \ref{sec:codingbounds} that should suffice for comparison with near-term experiments), in contrast to the previous approaches that embed the parity game in a subset of toric code stabilizers and therefore inherit the classical value of the parity game~\cite{bulchandani2023playing,Hart_2025,hart2025braiding}.

\begin{figure}[t]
        \centering
\includegraphics[width=0.75\linewidth]{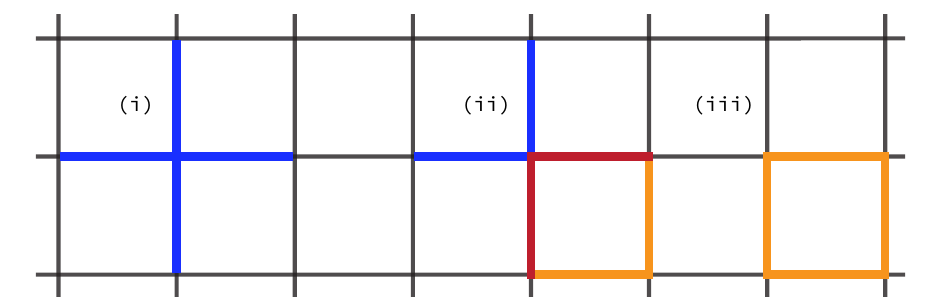}
        \caption{(i) A star operator $A_v$ of the toric code that acts on four adjacent qubits with Pauli $X$ operators (blue). (ii) The intersection of a star and plaquette operator that multiplies the Pauli representation Eq. \eqref{eq:Paulirep} by a global phase of $-1$ and generates a pair of Pauli $Y$ edges (red). (iii) A plaquette operator $B_p$ of the toric code that acts on four adjacent qubits with Pauli $Z$ operators (orange).}
        \label{fig:toric_code_stabilizers}
    \end{figure}

The generators of the stabilizer group for the toric code are~\cite{Kitaev_2003}
    \begin{equation}
         S_{\mathrm{toric}} = \big\langle\, 
        A_v = \!\!\prod_{i:\, v \in \partial i}\! X_i ,\;
        B_p = \!\!\prod_{j \in \partial p}\! Z_j
      \,\big\rangle,
    \end{equation}
    usually known as the star ($A_p$) and plaquette ($B_p$) operators, see Fig. \ref{fig:toric_code_stabilizers}. In general, stabilizers $M$ with a nonzero parity bit $c(M)$ can only be generated by intersections between star and plaquette operators (or more generally electric and magnetic Wilson lines), see Fig. \ref{fig:toric_code_stabilizers} for an illustration.
    

\subsubsection{Example 3: the cyclic cluster state}
Cluster states are defined by an underlying graph $H(V,E)$ where $V$ is the set of vertices and $E$ is the set of edges. Given one such graph, we can define the corresponding cluster state~\cite{BriegRauss,SchlingemannWerner}
    \begin{equation}
        \ket{\psi_H} = \Bigg( \prod_{(i,j) \in E} CZ_{i,j} \Bigg) \ket{+}^{\otimes |V|}
    \end{equation}
    The generating set of the stabilizer group of the graph state is defined by 
    \begin{equation}
        S_H=\Bigl\{\, X_i \prod_{j \in N(i)} Z_j \;\Big|\; i=1,\dots,n \,\Bigr\},
    \end{equation}
where $N(i)$ denotes the graph neighborhood of $i$. A cyclic cluster state is constructed from the cycle graph $C_n$, i.e. a ring of $n$ sites.

As an illustrative example, consider the stabilizer-testing game for the $3$-qubit cyclic cluster state, with stabilizer group
    \begin{equation}
        S = \langle XZZ,ZXZ,ZZX\rangle = \{III,XZZ,ZXZ,ZZX,IYY,YIY,YYI,-XXX\}.
    \end{equation}
The eight possible queries $M_j \in S$ yield question-answer pairs
$$
\{(\boldsymbol{\alpha}(M_j),c(M_j))\}_{j=1}^{8} =\{(000, 0),\ (133, 0),\ (313,0),\ (331, 0),\ (022, 0),\ (202, 0),\ (220, 0),\ (111,1)\}.
$$

This yields a nonlocal game with non-zero quantum advantage as in Section \ref{subsubsec:GHZexample}, with $p_{\mathrm{cl}}^* = 7/8$ and $p_{\mathrm{qu}}^* = 1$. While this game has been studied before~\cite{guhne2005bell,Cabello_2013,Daniel_2022}, the behaviour of its classical value as $n \to \infty$ was previously unknown. We provide asymptotically tight lower and upper bounds on $p_{\mathrm{cl}}^*$, together with exact values for $n \leq 16$ qubits, in Section \ref{sec:cluster} of this paper. For small numbers of qubits $n=3,4$ this classical value coincides with that of the ``triangle game'' that was previously proposed for cyclic cluster states~\cite{Bravyi_2018,Daniel_2021}. However, in contrast to the triangle game, which admits eight possible queries and has a constant classical value $p_{\mathrm{cl}}^* = 7/8$ for all even $n \geq 6$, the stabilizer-testing game for the cyclic cluster state is defined for all $n \geq 3$, admits $2^n$ possible queries and has a classical value $p_{\mathrm{cl}}^* \to 1/2^+$ as $n \to \infty$ (see Section \ref{sec:cluster}). 

The triangle game was used previously to probe $\mathbb{Z}_2\times \mathbb{Z}_2$-symmetry-protected topological (SPT) order~\cite{Daniel_2021}, but only diagnoses a subset of the SPT phase and does not uniquely specify the cyclic cluster state~\cite{bulchandani2023playing}. The stabilizer-testing game for the cyclic cluster state would be a natural starting point for remedying these shortcomings, although we will not be concerned with diagnosing phases of quantum matter in this paper.

\subsection{Linear algebraic formulation}

A convenient way to describe stabilizer-testing games is through the linear–algebra formalism introduced by Watts \emph{et al.}~\cite{XOR} for general XOR games. Stabilizer-testing games can be viewed as a special case of XOR games with each player’s input drawn from an alphabet of size~4, corresponding to the set $(\mathbbm{1},X,Y,Z)$ of possible Pauli operators at each site.

This allows stabilizer-testing games to be formulated as a constraint satisfaction problem, as follows. Labeling stabilizers \(M\in S\) as \(M_1,M_2,\ldots,M_{2^n}\), we define the \(2^n\times 4n\) \emph{incidence matrix} \(A\) by
\begin{equation}
\label{eq:defA}
A_{i,4(j-1)+k+1} =
\begin{cases}
    1, & \alpha_j(M_i) = k,\\
    0, & \text{otherwise},
\end{cases}
\end{equation}
where \(i=1,2,\ldots,2^n\) indexes stabilizers, \(j=1,2,\ldots,n\) labels players, and \(k=0,1,2,3\) enumerates Pauli operators.  
The associated \(2^n\)-component \emph{parity vector} is defined by
\begin{equation}
\label{eq:defC}
c_i = c(M_i),
\end{equation}
where $c$ is the sign of the stabilizer defined in Eq. \eqref{eq:Paulirep}. A deterministic classical strategy specifies output bits \(b_j\in\{0,1\}\) for every question \(\alpha_j\in\{0,1,2,3\}\) at site \(j\), which can equivalently be represented as a vector \(\mathbf{b}\in\{0,1\}^{4n}\).  
In terms of these quantities, the quantumness of a stabilizer-testing game can be expressed as the following purely linear-algebraic criterion:

\begin{defn}
A stabilizer-testing game instance \(G\) with incidence matrix \(A\) and parity vector \(\mathbf{c}\)
does not admit quantum advantage if there exists \(\mathbf{b}\in\{0,1\}^{4n}\) satisfying
\begin{equation}
\label{eq:classconstr}
A\mathbf{b}\equiv \mathbf{c}\pmod{2}.
\end{equation}
The instance \(G\) \emph{admits non-zero quantum advantage} if no such \(\mathbf{b}\) exists.
\end{defn}

A necessary and sufficient condition for a game \(G\) to admit non-zero quantum advantage is the existence of ``refutation'' of the system Eq. \eqref{eq:classconstr}, defined as follows~\cite{XOR}.
\begin{defn}
A \emph{refutation} of the system Eq. \eqref{eq:classconstr} is a vector $\mathbf{k} \in \mathbb{F}_2^{2^n}$ such that 
\begin{equation}
\label{eq:refutation}
\mathbf{k} \in \ker{A^\mathsf{T}} \quad \mathrm{and} \quad \mathbf{k}^{\mathsf{T}} \mathbf{c} \equiv 1 \pmod{2}.
\end{equation}
\end{defn}
This criterion is a direct consequence of the rank–nullity theorem~\cite{XOR} 
and Gaussian elimination over~$\mathbb{F}_2$ can be used to find such refutations. However,
the number of rows of $A$ grows exponentially as $n$ increases, so determining the existence of refutations in this way is not computationally efficient in our setting. In fact, for the specific case of stabilizer-testing games,
a simple sufficient condition for quantumness follows from the graph-state formalism: a vertex in the underlying graph of degree at least two implies the existence of a so-called ``all-versus-nothing triple'' of stabilizers, and thus implies quantum contextuality~\cite{Abramsky_2017}.

However, our goal in this paper is not to identify necessary and sufficient conditions for quantumness, but rather to characterize the classical value of \(G\). To this end, note that for any classical strategy \(\mathbf{b}\), we have
\begin{equation}
\label{eq:cl_prob}
p_{\mathrm{cl}}(\mathbf{b}) = 1 - \frac{d_{\mathrm{H}}(A\mathbf{b},\mathbf{c})}{2^N},
\end{equation}
where \(d_{\mathrm{H}}\) denotes the Hamming distance. The optimal classical value is therefore
\begin{equation}
\label{eq:cl_val}
p_{\mathrm{cl}}^{*}(G) = 1 - \frac{d_{\mathrm{H}}\!\bigl(\operatorname{im}(A),\mathbf{c}\bigr)}{2^n},
\end{equation}
with
\begin{equation}
d_{\mathrm{H}}\!\bigl(\operatorname{im}(A),\mathbf{c}\bigr)
:= \min_{\mathbf{b}\in\{0,1\}^{4n}} d_{\mathrm{H}}\!\bigl(A\mathbf{b},\mathbf{c}\bigr),
\end{equation}
where the image is taken over \(\mathbb{F}_2\). See Theorem 2 of Ref. \cite{trandafir2022irreducible} for an analogous coding-theory formula in the setting of quantum contextuality.

A lower bound on classical value $p_{\mathrm{cl}}^*$ 
can also be obtained straightforwardly. 
Since the trivial strategies $(0,\dots,0)$ and $(1,\dots,1)$ are contained in $\mathrm{im}(A)$, it follows that
\begin{equation} \label{eq:lower-bound}
    p_{\mathrm{cl}}^* \ge 
    \max \left\{ \frac{\mathrm{wt}(\mathbf{c})}{2^n},\, 1 - \frac{\mathrm{wt}(\mathbf{c})}{2^n} \right\}.
\end{equation}
Defining the \emph{bias} of $\mathbf{c}$ as 
$\mathrm{bias}(\mathbf{c}) \coloneqq \bigl|\frac{1}{2}-\frac{\mathrm{wt}(\mathbf{c})}{2^n}\bigr|$, Eq.~\eqref{eq:lower-bound} can be written compactly as 
$p_{\mathrm{cl}}^* \ge \tfrac{1}{2}+\mathrm{bias}(\mathbf{c})$, which indicates that a small bias in $\mathbf{c}$ is necessary for the game to exhibit a small classical value. 
In particular, we are interested in the property of ``robustness'' of scalable families of stabilizer-testing games, namely, the existence of a nonzero asymptotic difference between quantum and classical values in the limit of a large number of players $n \gg 1$. (An example of a game that fails to be robust in this sense is the polygon game introduced in Ref.~\cite{bulchandani2023playing}.)
Following the terminology of Ref.~\cite{XOR}, we define the \emph{asymptotic difference} for a scalable family of stabilizer-testing games \(\{G_n\}_{n\in\mathbb{N}}\), where \(G_n\) denotes the game instance with system size \(n\), as follows: 

\begin{defn}[Asymptotic difference]
\label{def:asymdiff}
\begin{equation}
\Delta = 2\!\left[1 - \lim_{n\to\infty} p_{\mathrm{cl}}^*(G_n)\right].
\end{equation}
\end{defn}
Equation~\eqref{eq:lower-bound} implies that $\Delta \le 1$ 
always. Achieving an ``asymptotically perfect difference''~\cite{XOR} $\Delta = 1$ means that the optimal classical strategy performs no better than random guessing as $n \to \infty$. Meanwhile an asymptotic difference $\Delta =0$ implies that the games $G_n$ cease to admit non-zero quantum advantage as $n \to \infty$.

We note that there are two standard~\cite{briet2013explicit,XOR} ways of measuring the advantage attainable by quantum players at such games, namely the ``bias difference'' $2(p_{\mathrm{qu}}^* - p_{\mathrm{cl}}^*)$, implicit in Definition~\ref{def:asymdiff}, and the ``bias ratio''~\cite{XOR} or quantum-classical gap~\cite{briet2013explicit} $(p_{\mathrm{qu}}^* - \tfrac{1}{2})/(p_{\mathrm{cl}}^* - \tfrac{1}{2})$. When the quantum value $p_{\mathrm{qu}}^*$ is replaced by an experimentally measured quantum winning probability $p_{\mathrm{qu}}(|\psi\rangle)$, e.g. as in Eq. \eqref{eq:fidelityformula}, which measure of quantum advantage is more practically useful depends on the fidelity to the ideal resource state that can be attained. For high fidelities corresponding to relatively small numbers of qubits or state-of-the-art quantum devices, the bias difference should yield a clearer measure of quantum advantage for nonlocal games~\cite{briet2013explicit,bulchandani2023playing,Hart_2025}. However, for realistic systems of many qubits, the bias difference will generally be exponentially small in $n$~\cite{bulchandani2023playing}, suggesting that the bias ratio is a more natural measure of such ``weak'' quantum advantage~\cite{lin2023quantumtasksassistedquantum}. Our primary goal in this paper is to obtain upper bounds on $p_{\mathrm{cl}}^*$ for various examples of interest, which can equivalently be formulated as lower bounds on both the bias difference and the bias ratio.


The 
bias ratio diverges as $n \to \infty$ for the parity game and for the stabilizer-testing game for the cyclic cluster state. This should be contrasted with two-player XOR games, for which the Tsirelson bound $(p_{\mathrm{qu}}^* - \tfrac{1}{2})/(p_{\mathrm{cl}}^* - \tfrac{1}{2}) \le K_G^{\mathbb{R}}$ holds, where $K_G^{\mathbb{R}} \lesssim 1.78$ denotes the real Grothendieck constant~\cite{tsirel1987quantum,briet2013explicit}.


\section{Coding-theory upper bounds on the classical value}
\label{sec:codingbounds}
In this paper, we present various complementary approaches for upper bounding the classical values of stabilizer-testing games, both of which exploit the Boolean functions (or, equivalently, the Reed–Muller codes) induced by stabilizer states. Previous studies have shown that the nonlinearity profile of Boolean functions can be used to estimate non-classical resources in certain computational models. For example, the magic (non-stabilizerness) of hypergraph states can be bounded by nonquadraticity (second-order nonlinearity) of their Boolean function representations~\cite{liu2022many}. In the setting of non-adaptive stabilizer measurement-based quantum computation (MBQC), it has similarly been established that the success probability of computing a given Boolean function is bounded above by its nonquadraticity~\cite{PhysRevA.88.022322,frembs2023hierarchies}. Here we establish another such connection between Boolean functions and quantumness of states. Specifically, we show how to upper bound the classical values of stabilizer-testing games from the nonlinearity and nonquadraticity of certain Boolean functions. We connect these results to the established theory of contextuality in MBQC~\cite{PhysRevA.88.022322,frembs2023hierarchies} in Section~\ref{sec:MBQC}.

In this section, we first show that stabilizer-testing games as in Definition~\ref{def:stabdef} that admit non-zero quantum advantage can be formulated as approximating a cubic polynomial (corresponding to quantum measurement outcomes) by a family of quadratic polynomials (corresponding to classical strategies). Since this approximation must always be imperfect, a unifying Reed-Muller upper bound on its classical value follows directly (see Theorem \ref{thm:unifying bound}). We then analyze the asymptotic behavior of these games as $n \to \infty$. We point out that the conventional parity game~\cite{brassard2004recasting} relies only on the nonlinearity of the GHZ parity function, which does not exhibit substantial nonquadraticity. In contrast, the induced parity functions of cyclic cluster states (derived in Section~\ref{sec:cluster}) exhibit nonquadraticity that is maximal in the large-system limit. In this sense, the stabilizers of the cyclic cluster state are ``more contextual'' than those for the GHZ state. Finally, we provide an upper bound on the classical value of the stabilizer-testing game defined by the toric code.

\subsection{General bound from Reed-Muller codes}
In the interests of generality, it will be helpful to extend our definition of the stabilizer-testing game, Definition \ref{def:stabdef}, to arbitrary stabilizer codespaces, as follows. Although our results could be extended to qudit systems, we will focus on qubits for the remainder of this paper, as the most standard and experimentally relevant setting.
\begin{defn}[$(\mathcal{C}_n,\mathcal{M})$ stabilizer testing game]\label{def:stabilizer game}
    Let $\mathcal{C}_n \subseteq (\mathbb{C}^2)^{\otimes n}$ be an $n$-qubit stabilizer code with stabilizer group
    \[
        S \;=\; \{\, M \in \mathcal{P}_n \;\mid\; M \ket{\psi} = \ket{\psi},\ \forall \ket{\psi} \in \mathcal{C}_n \,\}.
    \]
    Let $\mathcal{M} = \{M_1,M_2,\ldots,M_l\} \subseteq S$ be a fixed subset. In a $(\mathcal{C}_n,\mathcal{M})$ stabilizer-testing game, the referee chooses $M_i \in \mathcal{M}$ uniformly at random and hands each player $j=1,2,\ldots,n$ the question
    \[
        q_i^j \;=\; \pi\!\big(\mathbf{v}_{j,n+j}(M_i)\big),
    \]
    where $\mathbf{v} = (\mathbf{x}\,|\,\mathbf{z}) \in \mathbb{F}_2^{2n}$ is the symplectic representation of $M_i$, $\mathbf{v}_{j,n+j} := (v_j, v_{n+j})$ denotes the pair of bits corresponding to the $j$-th and $(n+j)$-th positions of $\mathbf{v}$, and $\pi:\mathbb{F}_2^2 \to \{0,1,2,3\}$ is the standard Pauli encoding. Allowing no communication between players, each player~$j$ returns a bit $b_j \in \{0,1\}$ to the referee. The players collectively win the game if
        
        \begin{equation}
            \sum_{j=1}^n b_j \equiv c(M_i) \pmod{2},
        \end{equation} 
 where $c(M_i) \in \{0,1\}$ denotes the parity bit of the stabilizer $M_i$ defined in Eq. \eqref{eq:Paulirep}.
\end{defn}

A stabilizer code $\mathcal{C}_n$ can be specified by a maximal set $\mathcal{G} = \{g_1, g_2, \dots, g_r\}$ of $r\leq n$ independent elements in $S$, such that $S = \langle \mathcal{G} \rangle$, i.e., any element $M \in S$ can be written as
\begin{equation}\label{eq:stab_op}
    M = g_1^{x_1} g_2^{x_2} \cdots g_r^{x_r},
\end{equation}
where each $x_i \in \{0,1\}$. For a quantum code specified by such $r$ generators, there are $2^{2^r}$  subsets $\mathcal{M}$ of $S$, each giving rise to a distinct stabilizer-testing game. In particular, when $\mathcal{M} = S$, the vectors in the column span of the incidence matrix $A$, together with the parity vector $\mathbf{c}$, defined by analogy with Eqs. \eqref{eq:defA} and \eqref{eq:defC} respectively,
are binary vectors of length $2^r$ and can therefore be viewed as classical codewords in $\mathbb{F}_2^{2^r}$, and Eq. \eqref{eq:cl_val} is replaced by
\begin{equation}
\label{eq:cl_val_r}
p_{\mathrm{cl}}^{*}(G) = 1 - \frac{d_{\mathrm{H}}\!\bigl(\operatorname{im}(A),\mathbf{c}\bigr)}{2^r}.
\end{equation}
in this more general setting.

Specifically, the $(\mathcal{C}_n,S)$ stabilizer-testing game has two properties that make it amenable to analysis. First, once a generating set $\mathcal{G}$ of $\mathcal{C}_n$ is fixed, the vectors in the column span of $A$ and the parity vector $\mathbf{c}$ can be represented as Boolean functions of the $r$ binary variables $\mathbf{x}=(x_1,\dots,x_r)$. Thus, $\operatorname{im}(A)$ and $\mathbf{c}$ can be embedded into the Reed--Muller code $\mathrm{RM}(s,r)$, defined as the set of all binary vectors $\mathbf{f}$ obtained by evaluating Boolean functions 
$f(x_1,\dots,x_r)$ that are polynomials of degree at most $s$ over $\mathbb{F}_2$. We will refer to the Boolean function $c(\mathbf{x})$ defined by $\mathbf{c}$ in this way as the \emph{parity function} associated with a given generating set. Second, for stabilizer-testing games, these Boolean functions 
can be chosen to be low-degree functions. The minimum Hamming distance of the order $s$ Reed--Muller code $RM(s,r)$ is known to be 
\[
d_{\min}(RM(s,r)) = 2^{\,r-s}.
\]
By Eq.~\eqref{eq:cl_val_r}, it follows that $p_{\mathrm{cl}}^* \leq 1-2^{-s}$. If we knew solely that the game admitted non-zero quantum advantage, this would yield the trivial bound $p_{\mathrm{cl}}^* \leq 1 - 2^{-r}$, which would not be useful for comparison with experiments in many-body regimes $r \gg 1$. However, for stabilizer-testing games that query the full stabilizer group, we can always constrain $s \leq 3$, giving rise to a universal and experimentally accessible bound $p_{\mathrm{cl}}^* \leq 7/8$ for any $(\mathcal{C}_n,S)$ stabilizer-testing game. The formal statement is as follows.

\begin{theorem}\label{thm:unifying bound}
    If a $(\mathcal{C}_n,S)$ stabilizer-testing game admits non-zero quantum advantage, then its classical value is at most \(7/8\). 
\end{theorem}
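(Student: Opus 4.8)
\emph{Proof proposal.} The plan is to cash in the Reed--Muller picture set up above. Fix a generating set $\mathcal{G}=\{g_1,\dots,g_r\}$ with $S=\langle\mathcal{G}\rangle$ and parametrize the stabilizers by $\mathbf{x}\in\mathbb{F}_2^r$ via $M(\mathbf{x})=g_1^{x_1}g_2^{x_2}\cdots g_r^{x_r}$. Since $\mathcal{M}=S$, the $2^r$ rows of the incidence matrix $A$ are indexed by $\mathbb{F}_2^r$, so every column of $A$ and the parity vector $\mathbf{c}$ become evaluation tables of Boolean functions on $r$ bits. The theorem reduces to two claims: (i) $\operatorname{im}(A)\subseteq\mathrm{RM}(2,r)$, and (ii) the parity function $c(\mathbf{x})$ has degree at most $3$, i.e.\ $\mathbf{c}\in\mathrm{RM}(3,r)$. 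Granting these, ``non-zero quantum advantage'' means $\mathbf{c}\notin\operatorname{im}(A)$, so every vector in the coset $\mathbf{c}+\operatorname{im}(A)\subseteq\mathrm{RM}(3,r)$ is a \emph{nonzero} codeword of $\mathrm{RM}(3,r)$ and hence has Hamming weight at least $d_{\min}(\mathrm{RM}(3,r))=2^{r-3}$. Therefore $d_{\mathrm{H}}(\operatorname{im}(A),\mathbf{c})=\min_{\mathbf{b}}\mathrm{wt}(A\mathbf{b}\oplus\mathbf{c})\ge 2^{r-3}$, and Eq.~\eqref{eq:cl_val_r} gives $p_{\mathrm{cl}}^*(G)=1-d_{\mathrm{H}}(\operatorname{im}(A),\mathbf{c})/2^r\le 1-2^{-3}=7/8$ (for $r<3$ one uses $\mathrm{RM}(r,r)$ with $d_{\min}=1$, which only strengthens the bound).

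Claim (i) is straightforward. Writing $M(\mathbf{x})$ in the symplectic representation $(\mathbf{A}(\mathbf{x})\mid\mathbf{B}(\mathbf{x}))\in\mathbb{F}_2^{2n}$, each coordinate $A_j(\mathbf{x}),B_j(\mathbf{x})$ is an $\mathbb{F}_2$-linear function of $\mathbf{x}$, since the symplectic vector of a Pauli product is the $\mathbb{F}_2$-sum of the constituent vectors. The question given to player $j$ is a fixed function of $(A_j(\mathbf{x}),B_j(\mathbf{x}))$, so the $(j,k)$ column of $A$ — the indicator that player $j$ is asked Pauli $k$ — is a product of two affine Boolean functions of $\mathbf{x}$, hence quadratic; taking $\mathbb{F}_2$-linear combinations preserves this, so $\operatorname{im}(A)\subseteq\mathrm{RM}(2,r)\subseteq\mathrm{RM}(3,r)$.

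Claim (ii) is the real content. I would compute the sign $(-1)^{c(\mathbf{x})}$ in the canonical form Eq.~\eqref{eq:Paulirep} by multiplying out $g_1^{x_1}\cdots g_r^{x_r}$ and rearranging the result into the factored form $\prod_j\sigma_j^{\alpha_j(\mathbf{x})}$, tracking three sources of phase: (a) the product of generator signs, contributing the \emph{linear} term $\sum_i c(g_i)\,x_i$; (b) the signs from commuting $Z$-letters of $g_i$ past $X$-letters of $g_{i'}$ for $i<i'$, contributing the \emph{quadratic} term $\sum_{i<i'}(\mathbf{b}^{(i)}\!\cdot\mathbf{a}^{(i')})\,x_ix_{i'}$, where $\mathbf{a}^{(i)},\mathbf{b}^{(i)}\in\mathbb{F}_2^n$ are the $X$- and $Z$-parts of $g_i$; and (c) the powers of $i$ needed to restore each single-qubit factor to the Hermitian convention $Y=iXZ$, which — after reduction modulo $2$ using $i^2=-1$ and the integer identity $u\oplus v=u+v-2uv$ applied to the linear forms $A_j,B_j$ — contributes a term of degree at most $3$, assembled from products $A_j(\mathbf{x})B_j(\mathbf{x})$ times a linear factor inherited from the reordering in (b), summed over sites $j$. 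Hermiticity of every stabilizer $M(\mathbf{x})\in S$ guarantees that the residual $i$-powers cancel, so $c(\mathbf{x})\in\{0,1\}$ throughout. (Equivalently, one could verify that the second difference $c(\mathbf{x}\oplus\mathbf{y})\oplus c(\mathbf{x})\oplus c(\mathbf{y})\oplus c(\mathbf{0})$ is, for each fixed $\mathbf{y}$, quadratic in $\mathbf{x}$, forcing $\deg c\le 3$.) The $n=3$ GHZ and cyclic-cluster examples of Section~\ref{sec:stabtest} are good sanity checks: one finds $c(\mathbf{x})=x_1x_2+x_1x_3+x_1x_2x_3$ and $c(\mathbf{x})=x_1x_2x_3$ respectively, both genuinely cubic.

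The main obstacle is the bookkeeping behind contribution (c): it is the sole source of the cubic term, and tracking how the mod-$2$ reduction of $\mathbf{A}(\mathbf{x})\cdot\mathbf{B}(\mathbf{x})$ (with $\mathbf{A},\mathbf{B}$ themselves mod-$2$ reductions of $\mathbb{F}_2$-linear forms) combines with the reordering signs of (b) is exactly where a missing term would spuriously make $c$ look quadratic. Everything else — claim (i), the Reed--Muller distance $d_{\min}(\mathrm{RM}(s,r))=2^{r-s}$, and the concluding arithmetic — is routine.
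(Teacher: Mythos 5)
Your overall skeleton is exactly the paper's: embed $\operatorname{im}(A)$ in $\mathrm{RM}(2,r)$ and $\mathbf{c}$ in $\mathrm{RM}(3,r)$, use $\mathbf{c}\notin\operatorname{im}(A)$ to place $\mathbf{c}\oplus\operatorname{im}(A)$ among the nonzero codewords of $\mathrm{RM}(3,r)$, and invoke $d_{\min}(\mathrm{RM}(3,r))=2^{r-3}$ together with Eq.~\eqref{eq:cl_val_r}. Claim (i) and the concluding arithmetic are fine, and your sanity checks for the $n=3$ GHZ and cyclic-cluster parity functions are correct.

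The gap is claim (ii), which you correctly identify as ``the real content'' and then leave as a plan rather than a proof. In the paper this is Proposition~\ref{prop:cubic}, and closing it takes real work: starting from the phase-composition formula of Lemma~\ref{lem:weight enu}, the parity function acquires the term $\tfrac12\bigl(\sum_i x_i W_Y(\mathbf{v}_i)-W_Y(\mathbf{v}_{\mathbf{x}})\bigr)$, whose expansion produces \emph{half-integer} coefficients on degree-$\le 2$ monomials (e.g.\ $\tfrac12 x_k x_l$ from cross terms, and $\tfrac12 x_k$ whenever a generator itself carries a $Y$). The paper tames this by first Gaussian-eliminating to a generating set with at most two distinct non-identity Paulis per site (Lemma~\ref{lem: Gaussian elimination}) and rotating those to $X$ and $Z$ by local Cliffords, after which $W_Y(\mathbf{v}_{\mathbf{x}})=\sum_j(\bigoplus_{i\in X_j}x_i)(\bigoplus_{i\in Z_j}x_i)$ is manifestly quadratic and the residual $\tfrac12 x_kx_l$ pieces are shown to pair with $\tfrac12(\mathbf{v}_k^{\mathsf T}B\mathbf{v}_l+\mathbf{v}_l^{\mathsf T}B\mathbf{v}_k)$, which is an integer precisely because the generators commute. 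Your route avoids the normalization but then must argue two things you only gesture at: (a) that the fractional coefficients cancel monomial-by-monomial (integrality of the multilinear expansion of an integer-valued function, not merely ``the $i$-powers cancel''), and (b) that every multilinear coefficient in degree $\ge 4$ is \emph{even}, since $\tfrac12(-2)(-2)=2$ and the higher-order corrections to the XOR expansions carry coefficients divisible by $4$. Point (b) is the one that actually protects the theorem: a spuriously quadratic $c$ would only strengthen the bound, whereas an uncontrolled degree-$4$ term would destroy it, so the direction you flag as the risk is the harmless one. As written, the proposal is a correct outline of the paper's argument with its central lemma asserted rather than established.
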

\begin{proof}
    Let $(\mathcal{C}_n,S)$ be a stabilizer-testing game with incidence matrix $A$ and parity vector $\mathbf{c}$, viewed as Boolean functions of $\mathbf{x}$. Suppose that $\operatorname{im}(A)$ and $\mathbf{c}$ are spanned by polynomials of degree at most $s$. Then the vector space $\mathbf{c} \oplus \operatorname{im}(A)$ is a subspace of $RM(s,r)$. Because the game admits non-zero quantum advantage, $\mathbf{c} \notin \mathrm{im}(A)$ and therefore $d_{\mathrm{H}}(A\mathbf{b},\mathbf{c} )\geq 2^{r-s}$ for all classical strategies $\mathbf{b}$. By Proposition \ref{prop:cubic}, we can always choose a generating set of $S$ such that $s \leq 3$ and the result follows by Eq. \eqref{eq:cl_val_r}.
\end{proof}

We will see that the degree of the column-space vectors of $A$ depends on the local Hilbert space dimension, $\dim(\mathcal{H})$, which in our case equals~2. Moreover, the degree of the parity vector $\mathbf{c}$ is bounded, up to a constant factor, by that of the column-space vectors of $A$. 
To see this, we analyze how the signs of Pauli operators transform, borrowing a useful formula from the derivation of the quadratically signed weight enumerator~\cite{knill2001quantum}.

\begin{lem}\emph{\cite{knill2001quantum}}
\label{lem:weight enu}
    For any $\mathbf{v}_1,\mathbf{v}_2 \in \{0,1\}^{2n}$, let $\sigma_{\mathbf{v}_1}, \sigma_{\mathbf{v}_2}$ be the corresponding Pauli operators in symplectic representation. Then
    \[
        \sigma_{\mathbf{v}_1+\mathbf{v}_2} = e^{i\phi}\,\sigma_{\mathbf{v}_1}\sigma_{\mathbf{v}_2},
    \]
    where the phase satisfies
    \begin{equation} \label{eq:sign}
        \frac{\phi}{\pi} \;=\; \tfrac{1}{2}\Big(W_Y(\mathbf{v}_1)+W_Y(\mathbf{v}_2)-W_Y(\mathbf{v}_1+\mathbf{v}_2)\Big) \;+\; \mathbf{v}_1^{\mathsf T} B \mathbf{v}_2,
    \end{equation}
    with $W_Y(\mathbf{v})$ denoting the number of $Y$ operators in $\sigma_\mathbf{v}$, and
    \[
    B \;=\;
    \begin{bmatrix}
        0 & I_n \\
        0 & 0
    \end{bmatrix}
    \]
    the $2n\times 2n$ block matrix with $I_n$ the $n\times n$ identity. 
\end{lem}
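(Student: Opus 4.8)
The plan is to reduce the identity to a single-tensor-factor computation and then multiply over the $n$ qubits. The key observation is that both sides of $\sigma_{\mathbf{v}_1+\mathbf{v}_2} = e^{i\phi}\sigma_{\mathbf{v}_1}\sigma_{\mathbf{v}_2}$ factorize across the qubits and that every phase appearing along the way is a scalar, so the global phase $\phi$ is simply the sum of the single-qubit phases. I would fix the Hermitian Pauli convention $\sigma_{(x,z)} = i^{xz}X^x Z^z$ (so that $\sigma_{(1,1)} = Y$), and record at the outset that $W_Y(\mathbf{v}) = \sum_{j} x_j z_j$ counts exactly the tensor factors on which $\sigma_{\mathbf{v}}$ acts as $Y$, while $\mathbf{v}_1^{\mathsf T} B \mathbf{v}_2 = \mathbf{x}_1^{\mathsf T}\mathbf{z}_2 = \sum_j x_{1,j} z_{2,j}$ for the given block form of $B$.

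For a single qubit, I would expand $\sigma_{(x_1,z_1)}\sigma_{(x_2,z_2)} = i^{x_1 z_1 + x_2 z_2}\, X^{x_1} Z^{z_1} X^{x_2} Z^{z_2}$, commute $Z^{z_1}$ past $X^{x_2}$ using $ZX = -XZ$ to pick up $(-1)^{z_1 x_2}$, collapse $X^{x_1}X^{x_2} = X^{x_1 \oplus x_2}$ and $Z^{z_1}Z^{z_2} = Z^{z_1\oplus z_2}$ via $X^2 = Z^2 = \mathbbm{1}$, and re-express the result in terms of $\sigma_{(x_1\oplus x_2,\,z_1\oplus z_2)}$. This yields
\[
\sigma_{(x_1\oplus x_2,\,z_1\oplus z_2)} = i^{\,a}(-1)^{z_1 x_2}\,\sigma_{(x_1,z_1)}\sigma_{(x_2,z_2)}, \qquad a := (x_1\oplus x_2)(z_1\oplus z_2) - x_1 z_1 - x_2 z_2.
\]
It then remains to check that the prefactor $i^{a}(-1)^{z_1 x_2}$ equals $e^{i\pi[-a/2 + x_1 z_2]}$, which is precisely the $n=1$ form of Eq.~\eqref{eq:sign}, i.e.\ that $i^{2a} = (-1)^{x_1 z_2 + x_2 z_1}$. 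This follows from $i^{2a} = (-1)^a$ together with the elementary fact that $(x_1\oplus x_2)(z_1\oplus z_2) \equiv x_1 z_1 + x_2 z_2 + x_1 z_2 + x_2 z_1 \pmod 2$, so that $a \equiv x_1 z_2 + x_2 z_1 \pmod 2$. The single-qubit phase is thus $\phi_j/\pi = \tfrac12\bigl(x_{1,j}z_{1,j} + x_{2,j}z_{2,j} - (x_{1,j}\oplus x_{2,j})(z_{1,j}\oplus z_{2,j})\bigr) + x_{1,j}z_{2,j}$.

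Finally, taking the tensor product over $j = 1,\dots,n$ gives $\sigma_{\mathbf{v}_1}\sigma_{\mathbf{v}_2} = e^{-i\sum_j \phi_j}\,\sigma_{\mathbf{v}_1+\mathbf{v}_2}$, hence $\phi = \sum_j \phi_j$; substituting $\sum_j x_{i,j} z_{i,j} = W_Y(\mathbf{v}_i)$, $\sum_j (x_{1,j}\oplus x_{2,j})(z_{1,j}\oplus z_{2,j}) = W_Y(\mathbf{v}_1 + \mathbf{v}_2)$ and $\sum_j x_{1,j} z_{2,j} = \mathbf{v}_1^{\mathsf T} B \mathbf{v}_2$ then reproduces Eq.~\eqref{eq:sign}. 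The only genuinely delicate point is the phase bookkeeping in the single-qubit step: a naive tally of the powers of $i$ and of $-1$ appears to be off by a sign, and the congruence $i^{2a} = (-1)^{x_1 z_2 + x_2 z_1}$ is exactly what repairs it; everything else is a routine, convention-dependent calculation. Since this identity is due to Knill~\cite{knill2001quantum}, I would present the single-qubit computation for completeness and otherwise defer to that reference.
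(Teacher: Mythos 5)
Your proof is correct and is essentially the paper's argument made explicit: the paper simply substitutes $Y\to iY$ (i.e.\ factors out $i^{W_Y(\mathbf{v})}$, which is exactly your per-qubit $i^{x_jz_j}$ bookkeeping) and asserts that the resulting $X$--$Z$ normal-ordered operators multiply up to the sign $(-1)^{\mathbf{v}_1^{\mathsf T}B\mathbf{v}_2}$, which is precisely the $ZX=-XZ$ commutation you carry out by hand. Your single-qubit computation, including the congruence $a\equiv x_1z_2+x_2z_1\pmod 2$ that reconciles the powers of $i$, is the content of the paper's ``one can verify directly,'' so the two proofs coincide.
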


\begin{proof}
   First replace Pauli $Y$ operators with $iY$. Let $\tilde{\sigma}_\mathbf{v}$ denote the Pauli operators after this substitution, $\tilde{\sigma}_\mathbf{v} = i^{W_Y(\mathbf{v})}\sigma_\mathbf{v}$. 
   Then one can verify directly that $\tilde{\sigma}_{\mathbf{v}_1+\mathbf{v}_2}=(-1)^{\mathbf{v}_1^T B \mathbf{v}_2}\tilde{\sigma}
_{\mathbf{v}_1}\tilde{\sigma}_{\mathbf{v}_2}$.
\end{proof}

Using this relation, we can prove that the degree of the parity function for $(\mathcal{C}_n,S)$ 
stabilizer-testing games can always be chosen to be at most three, as follows. 
\begin{prop}
\label{prop:cubic}
For every $(\mathcal{C}_n,S)$ stabilizer testing game, there exists a generating set $\mathcal{G}$ of $S$ such that the game admits non-zero quantum advantage if and only if its parity function is a cubic polynomial.
\end{prop}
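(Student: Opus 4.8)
The plan is to exhibit an explicit generating set of $S$ whose induced parity function $c(\mathbf{x})$ is at most cubic in $\mathbf{x}$, and then to observe that ``admits non-zero quantum advantage'' is equivalent to ``$c(\mathbf{x})$ is not a quadratic polynomial,'' hence (given the cubic bound) to ``$c(\mathbf{x})$ is exactly cubic.'' First I would recall that once generators $g_1,\ldots,g_r$ are fixed, every stabilizer is $M(\mathbf{x}) = g_1^{x_1}\cdots g_r^{x_r}$, and by iterating Lemma~\ref{lem:weight enu} the overall sign of $M(\mathbf{x})$ decomposes into two contributions: a term built from the symplectic bilinear form $\mathbf{v}_1^{\mathsf T} B \mathbf{v}_2$ applied to the partial products, and a term involving $W_Y$ of partial products. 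Since each $\mathbf{v}(M(\mathbf{x})) = \sum_i x_i \mathbf{v}(g_i)$ is \emph{linear} in $\mathbf{x}$ over $\mathbb{F}_2$, the bilinear-form contribution is manifestly a quadratic polynomial in $\mathbf{x}$. The only potential source of higher degree is the telescoping sum of $W_Y$ terms; I would show this collapses to an expression that is at most cubic — the $Y$-count of $\sigma_{\mathbf{v}}$ is $\sum_j x_j z_j$ in the symplectic coordinates, which is quadratic in the entries of $\mathbf{v}$ and hence, after substituting the linear dependence of $\mathbf{v}$ on $\mathbf{x}$, at most quadratic in $\mathbf{x}$; combined with the way the telescoping cross-terms pair these with linear data, one picks up at most one extra degree, giving the cubic bound. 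This is exactly Proposition~\ref{prop:cubic}'s degree claim, and I would present it as a direct computation using Lemma~\ref{lem:weight enu}.

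Given the cubic bound, the equivalence is then a short coding-theory argument. If $c(\mathbf{x})$ happens to be a quadratic (degree $\leq 2$) polynomial, then $\mathbf{c} \in \mathrm{RM}(2,r)$; but the columns of $A$ span all ``indicator-type'' Boolean functions of the individual question bits, and I would verify that $\mathrm{im}(A)$ already contains $\mathrm{RM}(2,r)$ restricted to the relevant variables — more precisely, that any quadratic parity function can be matched exactly by a suitable deterministic strategy $\mathbf{b}$, because each player's contribution $b_j(\alpha_j)$ can realize any function of the pair of symplectic bits at site $j$, and quadratic cross-terms between distinct sites are absent from $c$ once one restricts to a single-site decomposition (this is where the structure of the incidence matrix, encoding $q_i^j = \pi(\mathbf{v}_{j,n+j})$, does the work). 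Hence $\mathbf{c} \in \mathrm{im}(A)$ and there is no quantum advantage. Conversely, if the game admits non-zero quantum advantage then $\mathbf{c} \notin \mathrm{im}(A) \supseteq \{$quadratics realizable site-by-site$\}$, so $c(\mathbf{x})$ cannot be reduced to a quadratic; combined with the cubic upper bound this forces $c(\mathbf{x})$ to be genuinely cubic.

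The main obstacle I anticipate is the careful bookkeeping in the first step: the telescoping application of Lemma~\ref{lem:weight enu} produces a nested sum of $W_Y$ terms of partial products $\sigma_{\mathbf{v}(g_1)+\cdots+\mathbf{v}(g_k)}$, and one must show that after expanding $W_Y$ in symplectic coordinates and using linearity of $\mathbf{v}(\cdot)$ in $\mathbf{x}$, the total degree does not exceed three — naively each $W_Y$ contributes quadratically and there are linearly many of them, so one needs the cancellations in the alternating sum $W_Y(\mathbf{v}_1)+W_Y(\mathbf{v}_2)-W_Y(\mathbf{v}_1+\mathbf{v}_2) = 2\sum_j (x$-bit$)_j(z$-bit$)_j$ cross-term to keep each step quadratic, and then to argue that the accumulation over $k$ adds at most one more degree. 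A secondary subtlety is making the converse direction airtight: one must pick the generating set \emph{before} asking about quantum advantage, and verify that the ``if'' direction of the biconditional does not secretly depend on a lucky choice of generators — i.e. that \emph{for this particular} $\mathcal{G}$, quadraticity of $c$ is genuinely equivalent to $\mathbf{c} \in \mathrm{im}(A)$, not merely implied by it. I would handle this by making the realizability claim for $\mathrm{im}(A)$ explicit and generator-independent in form.
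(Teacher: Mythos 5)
Your overall architecture matches the paper's: iterate Lemma~\ref{lem:weight enu} to write the parity function as a symplectic bilinear form plus a telescoping $\tfrac{1}{2}W_Y$ term, bound the degree at three, and reduce the biconditional to whether $\mathbf{c}$ lies in $\mathrm{im}(A)\subseteq\mathrm{RM}(2,r)$. The ``if'' direction (cubic $\Rightarrow$ advantage) is fine, since every column of $A$ is a product of two affine forms in $\mathbf{x}$ and hence at most quadratic. But your converse has a genuine gap. The claim that ``$\mathrm{im}(A)$ already contains $\mathrm{RM}(2,r)$'' / ``any quadratic parity function can be matched exactly'' is false on dimension grounds: each player can realize an arbitrary Boolean function of their own two symplectic bits and nothing more, so the quadratic part of $\mathrm{im}(A)$ is spanned by only the $n$ site-local products $a_{y_j}=\bigl(\bigoplus_{i\in X_j}x_i\bigr)\bigl(\bigoplus_{i\in Z_j}x_i\bigr)$, whereas $\mathrm{RM}(2,r)$ has dimension $\Theta(r^2)$. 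What must actually be shown is that \emph{when $c$ fails to be cubic, its quadratic part happens to lie in this small span} --- and this is precisely where the existential quantifier over $\mathcal{G}$ earns its keep. The paper first normalizes the generating set (Lemma~\ref{lem: Gaussian elimination}: Gaussian elimination so that each site carries at most two distinct non-identity Paulis, then local Cliffords rotating them to $X,Z$), forcing the two site-local linear forms to have disjoint variable supports; only then does the $\tfrac{1}{2}W_Y$ expansion show that non-cubicity means $|X_j|,|Z_j|\le 1$ at every site, so each $a_{y_j}$ is a single monomial $x_kx_l$ and the quadratic coefficients of $c$ (which are $\tfrac{1}{2}(\mathbf{v}_k^{\mathsf T}B\mathbf{v}_l-\mathbf{v}_l^{\mathsf T}B\mathbf{v}_k)$, integers only because the generators commute) are reproduced exactly by these monomials. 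Your stated plan to make the realizability claim ``generator-independent in form'' therefore goes in the wrong direction: the claim is inherently generator-dependent, and without the normalization the direction (non-cubic $\Rightarrow$ no advantage) can fail.

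A secondary, smaller imprecision: $W_Y(\mathbf{v}_{\mathbf{x}})$ is an integer count, not a mod-$2$ quantity. The cubic terms arise from lifting each site-local product to $\mathbb{Z}$ via $\bigoplus_i x_i = \sum_i x_i - 2\sum_{k<l}x_kx_l+\cdots$, multiplying out, dividing by $2$, and reducing mod $2$; describing $W_Y$ as ``at most quadratic in $\mathbf{x}$'' with one extra degree from ``pairing with linear data'' gestures at the right phenomenon but is not the actual mechanism, and you would need to carry out this expansion explicitly both to locate the cubic terms and to see why nothing of degree four survives.
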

\begin{proof}
By induction on Lemma \ref{lem:weight enu}, we have $\sigma_{\mathbf{u}_1+\dots+\mathbf{u}_r} = e^{i\phi}\sigma_{\mathbf{u}_1}\dots \sigma_{\mathbf{u}_r}$, where
\begin{align}
    \frac{\phi}{\pi} 
    &= \tfrac{1}{2} \left( \sum_{i=1}^r W_Y(\mathbf{u}_i) - W_Y\!\left(\oplus_{i=1}^r \mathbf{u}_i\right) \right)
       + \sum_{1 \leq i < j \leq r} \mathbf{u}_i^{\mathsf T} B \mathbf{u}_j.
\end{align}
    Given a generating set $\mathcal{G} = \{g_1, g_2, \dots, g_r\}$ for 
    $\mathcal{C}_n$, now let $\mathbf{v}_i \in \mathbb{F}_2^{2n}$ be the symplectic binary vector for generator $g_i$ and $s_i\in \mathbb{F}_2$ be the parity of $g_i$. Then, letting $\mathbf{u}_i = x_i\mathbf{v}_i$ in the above expression, the parity function can be written as the following polynomial in $\mathbf{x} = (x_1, \dots, x_r)$:
\begin{equation}\label{eq:parity function}
    c(\mathbf{x}) \;\equiv\; 
    \tfrac{1}{2} \left( \sum_{i=1}^r x_i W_Y(\mathbf{v}_i) 
    - W_Y(\mathbf{v}_{\mathbf{x}}) \right)
    + \sum_{1 \leq i < j \leq r} \mathbf{v}_i^{\mathsf T} B \mathbf{v}_j \, x_i x_j
    + \sum_{i=1}^r s_i x_i 
    \pmod{2},
\end{equation}
      where $\mathbf{v}_{\mathbf{x}}=\oplus_{i=1}^r x_i\mathbf{v}_i$ is the symplectic representation of the operator $M(\mathbf{x}) = \prod_{i=1}^{r} g_i^{x_i}$.


Next we show that the first term in Eq.~\eqref{eq:parity function} is of degree at most 3. 
Since the generating set $\mathcal{G}$ for a given stabilizer group $S$ is not unique, we are free to choose a generating set that contains only two types of non-identity Pauli matrices at each site. 
By Lemma \ref{lem: Gaussian elimination} of Appendix \ref{app:proofs}, such a $\mathcal{G}$ always exists. 

Our strategy will be to use invariance of the classical value under local Clifford operations to express the resulting incidence matrix in a tractable form. In the notation of Eq. \eqref{eq:Paulirep}, any state $|\psi\rangle \in \mathcal{C}_n$ satisfies the relations
\begin{equation}
\prod_{j=1}^n \sigma_j^{\alpha_j(M(\mathbf{x}))} |\psi\rangle = (-1)^{c(\mathbf{x})} | \psi\rangle, \quad \mathbf{x} \in \{0,1\}^r.
\end{equation}
Acting on both sides of this equation by a product of local Clifford operations $U = \prod_{i=1}^n U_i$ yields a different stabilizer state $|\psi'\rangle = U|\psi\rangle$, for which 
\begin{equation}
\prod_{j=1}^n \left(U_j\sigma_j^{\alpha_j(M(\mathbf{x}))} U_j^\dagger\right)|\psi'\rangle = (-1)^{c(\mathbf{x})} | \psi'\rangle, \quad \mathbf{x} \in \{0,1\}^r.
\end{equation}
Whenever conjugation by $U_j$ yields a simple permutation of the Pauli matrices, this induces a new $(U\mathcal{C}_n,USU^\dagger)$ stabilizer testing game, with the same parity function $c'(\mathbf{x})=c(\mathbf{x})$ as the $(\mathcal{C}_n,S)$ stabilizer testing game and a new incidence matrix $A'$ that is column equivalent to $A$. We therefore restrict each $U_j$ to lie in the subgroup  of Clifford operations $\langle R_{2\pi/3}\rangle \cong C_3$ corresponding to cyclic permutations of Pauli operators
$R_{2\pi/3} = (1-iX-iY-iZ)/2$~\cite{Grier_2022}. By our previous choice of $\mathcal{G}$, there exist local Cliffords $U_j \in \langle R_{2\pi/3}\rangle$ such that all rows of $A'$ corresponding to the generating set $\mathcal{G}' = U\mathcal{G}U^\dagger$ are only supported on Pauli $Z$ and $X$ operators.  

After performing this redefinition of the stabilizer-testing game, we drop the prime and work with the redefined generators and incidence matrix. Writing the latter as \(A=[\mathbf{a}_1,\dots,\mathbf{a}_{4n}]\), let $\mathcal{X}, \mathcal{Y}, \mathcal{Z}, \mathcal{I} \subseteq [4n]$ 
denote the disjoint index sets of columns in~$A$ corresponding to Pauli $X, Y, Z,$ and~$I$, respectively, so that $\mathcal{X} \,\dot\cup\, \mathcal{Y}\,\dot\cup\, \mathcal{Z} \,\dot\cup\, \mathcal{I} = [4n]$. For each index set \( P \in \{\mathcal{X}, \mathcal{Y}, \mathcal{Z}, \mathcal{I}\} \) corresponding to Pauli \( \mathcal{P} \in \{X, Y, Z, I\} \), let \( P = \{p_1, \dots, p_n\} \) (where $p$ stands for $x$, $y$, $z$, or $\iota$, respectively) and denote by $\mathbf{a}_{p_k}$ the $k$th column of $A$ corresponding to the Pauli $\mathcal{P}$. For each site $j \in [n]$, let $\mathcal{P}_j \subseteq [r]$ be the set of generator indices $i$ for which the $j$th site of generator $g_i$ acts as Pauli~$\mathcal{P}$.

We have $W_Y(\mathbf{v}_{\mathbf{x}}) = \sum_{j\in \mathcal{Y}} a_j(\mathbf{x})$. Let $J := \{\, j\in [n] \mid X_j \neq \varnothing \text{ and } Z_j \neq \varnothing \,\}$ be the set of sites $j$ for which both $X_j$ and $Z_j$ are nonempty. Then by our choice of generating set
\begin{equation}\label{eq:incidence-polynomial}
    a_{y_j}(\mathbf{x)} =
    \begin{cases}
        \bigl(\displaystyle\bigoplus_{i \in X_j} x_i\bigr)
        \bigl(\displaystyle\bigoplus_{i \in Z_j} x_i\bigr), & j \in J, \\[4pt]
        0, & j \in [n] \setminus J.
    \end{cases}
\end{equation}
The corresponding contribution to the overall parity is then
\begin{align}
    \tfrac{1}{2} \sum_{j \in \mathcal{Y}} a_j(\mathbf{x})
    &\equiv \tfrac{1}{2} \sum_{j \in J}
        \Biggl(
            \sum_{k \in X_j} x_k
            - 2 \sum_{\substack{k,l \in X_j \\ k < l}} x_k x_l
        \Biggr)
        \Biggl(
            \sum_{k \in Z_j} x_k
            - 2 \sum_{\substack{k,l \in Z_j \\ k < l}} x_k x_l
        \Biggr) \\
    &\equiv \tfrac{1}{2} \sum_{j \in J} \sum_{\substack{k \in X_j \\ l \in Z_j}}
         x_k x_l
        + \sum_{j \in J}
        \Biggl(
            \sum_{\substack{k,l \in X_j \\ k < l,\; m \in Z_j}}
                x_k x_l x_m
            + \sum_{\substack{k,l \in Z_j \\ k < l,\; m \in X_j}}
                x_k x_l x_m
        \Biggr) \pmod{2}. \label{eq:cubic-parity-function}
\end{align}

We note that for any two generators $k \neq l$, the number of summands in the first term of this expression involving $x_kx_l$ is precisely $\mathbf{v}_k^{\mathsf{T}} B \mathbf{v}_l + \mathbf{v}_l^{\mathsf{T}} B \mathbf{v}_k$, by definition of $B$, while terms with $k=l$ do not appear by our choice of $\mathcal{G}$. Thus $\tfrac{1}{2} \sum_{j \in J} \sum_{k \in X_j, l \in Z_j} x_k x_l = \frac{1}{2}\sum_{1\leq k < l \leq r}(\mathbf{v}_k^{\mathsf{T}} B \mathbf{v}_l + \mathbf{v}_l^{\mathsf{T}} B \mathbf{v}_k) x_k x_l$. Moreover, since the generators pairwise commute, we must have $\mathbf{v}_k^{\mathsf{T}} B \mathbf{v}_l + \mathbf{v}_l^{\mathsf{T}} B \mathbf{v}_k \equiv 0 \pmod{2}$. It follows that each $x_k x_l$ with $k \neq l$ appearing in the first term has an integer coefficient, and therefore that
\begin{equation}
\label{eq:simplified-parity-function}
c(\mathbf{x}) \equiv \sum_{j \in J}
        \Biggl(
            \sum_{\substack{k,l \in X_j \\ k < l,\; m \in Z_j}}
                x_k x_l x_m
            + \sum_{\substack{k,l \in Z_j \\ k < l,\; m \in X_j}}
                x_k x_l x_m
        \Biggr)  + \frac{1}{2}\sum_{1\leq i<j \leq r} (\mathbf{v}_i^{\mathsf{T}} B \mathbf{v}_j -\mathbf{v}_j^{\mathsf{T}} B \mathbf{v}_i) x_i x_j + \sum_{i=1}^r s_i x_i \pmod{2},
\end{equation}
which is a cubic polynomial iff there exists $j\in J$ such that either $X_j$ or $Z_j$ have two or more elements. 

For each column type $P \in \{\mathcal{X},\mathcal{Z},\mathcal{I}\}$ 
and each Pauli label $\mathcal{P} \in \{X,Z,I\}$, we have
\begin{equation}\label{eq:incidence-polynomial-P}
  a_{p_j} (\mathbf{x}) \;=\;
  \begin{cases}
    \displaystyle\bigoplus_{i \in \mathcal{P}_j} x_i, & j \in J_{\mathcal{P}}, \\[4pt]
    0, & j \in [n] \setminus J_\mathcal{P} ,
  \end{cases}
\end{equation}
where $J_{\mathcal{P}} \subseteq [n]$ is the set of sites on which 
at least one generator acts as~$\mathcal{P} \in \{X,Z,I\}$. Thus, by Eqs.~\eqref{eq:incidence-polynomial} and~\eqref{eq:incidence-polynomial-P}, 
each column vector $\mathbf{a}_j$ of the incidence matrix~$A$ 
corresponds to a Boolean polynomial in the variables $\mathbf{x}$ of degree at most~$2$. This proves the ``if'' part of the theorem statement. 

For the ``only if'' part, suppose that $c(\mathbf{x})$ is not cubic. Then for each $j \in J$, $X_j$ and $Z_j$ have at most one element each, and by Eq. \eqref{eq:simplified-parity-function},
\begin{equation}\label{eq:parity-function-noAVN}
c(\mathbf{x}) \equiv \frac{1}{2}\sum_{1\leq i<j \leq r} (\mathbf{v}_i^{\mathsf{T}} B \mathbf{v}_j -\mathbf{v}_j^{\mathsf{T}} B \mathbf{v}_i) x_i x_j + \sum_{i=1}^r s_i x_i \pmod{2}.
\end{equation}
We observe that each nonzero case of Eqs.~\eqref{eq:incidence-polynomial} and ~\eqref{eq:incidence-polynomial-P} with $\mathcal{P} \in \{X,Z\}$ is a monomial. The sets \(
\{\, a_{y_j}(\mathbf{x}) : j\in J \,\}
\) and \(
\{\, a_{x_j}(\mathbf{x}) : j\in J_X \,\}
\cup
\{\, a_{z_j}(\mathbf{x}) : j\in J_Z \,\}
\)
therefore span the subspaces of quadratic and linear terms that can appear in Eq.~\eqref{eq:parity-function-noAVN}.
Thus, $\mathbf{c}\in\mathrm{im}(A)$. In particular, $n\geq 3$ is a necessary condition for non-zero quantum advantage in this setting~\cite{Abramsky_2017}.

\end{proof}

We calculate the parity function for graph state, GHZ state and toric code as examples.

\begin{example}[Graph states \& Graph codes]
    Let $H = (V,E)$ be a graph with adjacency matrix $\Gamma$, where $V$ is the set of $n$ vertices and $E$ the set of edges.   
    The associated \emph{graph state} $\mathcal{C}_n \subseteq \mathcal{H}_2^{\otimes n}$ is defined as the codespace stabilized by
    \[
        \mathcal{G}_H=\Bigl\{\, X_i \prod_{j \in N(i)} Z_j \;\Big|\; i=1,\dots,n \,\Bigr\},
    \]
    where $N(i)$ denotes the graph neighborhood of vertex $i$. This generating set has the form in Proposition \ref{prop:cubic} with parity bits $s_i=0$ for all $i$. It is well known that any stabilizer code is locally Clifford equivalent to such a graph code~\cite{schlingemann2001stabilizercodesrealizedgraph}. We have
\[
    a_{y_i}(\mathbf{x}) \;=\; x_i \bigoplus_{j \in N(i)} x_j,
    \qquad 
    \mathbf{v}_i^{\mathsf T} B \mathbf{v}_j \;=\; \Gamma_{i,j}.
\]
The quadratic term in Eq.~\eqref{eq:cubic-parity-function} coincides with the second term of Eq.~\eqref{eq:parity function}, which we denote by $\Gamma(x) = \sum_{1 \leq i < j \leq n} \Gamma_{i,j} x_i x_j$. 
Note that this quadratic form $\Gamma(x)$ is precisely the quadratic polynomial 
that defines the graph state~\cite{danielsen2005self}, given by
\begin{equation}
    \label{eq:graph-state-phase}
    |H\rangle = \frac{1}{2^{n/2}}
    \sum_{x \in \{0,1\}^n}
    (-1)^{\Gamma(x)} |x\rangle.
\end{equation}
By symmetry of the adjacency matrix $\Gamma_{i,j}=\Gamma_{j,i}$, this quadratic polynomial cancels in Eq. \eqref{eq:simplified-parity-function}, and the parity function takes the form
\begin{equation}
    \label{eq:parity_graph}
    c_{\mathrm{H}}(\mathbf{x}) 
    \;\equiv\;  \sum_{i \in V} \;\sum_{\substack{j<k \\ j,k \in N(i)}} 
        x_i x_j x_k \pmod 2.
\end{equation}
Each term in this sum captures cubic interactions among triples formed by a vertex and two of its neighbors.

Eq.~\eqref{eq:parity_graph} coincides with the polynomial $f_H$ given in Lemma~12 of Ref.~\cite{Cabello_2013}, whose Hamming weight defines the LC-invariant $\beta(H)$ that distinguishes inequivalent contextuality orbits. This correspondence is immediate in our framework.

In addition, the necessary and sufficient condition for Eq.~\eqref{eq:parity_graph} 
to yield a cubic polynomial coincides with the all-versus-nothing (AvN) triple condition proposed for graph states
in Ref.~\cite{Abramsky_2017}, namely the existence of a vertex with degree $\geq 2$. The authors of Ref.~\cite{Abramsky_2017} further conjectured 
that this condition holds for graph codes in general. This follows directly from Proposition \ref{prop:cubic}. 



\end{example}

\begin{example}[GHZ state]
     We choose the generators of n-qubit GHZ state to be \[G_{\mathrm{GHZ}} \;=\; \{\, X_1 X_2 \cdots X_n,\; Z_1 Z_2,\; Z_2 Z_3,\;\dots,\; Z_{n-1} Z_n \,\}. \] Then for $(\mathcal{C}_{\mathrm{GHZ}},S_{\mathrm{GHZ}})$ game, we have $J = [n]$ and 
     \begin{subequations}
     \begin{align}
a_{y_1} &= x_1 x_2, \label{eq:incidence GHZ_2_full}\\
a_{y_n} &= x_1 x_n, \label{eq:incidence GHZ_n_full}\\
a_{y_j} &= x_1 (x_j \oplus x_{j+1}) \quad \text{for } 1< j < n. \label{eq:incidence GHZ_j_full}
     \end{align}
     \end{subequations}
The parity bits for these generators $s_i=0$, while the non-zero contributions to Eq. \eqref{eq:simplified-parity-function} are cubic terms of the form $x_1 x_jx_{j+1}$ for $2 \leq j \leq n-1$ and quadratic terms resulting from the relation
     \begin{equation}
    \mathbf{v}_i^{\mathsf{T}} B \mathbf{v}_j = \begin{cases}
        2 & i = 1 < j \leq n, \\
        0 & \mathrm{otherwise}. 
    \end{cases}
     \end{equation}

It follows by Eq \eqref{eq:simplified-parity-function} that 
\begin{equation}\label{eq:parity GHZ_full}
    c_{\mathrm{GHZ}}(\mathbf{x}) \equiv \sum_{j=2}^{n-1} x_1 x_jx_{j+1} +\sum_{j=2}^n x_1 x_j \pmod 2.
\end{equation} 

This should be contrasted with the standard parity game for the GHZ state~\cite{mermin1990extreme,brassard2004recasting}, for which the query set is
\begin{equation}\label{eq:queries-GHZ}
    \mathcal{M}_{\mathrm{GHZ}} = X^{\otimes n}\,
\big\langle Z_1 Z_2,\; Z_2 Z_3,\;\dots,\; Z_{n-1} Z_n \big\rangle.
\end{equation} 
This is recovered from the full stabilizer-testing game for the GHZ state by restriction to the affine hyperplane $x_1 = 1$. On this hyperplane, the polynomials $a_j'$ and $c'$ become functions only of $x_2,\dots, x_n$:
\begin{subequations}\label{eq:GHZ_incidence}
\begin{align}
a_{y_1}'(\mathbf{x}) &= x_2, \label{eq:incidence GHZ_2}\\
a_{y_n}'(\mathbf{x}) &= x_n, \label{eq:incidence GHZ_n}\\
a_{y_j}'(\mathbf{x}) &= x_j \oplus x_{j+1} \quad \text{for } 1 < j < n, \label{eq:incidence GHZ_j}\\
c_{\mathrm{GHZ}}'(\mathbf{x}) 
&\equiv \sum_{j=2}^{n-1} x_j x_{j+1} 
   + \sum_{j=2}^n x_j \pmod 2. \label{eq:parity GHZ_partial}
\end{align}
\end{subequations}

\end{example}
By Eqs.~\eqref{eq:incidence GHZ_2}--\eqref{eq:incidence GHZ_j}, 
the image of the incidence map $\mathrm{im}(A_{\mathrm{GHZ}}')$, 
spans the entire linear space of $n-1$ variables. We discuss how to recover the usual classical value of the parity game~\cite{mermin1990extreme,brassard2004recasting} for all $n$ from our formalism in Proposition \ref{prop:cl-value-nl1}. However, for odd $n$ there is an appealing interpretation of the classical value in terms of bent functions~\cite{rothaus1976bent,PhysRevA.88.022322}. Specifically, the first term in Eq.~\eqref{eq:parity GHZ_partial} corresponds to a bent function when $n$ is odd, for which we obtain
\(
d_{\mathrm{H}}\!\bigl(\operatorname{im}(A_{\mathrm{GHZ}}'),\mathbf{c}_{\mathrm{GHZ}}'\bigr) = \mathrm{nl}_1(c_{\mathrm{GHZ}}') = 2^{\,n-2} - 2^{\,(n-1)/2-1},
\)
 where $\mathrm{nl}_1(\cdot)$ denotes the first-order nonlinearity (see Section~\ref{sec:Asymptotic Bound}). Since $r=n-1$ in this case, it follows by Eq. \eqref{eq:cl_val_r} that $p_{\mathrm{cl}}^*(\mathrm{GHZ})'
    \;=\; \tfrac{1}{2} + 2^{-\tfrac{n+1}{2}} = \tfrac{1}{2} + 2^{-\lceil n/2\rceil} $ for odd $n$, as usual~\cite{mermin1990extreme,brassard2004recasting}.

\begin{example}[2D toric code]

The stabilizer group for the two-dimensional toric code on an $L \times L$ torus is
\begin{equation}\label{eq:toric-generator}
    S_{\mathrm{toric}}
    = \big\langle\, 
        A_s = \!\!\prod_{i \in s}\! X_i ,\;
        B_p = \!\!\prod_{j \in \partial p}\! Z_j
      \,\big\rangle ,
\end{equation}
where $A_s$ and $B_p$ are the star and plaquette operators, respectively. The stabilizer generators satisfy the global constraints
\begin{equation}
    \prod_s A_s = \mathbb{I}, \qquad \prod_p B_p = \mathbb{I}.
\end{equation}
Consequently, only $2L^2 - 2$ of the generators are independent, yielding a stabilizer code with parameters $[[n,k,d]] = [[L^2, 2, L]]$. For the $(\mathcal{C}_{\mathrm{toric}}(L), S_{\mathrm{toric}})$ game, 
we have $J = [2L^2]$. 
Each edge $j \in J$ is incident to two distinct stars $s_j$ and $s_j'$ and two plaquettes $p_j$ and $p_j'$. The corresponding incidence function is
\[
    a_{y_j}(\mathbf{x}) = (x_{s_j} \oplus x_{s_j'}) (x_{p_j} \oplus x_{p_j'}) ,
    \qquad 
    j \in s_j \cap s_j',\;\;
    j \in \partial p_j \cap \partial p_j' .
\]
The parity function, up to global redundancies, is

\begin{equation}\label{eq:parity_toric}
    c_{\mathrm{toric}}(\mathbf{x}) \equiv \sum_{1\leq j \leq 2L^2} (x_{s_j} + x_{s_j'})x_{p_j} x_{p_j'}  + (x_{p_j} + x_{p_j'})x_{s_j} x_{s_j'} + \sum_{s_j} \sum_{\{p_k :\; s_j \cap p_k \neq \emptyset\}} x_{s_j}x_{p_k} \pmod 2.
\end{equation}
This follows by similar considerations to the previous examples, such as the relations
\begin{equation}
\mathbf{v}^{\mathsf{T}}_{s_j} B \mathbf{v}_{p_k} = \begin{cases}
2 & s_j \cap p_k \neq \emptyset \\ 
0 & \mathrm{otherwise}
\end{cases}, \quad \mathbf{v}^{\mathsf{T}}_{p_k} B \mathbf{v}_{s_j} = 0, \quad \forall j,k.
\end{equation}
Since there are $2L^2 - 2$ independent generators, 
one may fix an arbitrary star variable and an arbitrary plaquette variable to zero 
to obtain the true parity function. More
generally, the above derivation applies to surface codes on any compact, connected, orientable 2D manifold, regardless of its genus $g$. Using the Euler--Poincar\'{e} relation
\[
    V - E + F \;=\; \chi(M) \;=\; b_{0} - b_{1} + b_{2},
\]
 where \(V\), \(E\), and \(F\) denote the numbers of vertices, edges, and faces in a cellular decomposition of the surface \(M\), $\chi(M)$ denotes the Euler characteristic, and
\(b_{k} = \dim H_{k}(M;\mathbb Z_{2})\) are the corresponding Betti numbers \((b_{0}=1, b_{1}=2g, b_{2}=1)\), the number of independent generators is
\[
    E - 2g 
    \;=\; F + V - (b_{0}+b_{2}),
\]
where the subtraction of $b_{0}+b_{2}=2$ accounts for the two global 
stabilizer redundancies present on any closed surface.

When boundaries are introduced, the Boolean variables associated with the 
corresponding boundary type, either direct (plaquettes) or dual (stars), must 
be set to zero. For example, for a surface code with parameters $[[2L^2-L,1,L]]$, defined on a cylinder, removing 
$L$ faces or $L$ vertices amounts to fixing the associated variables to zero; 
one then fixes an arbitrary variable of the remaining type to eliminate the 
global redundancy. For the $[[2L(L-1)+1,1,L]]$ planar code, one similarly fixes the $L$ removed plaquette variables and the $L$ removed vertex variables to zero.

We note that this formulation of the stabilizer-testing game for the toric code only refers to local stabilizers. The corresponding version of Eq. \eqref{eq:fidelityformula} therefore keeps track of the fidelity to the codespace. By including global stabilizers, i.e. those corresponding to non-trivial cycles on a given manifold, this game generalizes easily to specific codewords, and in such cases, Eq. \eqref{eq:fidelityformula} will directly capture the fidelity to the codeword in question. Our approach therefore provides yet a fourth alternative to previous proposals for toric-code games involving global~\cite{Bulchandani_2023}, local~\cite{Hart_2025}, and ``mixed'' sets of toric-code stabilizers~\cite{hart2025braiding}, which can be seen as a unification of the previous proposals, insofar as these can all be viewed as special cases or ensembles of stabilizer-testing games.
\end{example}
    

\subsection{Asymptotic Bound}
\label{sec:Asymptotic Bound}

We now introduce better upper bounds than
Theorem~\ref{thm:unifying bound},
formulated in terms of the nonlinearity profile of the parity function. 

\begin{defn}[Nonlinearity]
Let $f : \mathbb{F}_2^r \to \mathbb{F}_2$ be a Boolean function.
Its $s$-th-order nonlinearity is defined by
\begin{equation}
    \mathrm{nl}_s(f)
    \;\defeq\;
    \min_{x \in \mathrm{RM}(s,r)}
    d_{\mathrm{H}}(x,f),
\end{equation}
where $\mathrm{RM}(s,r)$ denotes the Reed--Muller code of order~$s$
and $d_{\mathrm{H}}(\cdot,\cdot)$ is the Hamming distance.
\end{defn}

Since the Reed--Muller codes form a nested family,
\[
\mathrm{RM}(1,r)
   \subset
   \mathrm{RM}(2,r)
   \subset
   \cdots
   \subset
   \mathrm{RM}(r,r),
\]
their associated nonlinearities satisfy
\[
\mathrm{nl}_1(f)
   \;\ge\;
   \mathrm{nl}_2(f)
   \;\ge\;
   \cdots
   \;\ge\;
   \mathrm{nl}_r(f).
\]
For fixed $s$ and large~$r$, an asymptotic upper bound on
$\mathrm{nl}_s(f)$ was obtained in Ref.~\cite{carlet2006improving,carlet2008recursive}:
\begin{equation}\label{eq:upper-bound-of-nl}
    \mathrm{nl}_s(f)
    \;=\;
    2^{\,r-1}
    - \frac{\sqrt{15}}{2}\,
      (1+\sqrt{2})^{\,s-2}\,
      2^{\,r/2}
    + \mathcal{O}\!\bigl(r^{\,s-2}\bigr).
\end{equation}
We now connect such nonlinearity profiles to the classical winning probabilities of stabilizer games.

When all stabilizer elements are queried, the parity function 
$c \in \mathrm{RM}(3,r)$ is approximated by 
$\mathrm{im}(A) \subseteq \mathrm{RM}(2,r)$. In this setting, non-classicality of the game corresponds to nonquadraticity of $c$. We will call this the ``full-query regime''. When the queries are restricted to a coset of the stabilizer group obtained by fixing $t$ variables to specific binary values, various possibilities arise. If the restricted parity function $c' \in \mathrm{RM}(3,r-t)$ continues to be cubic, the analysis proceeds as for the full-query regime. If $c' \in \mathrm{RM}(2,r-t)$ is quadratic, it may contain quadratic terms of either
identical or distinct Pauli types at each site. However, the image $\mathrm{im}(A') \subseteq \mathrm{RM}(2,r-t)$ contains only quadratic terms corresponding to distinct Pauli types by Eq.~\eqref{eq:incidence-polynomial}. Thus, $\mathrm{im}(A')$ and $c'$ are quadratic but linearly independent in general. Applying the argument used in the proof of Theorem~\ref{thm:unifying bound} with $s=2$ yields
$p_{\mathrm{cl}}^* \le 3/4$. Finally, when the queries are restricted to a coset of the stabilizer group 
such that $\mathrm{im}(A') \subseteq \mathrm{RM}(1,r-t)$, the corresponding parity function 
$c' \in \mathrm{RM}(2,r-t)$ is approximated by affine functions. In this setting, non-classicality of the game corresponds to nonlinearity of $c'$. We will refer to this latter case, which includes the standard parity game~\cite{mermin1990extreme,brassard2004recasting}, as the ``partial-query regime''.


Using Eq.~\eqref{eq:cl_val}, we obtain the following bounds on 
the classical winning probability. 
For the full-query regime,
\begin{subequations}\label{eq:bound-by-nl}
\begin{equation}
    p_{\mathrm{cl}}^* \;\le\; 1 - 2^{-r}\, \mathrm{nl}_2(c),\label{eq:bound-by-nl2}
\end{equation}
and for the partial-query regime,
\begin{equation}
    p_{\mathrm{cl}}^* \;\le\; 1 - 2^{-r+t}\, \mathrm{nl}_1(c').
    \label{eq:bound-by-nl1}
\end{equation}
\end{subequations}

From Eq.~\eqref{eq:upper-bound-of-nl}, we see that the nonlinearity-based upper bounds 
permit parity functions for which $\Delta = 1$ is attainable within this framework. When using the bounds Eq. \eqref{eq:bound-by-nl2} and \eqref{eq:bound-by-nl1}, terms of degree $\leq2$ in $c$ and degree $\leq1$ in $c'$ can be omitted by definition of the nonlinearities. Furthermore, the classical value for the partial-query regime 
can be computed explicitly using the Walsh--Hadamard transform, for which the upper bound Eq. \eqref{eq:bound-by-nl1} turns out to be an equality.

\begin{prop}[Classical value via $\mathrm{nl}_1$] \label{prop:cl-value-nl1}
Let $(\mathcal{C}_n,\mathcal{M})$ be a stabilizer-testing game.  
If $\mathcal{M}$ is a coset of $S(\mathcal{C}_n)$ such that the column span of its associated incidence matrix  $A'$ is linear and its parity function $c'$ is quadratic,  
then the game satisfies
\begin{equation}\label{eq:bound-nl1-rank}
    p_{\mathrm{cl}}^* \;=\; \tfrac{1}{2} + 2^{-\tfrac{1}{2}\,\mathrm{rank}(c') - 1},
\end{equation}
where $\mathrm{rank}(c')$ denotes the rank of the associated quadratic form.
\end{prop}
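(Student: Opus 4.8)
The plan is to reduce the computation of $p_{\mathrm{cl}}^{*}$ to two ingredients: an identification of $\operatorname{im}(A')$, and the known value of the first-order nonlinearity of a quadratic Boolean function. By Eq.~\eqref{eq:cl_val} applied to the coset game (with the number of queries $|\mathcal{M}| = 2^{m}$, $m \defeq \log_2|\mathcal{M}| = r-t$, in place of $2^{n}$), one has $p_{\mathrm{cl}}^{*} = 1 - 2^{-m}\,d_{\mathrm{H}}(\operatorname{im}(A'),\mathbf{c}')$. If I can show that $\operatorname{im}(A') = \mathrm{RM}(1,m)$, then $d_{\mathrm{H}}(\operatorname{im}(A'),\mathbf{c}') = \mathrm{nl}_1(c')$ by the very definition of nonlinearity, and the proposition reduces to the identity $\mathrm{nl}_1(c') = 2^{m-1} - 2^{\,m-\mathrm{rank}(c')/2-1}$ for a quadratic; substituting then gives $p_{\mathrm{cl}}^{*} = \tfrac{1}{2} + 2^{-\mathrm{rank}(c')/2-1}$, which is the claim. (Here $\mathrm{rank}(c')$, being the rank of an alternating bilinear form over $\mathbb{F}_2$, is automatically even, so the exponent is an integer.)

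First I would establish $\operatorname{im}(A') = \mathrm{RM}(1,m)$. In the partial-query regime every column $\mathbf{a}'_j$ of $A'$ is, by hypothesis, an affine function $u_j\!\cdot\!\mathbf{x}\oplus\epsilon_j$ of the $m$ free generator variables $\mathbf{x}$. Since the columns at each site one-hot-encode the Pauli acting there, the full column vector at $\mathbf{x}$ is a faithful encoding of the symplectic representation of $M(\mathbf{x}) = \prod_i g_i^{x_i}$; as the $g_i$ are independent, $\mathbf{x}\mapsto M(\mathbf{x})$ is injective on the coset, so no nonzero $v$ can satisfy $u_j\!\cdot\!v = 0$ for all $j$ (such a $v$ would force $M(\mathbf{x}) = M(\mathbf{x}\oplus v)$ for every $\mathbf{x}$). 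Hence $\{u_j\}_j$ spans $\mathbb{F}_2^{m}$, i.e.\ every homogeneous linear function lies in $\operatorname{im}(A')$; and the all-ones function lies in $\operatorname{im}(A')$ as well, because at any fixed site the four columns for $I,X,Y,Z$ sum to $\mathbf{1}$ identically. Thus $\operatorname{im}(A')\supseteq\mathrm{RM}(1,m)$, and together with the regime hypothesis $\operatorname{im}(A')\subseteq\mathrm{RM}(1,m)$ this forces equality. (This recovers the remark, made for the GHZ case in the text, that $\operatorname{im}(A'_{\mathrm{GHZ}})$ exhausts the linear space of its free variables.)

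Next I would evaluate $\mathrm{nl}_1(c')$ through the Walsh--Hadamard transform. Write $\mathrm{nl}_1(c') = 2^{m-1} - \tfrac{1}{2}\max_{u\in\mathbb{F}_2^{m}}|W(u)|$ with $W(u) \defeq \sum_{x}(-1)^{c'(x)\oplus u\cdot x}$. For every $u$, the function $c'(x)\oplus u\cdot x$ is again quadratic, with the \emph{same} associated alternating bilinear form $\tilde{B}$ of rank $\rho \defeq \mathrm{rank}(c')$; by the classical normal-form theory of quadratic Boolean functions over $\mathbb{F}_2$ (Dickson), an affine change of variables brings it to $y_1 y_2 \oplus \cdots \oplus y_{\rho-1} y_\rho$ plus an affine function of $y_{\rho+1},\dots,y_m$. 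The Walsh sum then factorizes over the $\rho/2$ symplectic pairs $y_{2k-1}y_{2k}$ (each contributing a factor of magnitude $2$) and the affine tail (contributing $\pm 2^{\,m-\rho}$ or $0$), so $|W(u)| \in \{0,\,2^{\,m-\rho/2}\}$; Parseval's identity $\sum_u W(u)^2 = 2^{2m}$ excludes the all-zero spectrum, so the maximum is attained and equals $2^{\,m-\rho/2}$. Hence $\mathrm{nl}_1(c') = 2^{m-1} - 2^{\,m-\rho/2-1}$, and substituting into the expression from the first paragraph yields $p_{\mathrm{cl}}^{*} = \tfrac{1}{2} + 2^{-\rho/2-1}$, completing the argument; as a byproduct, the bound Eq.~\eqref{eq:bound-by-nl1} is saturated throughout the partial-query regime.

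The step I expect to demand the most care is the quadratic Walsh-spectrum computation — specifically the ``flatness'' of the spectrum of a quadratic (all nonzero Walsh coefficients equal to $\pm 2^{\,m-\rho/2}$), which rests on Dickson's classification together with careful bookkeeping of powers of two. The identification of $\operatorname{im}(A')$ is short linear algebra once one invokes the injectivity of $\mathbf{x}\mapsto M(\mathbf{x})$, and the translation between Hamming distance and Walsh coefficients is routine. One should also be mildly careful that a best affine approximant to $c'$ may need a constant offset in addition to a linear functional — this is harmless here precisely because the first step shows $\operatorname{im}(A')$ contains every constant function.
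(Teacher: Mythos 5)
Your proposal is correct and follows essentially the same route as the paper: both identify $\operatorname{im}(A')=\mathrm{RM}(1,r-t)$ by combining injectivity of $\mathbf{x}\mapsto M(\mathbf{x})$ (which forces the linear parts of the columns to span the free variables) with the one-hot identity $\sum_{k=1}^4 a_{4j+k}=1$, and then evaluate $\mathrm{nl}_1(c')$ via the Walsh--Hadamard transform and the rank of the quadratic form. The only difference is that you derive the flat-spectrum identity $\max_u|W_{c'}(u)|=2^{\,m-\mathrm{rank}(c')/2}$ from Dickson's normal form and Parseval, whereas the paper simply cites it from MacWilliams--Sloane.
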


\begin{proof}
Write
\(
    \phi(\mathbf{x}) = \bigl(a_1(\mathbf{x}),\dots,a_{4n}(\mathbf{x})\bigr)
\)
for the affine map $\phi:\mathbb{F}_2^{r-t} \to \mathbb{F}_2^{4n}$ corresponding to evaluation of each row of $A'$, which is injective by Definition~\ref{def:stabilizer game}. Writing $a_j(\mathbf{x}) = \mathbf{u}_j \cdot \mathbf{x} \oplus v_j$, it follows by Lemma~\ref{lem:full-rank} of Appendix~\ref{app:proofs} that the $\mathbf{u}_j$ span $\mathbb{F}_2^{r-t}$. Since $\sum_{k=1}^4 a_{4j+k}(\mathbf{x}) = 1$ by definition of the incidence matrix, we also have $1 \in \mathrm{im}(A')$. Hence $\mathrm{im}(A') = \mathrm{RM}(1,r-t)$ and by Eq. \eqref{eq:cl_val_r}, 
\begin{equation}
\label{eq:eq_from_nl1}
p_{\mathrm{cl}}^* = 1 - \frac{\mathrm{nl}_1(c')}{2^{r-t}},
\end{equation}
i.e. the upper bound in Eq. \eqref{eq:bound-by-nl1} is saturated. It remains to determine the first-order nonlinearity of $c'$.

For any Boolean function $f : \mathbb{F}_2^r \to \mathbb{F}_2$, 
its first-order nonlinearity satisfies
\begin{equation}\label{eq:nl1-formula}
    \mathrm{nl}_1(f)
    = 2^{\,r-1}
      - \tfrac{1}{2}
        \max_{u \in \mathbb{F}_2^r}
        |W_f(u)| ,
\end{equation}
where 
\begin{equation}\label{eq:walsh-spectrum}
    W_f(u)
    = \sum_{x \in \mathbb{F}_2^n}
      (-1)^{f(x)+ \langle u, x \rangle}
\end{equation}
denotes the Walsh--Hadamard transform of $f$. For a quadratic function $f$, one further has the identity~\cite{macwilliams1977theory}
\begin{equation}\label{eq:walsh-quadratic}
    \max_{u \in \mathbb{F}_2^r} |W_f(u)|
    = 2^{\,r - \mathrm{rank}(f)/2},
\end{equation}
where $\mathrm{rank}(f)$ is the rank of the associated quadratic form. Substituting $f = c'$ and $r \mapsto r-t$ into 
Eqs.~\eqref{eq:nl1-formula}–\eqref{eq:walsh-quadratic} and using Eq. \eqref{eq:eq_from_nl1} yields the desired result.
\end{proof}
Since the rank of any quadratic Boolean polynomial over $\mathbb{F}_2$ is even, the parity function $c'$ satisfies
\(
2 \le \mathrm{rank}(c') \le n - 1,
\)
where the upper bound is strict when $n$ is even. Thus for stabilizer-testing games satisfying the hypotheses of Proposition \ref{prop:cl-value-nl1}, Eq.~\eqref{eq:bound-nl1-rank} further implies that
\begin{equation}\label{eq:pcl*-bound-quadratic}
    \tfrac{1}{2} + 2^{-\lceil n/2 \rceil} \le p_{\mathrm{cl}}^* \le \tfrac{3}{4}.
\end{equation}
To approach the lower bound in Eq.~\eqref{eq:pcl*-bound-quadratic}, we seek an \emph{extensive} rank, i.e., $\operatorname{rank}(c')$ growing linearly with the number of qubits $n$. According to Eq.~\eqref{eq:cubic-parity-function}, such an extensive rank can be realized in games defined from translation-invariant, many-body stabilizer states with spatially local stabilizer generators.
Translation invariance leads to a parity function with $\Theta(n)$ quadratic terms. With the additional assumption of spatially local stabilizer generators, the resulting parity function can be chosen to exhibit only finite-range correlations, in the sense that each Boolean variable $x_j$ appears in only $O(n^0)$ quadratic monomials, thereby giving rise to an extensive rank.

For example, the GHZ state achieves the lower bound in Eq.~\eqref{eq:pcl*-bound-quadratic} for such games by fixing the Boolean variable corresponding to the global generator $X \otimes \cdots \otimes X$ to be $1$, so that the quadratic parity function involves only 
nearest-neighbor correlations. Specifically, when $\mathcal{M}$ is chosen as in Eq.~\eqref{eq:queries-GHZ}, Proposition \ref{prop:cl-value-nl1} recovers the standard result $p_{\mathrm{cl}}^* = \tfrac{1}{2} + 2^{-\lceil n/2 \rceil}$ from Refs.~\cite{mermin1990extreme,brassard2004recasting}. 
Consequently, one obtains an asymptotically perfect difference $\Delta_{\mathrm{GHZ}}' = 1$. However, when the set of allowed measurements is taken to be the full stabilizer group, these considerations from the theory of quadratic Boolean functions no longer apply directly. Nevertheless, the direct-sum structure of Eq. \eqref{eq:classconstr} conditioned on $x_1$ immediately yields $p_{\mathrm{cl}}^*(\mathrm{GHZ}) = \frac{1}{2}(1+p_{\mathrm{cl}}^*(\mathrm{GHZ})' = \frac{3}{4} + 2^{-\lceil n/2 \rceil-1}$ in this case, implying that $\Delta_{\mathrm{GHZ}} = \tfrac{1}{2}$.


In the next section, we show that even when the full set of stabilizers 
$\mathcal{M} = S_{\mathrm{cluster}}$ is used, 
the cyclic cluster state gives rise to a family of nonlocal games with asymptotically perfect difference $\Delta_{\mathrm{cluster}} = 1$. This reflects the fact that the parity function for the cyclic cluster state exhibits a greater degree of nonquadraticity than that for the GHZ state. This should also be contrasted with the conventional parity game, whose parity function exhibits nonlinearity but not nonquadraticity.


We now derive a bound on $(\mathcal{C}_n,S)$ stabilizer-testing games from the nonquadraticity of the parity function.

\begin{prop}[Bound via $\mathrm{nl}_2$]\label{prop:bound via derivative}
Let $(\mathcal{C}_n,S)$ be a stabilizer-testing game with parity function $c$. If the game admits non-zero quantum advantage, then its classical value satisfies the following two bounds:
\begin{subequations}\label{eq:derivative-bounds}
    \begin{align}
    \text{(i)}\quad
    p_{\mathrm{cl}}^*
    &\le
    \tfrac{3}{4} 
    + 2^{-\tfrac{1}{2}\,\max_{\mathbf{a} \in \mathbb{F}_2^r}\,\mathrm{rank}(D_{\mathbf{a}} c)-2},
    \label{eq:derivative-bound1}\\[4pt]
    \text{(ii)}\quad
    p_{\mathrm{cl}}^*
    &\le
    \tfrac{1}{2}
    + \tfrac{1}{2}\,
    \sqrt{\tfrac{1}{2^r}\sum_{\mathbf{a} \in \mathbb{F}_2^r}
    \!\left[2^{-\tfrac{1}{2}\,\mathrm{rank}(D_{\mathbf{a}} c)}\right]},
    \label{eq:derivative-bound2}
\end{align}
\end{subequations}
where $D_{\mathbf{a}} c(\mathbf{x}) = c(\mathbf{x}) \oplus c(\mathbf{x}+\mathbf{a})$ denotes the Boolean derivative of $c$ in direction $\mathbf{a} \in \mathbb{F}_2^r$, and $\mathrm{rank}(D_{\mathbf{a}} c)$ denotes the rank of the associated quadratic form.
\end{prop}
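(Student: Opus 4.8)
The plan is to bound $\mathrm{nl}_2(c)$ from below using the second-order derivatives of $c$ and then invoke Eq.~\eqref{eq:bound-by-nl2}. The key tool is the standard relation between the second-order nonlinearity of $c$ and the (first-order) nonlinearities of its Boolean derivatives $D_{\mathbf{a}}c$, together with the observation that, for a cubic $c$ arising from a $(\mathcal{C}_n,S)$ stabilizer-testing game, each $D_{\mathbf{a}}c$ is a \emph{quadratic} Boolean function, so that $\mathrm{nl}_1(D_{\mathbf{a}}c)$ is controlled exactly by $\mathrm{rank}(D_{\mathbf{a}}c)$ via the Walsh--Hadamard identities Eqs.~\eqref{eq:nl1-formula}--\eqref{eq:walsh-quadratic} already used in the proof of Proposition~\ref{prop:cl-value-nl1}. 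Concretely, $\max_u|W_{D_{\mathbf{a}}c}(u)| = 2^{\,r - \mathrm{rank}(D_{\mathbf{a}}c)/2}$.

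For bound (i), I would start from the elementary inequality relating second-order nonlinearity of $c$ to the first-order nonlinearity of any single derivative: for every $\mathbf{a}$,
\[
\mathrm{nl}_2(c) \;\ge\; \tfrac{1}{2}\,\mathrm{nl}_1(D_{\mathbf{a}}c)
\]
(this follows because if $g$ is the best quadratic approximation to $c$, then $D_{\mathbf{a}}g$ is affine and approximates $D_{\mathbf{a}}c$ to within $2\,d_{\mathrm{H}}(c,g)$). Substituting $\mathrm{nl}_1(D_{\mathbf{a}}c) = 2^{\,r-1} - \tfrac12\cdot 2^{\,r-\mathrm{rank}(D_{\mathbf{a}}c)/2}$ and maximizing over $\mathbf{a}$ gives $\mathrm{nl}_2(c) \ge 2^{\,r-2} - 2^{\,r - \max_{\mathbf{a}}\mathrm{rank}(D_{\mathbf{a}}c)/2 - 2}$; feeding this into $p_{\mathrm{cl}}^* \le 1 - 2^{-r}\mathrm{nl}_2(c)$ yields precisely Eq.~\eqref{eq:derivative-bound1}.

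For bound (ii), I would instead use the sharper averaged estimate: the sum-of-fourth-powers (or sum-of-squares) of the Walsh spectrum of $c$ decomposes over derivatives as $\sum_{u}W_c(u)^4 = 2^r\sum_{\mathbf{a}}\sum_u (-1)^{D_{\mathbf{a}}c(x)}\cdots$, more usefully $2^{-r}\sum_{\mathbf a}\big(\max_u|W_{D_{\mathbf a}c}(u)|\big)$ bounds $\max_u W_c(u)^2/2^r$ type quantities. The cleanest route is: $\big(\max_u|W_c(u)|\big)^2 \le \sum_{\mathbf a}|W_{D_{\mathbf a}c}(0)|$-style bounds are too weak, so instead use $\max_u W_c(u)^2 \le 2^{-r}\sum_{\mathbf a \in \mathbb F_2^r}\max_v |W_{D_{\mathbf a}c}(v)|\cdot 2^r$? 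I would in fact derive it from $\mathrm{nl}_1(c)$-to-$\mathrm{nl}_2$ passage applied in the form $p_{\mathrm{cl}}^* = \tfrac12 + \tfrac12\cdot 2^{-r}\max_{\mathbf b}|\widehat{(-1)^{c\oplus \ell_{\mathbf b}}}|$ and bound the relevant Gowers-type $U^3$ norm: $\|(-1)^c\|_{U^3}^8 = 2^{-3r}\sum_{\mathbf a}\sum_{\mathbf b}|W_{D_{\mathbf a}c}(\mathbf b)|^4$, whence $2^{-r}\max_u|W_c(u)|^2 \le \|(-1)^c\|_{U^3}^4 \le 2^{-r}\sum_{\mathbf a}2^{-r}\max_v|W_{D_{\mathbf a}c}(v)|^2 = 2^{-r}\sum_{\mathbf a}2^{-\mathrm{rank}(D_{\mathbf a}c)}$; taking square roots and translating back through Eq.~\eqref{eq:nl1-formula}-type bookkeeping gives Eq.~\eqref{eq:derivative-bound2}.

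The main obstacle I anticipate is getting the constants and the direction of the inequality exactly right in part (ii): the passage from a bound on $\mathrm{nl}_2$ (the relevant quantity, since $\mathrm{im}(A)\subseteq\mathrm{RM}(2,r)$) to a statement about the Walsh spectrum of $(-1)^c$ requires care, because $\mathrm{nl}_2$ is \emph{not} simply $2^{r-1}-\tfrac12\max|W_c|$ — that formula is $\mathrm{nl}_1$. The correct chain is $p_{\mathrm{cl}}^* = 1 - 2^{-r}\mathrm{nl}_2(c)$ combined with the Carlet-type recursive lower bound $\mathrm{nl}_2(c)\ge 2^{r-1} - \tfrac12\sqrt{\,2^{2r} - 2\sum_{\mathbf a}(2^{r}-2\,\mathrm{nl}_1(D_{\mathbf a}c))\,}$ (from \cite{carlet2008recursive}), into which one substitutes the exact quadratic value of $\mathrm{nl}_1(D_{\mathbf a}c)$; simplifying the radicand using $2^r - 2\,\mathrm{nl}_1(D_{\mathbf a}c) = 2^{r-\mathrm{rank}(D_{\mathbf a}c)/2}$ then produces exactly the square-root expression in Eq.~\eqref{eq:derivative-bound2}. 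Verifying that this Carlet recursion applies (it requires only that $c$ be cubic, which holds by Proposition~\ref{prop:cubic}) and that no $\mathbf{a}=\mathbf 0$ term or parity subtlety spoils the count is the one place where I would slow down and check details carefully.
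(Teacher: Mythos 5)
Your part (i) is correct and is exactly the paper's argument: combine $p_{\mathrm{cl}}^* \le 1 - 2^{-r}\,\mathrm{nl}_2(c)$ from Eq.~\eqref{eq:bound-by-nl2} with Carlet's recursive bound $\mathrm{nl}_2(c) \ge \tfrac12 \max_{\mathbf a}\mathrm{nl}_1(D_{\mathbf a}c)$ and the exact value $\mathrm{nl}_1(D_{\mathbf a}c) = 2^{r-1} - 2^{\,r-1-\frac12\mathrm{rank}(D_{\mathbf a}c)}$ for the quadratic derivatives of the cubic parity function. You also correctly identify that the whole proposition is meant to follow from Propositions~2 and~3 of Ref.~\cite{carlet2008recursive} with $s=2$, which is precisely what the paper does.

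Part (ii), however, does not go through as written, because you have misquoted the averaged Carlet bound. The correct statement (Proposition~3 of Ref.~\cite{carlet2008recursive}, reproduced as Lemma~\ref{lem:recursive-nl-bound2}) is $\mathrm{nl}_2(c)\ge 2^{r-1}-\tfrac12\sqrt{2^{2r}-2\sum_{\mathbf a}\mathrm{nl}_1(D_{\mathbf a}c)}$, with $\sum_{\mathbf a}\mathrm{nl}_1(D_{\mathbf a}c)$ itself in the radicand, not $\sum_{\mathbf a}\bigl(2^{r}-2\,\mathrm{nl}_1(D_{\mathbf a}c)\bigr)$ as you wrote. With the correct lemma, substituting $2\,\mathrm{nl}_1(D_{\mathbf a}c)=2^{r}-2^{\,r-\mathrm{rank}(D_{\mathbf a}c)/2}$ collapses the radicand to $2^{r}\sum_{\mathbf a}2^{-\mathrm{rank}(D_{\mathbf a}c)/2}$, which is what produces Eq.~\eqref{eq:derivative-bound2}. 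With your version the radicand is instead $2^{2r}-2^{\,r+1}\sum_{\mathbf a}2^{-\mathrm{rank}(D_{\mathbf a}c)/2}$, yielding $\tfrac12+\tfrac12\sqrt{1-2^{\,1-r}\sum_{\mathbf a}2^{-\mathrm{rank}(D_{\mathbf a}c)/2}}$ --- a much weaker statement than (ii), whose radicand can even go negative, so your claim that the simplification ``produces exactly'' Eq.~\eqref{eq:derivative-bound2} is false. The Gowers-norm/fourth-moment detour preceding it also targets $\max_{u}|W_c(u)|$, which controls $\mathrm{nl}_1(c)$ rather than the relevant quantity $\mathrm{nl}_2(c)$, as you yourself observe. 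The fix is purely local --- quote Carlet's Proposition~3 correctly and redo one line of algebra --- and your worry about the $\mathbf a=\mathbf 0$ term is unfounded: it contributes $2^{0}=1$ to the sum and is harmlessly included.
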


\begin{proof}
    The result follows directly from 
    Propositions~2 and~3 in Ref.~\cite{carlet2008recursive} 
    by taking $s=2$; see Appendix~\ref{app:proofs} for details.
\end{proof}

These two results are complementary. For a $(\mathcal{C}_n, S)$ game that admits non-zero quantum advantage, Eq.~\eqref{eq:derivative-bound1} implies that the existence of a \emph{single} direction 
along which the derivative of the parity function has rank growing with the system size 
is sufficient to yield $\Delta = \tfrac{1}{2}$. Eq.~\eqref{eq:derivative-bound2} further shows that a rank whose uniform average over \emph{all} possible directional derivatives grows
with system size constitutes a sufficient condition for 
$\Delta = 1$. Next, we show that a coset game with maximal asymptotic difference $\Delta = 1$ defined on some state implies a lower bound on the asymptotic difference of the corresponding full-query game on the same state.

\begin{corollary}[] \label{col:corollary1}
If there exists a coset \(
\mathcal{M} \subseteq S(\mathcal{C}_n)\) for which the game $(\mathcal{C}_n,\mathcal{M})$, whose  incidence matrix $A'$ has a linear column span and whose parity function $c'$ is quadratic, attains maximal asymptotic difference, i.e., $\Delta(\mathcal{C}_n,\mathcal{M}) = 1$, then the asymptotic difference for game $(\mathcal{C}_n,S)$ satisfies $  \Delta(\mathcal{C}_n,S) \geq \tfrac{1}{2}$.
\end{corollary}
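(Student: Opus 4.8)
The plan is to obtain the bound from a monotonicity property of the classical value under shrinking the referee's query set, combined with the exact formula of Proposition~\ref{prop:cl-value-nl1}. Write $\mathcal{M}=gS'$ with $S'\le S(\mathcal{C}_n)$ an index-$2$ subgroup (the case relevant to the partial-query regime, e.g.\ the standard parity game), and let $A,\mathbf{c}$ and $A',\mathbf{c}'$ be the incidence matrix and parity vector of $(\mathcal{C}_n,S)$ and $(\mathcal{C}_n,\mathcal{M})$ respectively. By construction, the rows of $A$ labelled by the stabilizers in $\mathcal{M}$ are exactly the rows of $A'$, and the corresponding entries of $\mathbf{c}$ are exactly $\mathbf{c}'$. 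Any deterministic classical strategy $\mathbf{b}\in\{0,1\}^{4n}$ is legal for both games, since each player's question alphabet is the four Paulis either way; and $d_{\mathrm{H}}(A\mathbf{b},\mathbf{c})$, being a sum of $2^r$ indicator terms, dominates the sub-sum over the $\mathcal{M}$-rows, which equals $d_{\mathrm{H}}(A'\mathbf{b},\mathbf{c}')$. Minimizing over $\mathbf{b}$ gives
\begin{equation}
\label{eq:monotone-restriction}
d_{\mathrm{H}}\bigl(\operatorname{im}(A),\mathbf{c}\bigr)\;\ge\;d_{\mathrm{H}}\bigl(\operatorname{im}(A'),\mathbf{c}'\bigr).
\end{equation}

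Next I would feed in Proposition~\ref{prop:cl-value-nl1}: since $\operatorname{im}(A')$ is linear and $c'$ is quadratic, its proof shows $\operatorname{im}(A')=\mathrm{RM}(1,r-1)$ and $d_{\mathrm{H}}(\operatorname{im}(A'),\mathbf{c}')=\mathrm{nl}_1(c')=2^{r-2}\bigl(1-2^{-\mathrm{rank}(c')/2}\bigr)$, where the final step uses the Walsh--Hadamard identity~\eqref{eq:walsh-quadratic} for quadratic forms. Combining this with \eqref{eq:monotone-restriction} and the general formula \eqref{eq:cl_val_r} yields
\begin{equation}
\label{eq:col-final-bound}
p_{\mathrm{cl}}^*(\mathcal{C}_n,S)\;=\;1-\frac{d_{\mathrm{H}}(\operatorname{im}(A),\mathbf{c})}{2^r}\;\le\;\frac{3}{4}+\frac{1}{4}\,2^{-\mathrm{rank}(c')/2}.
\end{equation}
The hypothesis $\Delta(\mathcal{C}_n,\mathcal{M})=1$ means, via \eqref{eq:bound-nl1-rank}, that $p_{\mathrm{cl}}^*(\mathcal{C}_n,\mathcal{M})=\tfrac12+2^{-\mathrm{rank}(c')/2-1}\to\tfrac12$ as $n\to\infty$, i.e.\ $\mathrm{rank}(c')\to\infty$. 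Passing to the limit in \eqref{eq:col-final-bound} then gives $\lim_n p_{\mathrm{cl}}^*(\mathcal{C}_n,S)\le\tfrac34$, hence $\Delta(\mathcal{C}_n,S)=2\bigl[1-\lim_n p_{\mathrm{cl}}^*(\mathcal{C}_n,S)\bigr]\ge\tfrac12$. The identical argument with $S'$ of index $2^t$ replaces the prefactor $\tfrac14$ in \eqref{eq:col-final-bound} by $2^{-t-1}$ and gives $\Delta(\mathcal{C}_n,S)\ge 2^{-t}$ more generally.

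No single step here is technically hard; the point to get right is the monotonicity reduction~\eqref{eq:monotone-restriction}, namely that restricting the referee from $S$ to the coset $\mathcal{M}$ can only help the classical players. This relies on the facts that one and the same shared strategy $\mathbf{b}$ is being optimized in both games and that the restriction of $\operatorname{im}(A)$ to the $\mathcal{M}$-rows is precisely $\operatorname{im}(A')$, so the inequality $d_{\mathrm{H}}(A\mathbf{b},\mathbf{c})\ge d_{\mathrm{H}}(A'\mathbf{b},\mathbf{c}')$ holds pointwise in $\mathbf{b}$ and survives the minimization. The genuine limitation of the argument is that \eqref{eq:col-final-bound} discards the contribution of the complementary coset of $S'$ to $d_{\mathrm{H}}(\operatorname{im}(A),\mathbf{c})$, bounding it trivially below by $0$; this is why the argument halts at $\Delta\ge\tfrac12$, and determining whether $\Delta(\mathcal{C}_n,S)$ strictly exceeds $\tfrac12$ would require analyzing that coset as well, as was done explicitly for the GHZ state above (where it contributes $0$, so that the bound $\Delta_{\mathrm{GHZ}}=\tfrac12$ is in fact attained).
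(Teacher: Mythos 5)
Your monotonicity reduction is correct as far as it goes, and for an index-$2$ coset ($t=1$ fixed generator) it does prove the claim: restricting the referee from $S$ to $\mathcal{M}$ can only help the classical players, the $\mathcal{M}$-rows of $A$ are exactly $A'$, and Proposition~\ref{prop:cl-value-nl1} then pins down the discarded-row count well enough to give $p_{\mathrm{cl}}^*(\mathcal{C}_n,S)\le \tfrac34+\tfrac14\,2^{-\mathrm{rank}(c')/2}$. However, the corollary as stated allows an \emph{arbitrary} coset, i.e.\ one obtained by fixing $t$ of the $r$ generator variables with $t$ possibly growing with $n$, and there your argument breaks down: the coset then contains only a fraction $2^{-t}$ of the rows of $A$, so bounding the complementary rows' contribution to $d_{\mathrm{H}}(\mathrm{im}(A),\mathbf{c})$ by zero leaves you with $p_{\mathrm{cl}}^*(\mathcal{C}_n,S)\le 1-2^{-t-1}\bigl(1-2^{-\mathrm{rank}(c')/2}\bigr)$ and hence only $\Delta\ge 2^{-t}$, as you yourself note in your last sentence. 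This is not the statement being proved. It is also not a corner case: in the toric-code application immediately following the corollary, the coset fixes all $\sim L^2$ star variables, so $t\sim n/2$ and your bound becomes vacuous, whereas the corollary is precisely what delivers $\Delta\ge\tfrac12$ there.

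The missing idea is to trade the row-restriction for a Boolean \emph{derivative}. The paper takes $\mathbf{a}\in\mathbb{F}_2^r$ supported on the generators whose variables the coset fixes to $1$, and observes that $D_{\mathbf{a}}c$ is a quadratic form on all $r$ variables whose coefficient matrix contains that of $c'$ as a principal submatrix, so $\mathrm{rank}(D_{\mathbf{a}}c)\ge\mathrm{rank}(c')\to\infty$. Feeding this into the recursive nonlinearity bound $\mathrm{nl}_2(c)\ge\tfrac12\max_{\mathbf{a}}\mathrm{nl}_1(D_{\mathbf{a}}c)$ (Proposition~\ref{prop:bound via derivative}, Eq.~\eqref{eq:derivative-bound1}) costs only a universal factor of $\tfrac12$ \emph{independent of $t$}, yielding $p_{\mathrm{cl}}^*\le\tfrac34+2^{-\mathrm{rank}(D_{\mathbf{a}}c)/2-2}\to\tfrac34$ and hence $\Delta(\mathcal{C}_n,S)\ge\tfrac12$ in full generality. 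To repair your proof you would either need to restrict the corollary's hypothesis to index-$2$ cosets, or replace the crude restriction inequality with the derivative argument; one further detail to check in the latter route is the paper's verification that a cubic monomial $x_ix_j\bigoplus_{l\in L_{ij}}x_l$ of $c$ contributes $x_ix_j$ to $D_{\mathbf{a}}c$ exactly when it contributes $x_ix_j$ to $c'$, which is what makes the principal-submatrix claim true.
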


\begin{proof}
Assume there exists a coset 
$\mathcal{M} \subseteq S(\mathcal{C}_n)$ 
for which the game $(\mathcal{C}_n,\mathcal{M})$ 
achieves maximal asymptotic difference.  
The coset partitions the generator set $G$ into three subsets:  
(i) $k$ generators $\{g_1,\dots,g_k\}$ whose corresponding Boolean variables $\{x_1,\dots,x_k\}$ are fixed to $1$,  
(ii) $t-k$ generators $\{g_{k+1},\dots,g_{t}\}$ whose corresponding Boolean variables $\{x_{k+1},\dots,x_t\}$ are fixed to $0$, and  
(iii) $r-t$ generators $\{g_{t+1},\dots,g_r\}$ whose corresponding Boolean variables $\{x_{t+1},\dots,x_r\}$ remain free. Then by Eq.~\eqref{eq:bound-nl1-rank}, the rank of the associated parity function $c'$ diverges as $n \to \infty$. Setting $a_i = 1$ for $i = 1,\dots,k$ and $a_i = 0$ otherwise, the derivative \(D_{\mathbf{a}} c\) of the parity function \(c\) along direction \({\mathbf{a}}\)
yields a quadratic form whose coefficient matrix contains that of \(c'\) as a principal submatrix. To see this, suppose that a cubic monomial of the form
\(x_i x_j \cdot \bigoplus_{l\in L_{ij}} x_l\) appears in $c$, where
$t+1 \le i,j \le r$ and $L_{ij} \subseteq \{1,\dots,t\}$. Then the quadratic monomial $x_ix_j$ appears in the derivative $D_{\mathbf{a}} c$ if and only if $x_ix_j$ appears in $c'$, i.e., $\bigoplus_{l\in L_{ij}} x_l \equiv 1 \pmod 2$ under the restriction \(x_1=\cdots=x_k=1\) and \(x_{k+1}=\cdots=x_t=0\). Consequently, $\mathrm{rank}(D_{\mathbf{a}} c)$ diverges as $n \to \infty$.  
Applying Eq.~\eqref{eq:derivative-bound1} then yields 
$\Delta(\mathcal{C}_n,S) \ge \tfrac{1}{2}$.
\end{proof}

Next, using Eq.~\eqref{eq:derivative-bound1}, we derive an upper bound on the classical value of the $(\mathcal{C}_{\mathrm{toric}}(L), S_{\mathrm{toric}})$ stabilizer game, with $S_{\mathrm{toric}}$ being the full stabilizer group of the 2D toric code on an $L\times L$ torus, demonstrating that its asymptotic difference is at least $1/2$.

\begin{theorem}[Upper bound for the toric code]\label{thm:toric-bound}
    Let $(\mathcal{C}_{\mathrm{toric}}(L), S_{\mathrm{toric}})$ be a stabilizer testing game
    constructed from two-dimensional toric code on $L\times L$ torus, with $n=2L^2$ qubits. 
    Then the classical value satisfies
    \begin{equation}
        \label{eq:toric-bound}
        p_{\mathrm{cl}}^*(\mathrm{toric}) \leq \frac{3}{4}+2^{-\lfloor L/2 \rfloor^{2}-2}.
    \end{equation}
\end{theorem}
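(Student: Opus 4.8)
The plan is to apply Proposition~\ref{prop:bound via derivative}(i). Since Eq.~\eqref{eq:derivative-bound1} reads $p_{\mathrm{cl}}^* \le \tfrac34 + 2^{-\tfrac12\max_{\mathbf{a}}\mathrm{rank}(D_{\mathbf{a}}c)-2}$, it suffices to exhibit a single direction $\mathbf{a}\in\mathbb{F}_2^{r}$ (with $r = 2L^2-2$) for which $\mathrm{rank}(D_{\mathbf{a}}c_{\mathrm{toric}}) \ge 2\lfloor L/2\rfloor^2$. I would take $\mathbf{a}$ supported entirely on star (vertex) variables, equal to the indicator of a suitable sublattice $T$ of vertices, and show that differentiating the toric-code parity function along this direction produces, on the plaquette variables, the adjacency matrix of an explicit graph whose $\mathbb{F}_2$-rank is exactly $2\lfloor L/2\rfloor^2$.

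Concretely, starting from Eq.~\eqref{eq:parity_toric} and setting $a_p = 0$ for every plaquette, the cubic term $\sum_e (x_{s_j}+x_{s_j'})x_{p_j}x_{p_j'}$ (the sum over edges $e$, with $s_j,s_j'$ the two endpoint stars and $p_j,p_j'$ the two plaquettes bounded by $e$) differentiates to $\sum_e (a_{s_j}+a_{s_j'})\,x_{p_j}x_{p_j'}$, which is homogeneous quadratic in the plaquette variables; the other cubic term $\sum_e(x_{p_j}+x_{p_j'})x_{s_j}x_{s_j'}$ differentiates to mixed plaquette--star quadratic monomials plus linear terms, and the quadratic term $\sum_{s\cap p\ne\varnothing}x_sx_p$ differentiates to a linear term. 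Thus $D_{\mathbf{a}}c_{\mathrm{toric}}$ has no star--star quadratic monomial, its associated symplectic form has the block shape $\begin{pmatrix}\tilde B_{pp} & B_{ps}\\ B_{ps}^{\mathsf T} & 0\end{pmatrix}$ in the ordering (plaquettes, stars), and hence $\mathrm{rank}(D_{\mathbf{a}}c_{\mathrm{toric}}) \ge \mathrm{rank}(\tilde B_{pp})$ because $\tilde B_{pp}$ is the principal submatrix on the plaquette coordinates; the linear terms generated by $D_{\mathbf{a}}$ are immaterial since the rank in Proposition~\ref{prop:bound via derivative} depends only on the degree-two part.

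It then remains to compute $\mathrm{rank}(\tilde B_{pp})$. The entry of $\tilde B_{pp}$ on a pair of adjacent plaquettes is $1$ iff $\mathbf{a}$ differs on the two endpoints of their shared edge, i.e., iff that edge lies in the cut $\delta(T)$; equivalently $\tilde B_{pp}$ is the adjacency matrix of the graph $\Gamma_T$ on plaquettes formed by the dual edges of $\delta(T)$. I would choose $T$ to be a maximum ``king-independent'' sublattice of vertices, no two at $\ell_\infty$-distance $\le 1$: $T=\{(2i,2j)\}_{0\le i,j< L/2}$ taken mod $L$ for even $L$, and $T=\{(2i,2j)\}_{0\le i,j\le (L-3)/2}$ for odd $L$, so that $|T| = \lfloor L/2\rfloor^2$ and no two vertices of $T$ share a plaquette. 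Then $\delta(T) = \bigsqcup_{v\in T}\delta(\{v\})$ and $\Gamma_T$ is a disjoint union of $\lfloor L/2\rfloor^2$ four-cycles (the plaquette loops encircling the vertices of $T$) plus isolated plaquettes; since the $\mathbb{F}_2$-adjacency matrix of a four-cycle has rank $2$ and ranks add over connected components, $\mathrm{rank}(\tilde B_{pp}) = 2\lfloor L/2\rfloor^2$. Finally, to account for the two redundant generators $\prod_s A_s=\prod_p B_p=\mathbb{I}$: dropping one star variable (chosen outside $T$) does not touch $\tilde B_{pp}$, while dropping one plaquette variable removes one row and column of $\tilde B_{pp}$ — take it to be an isolated plaquette of $\Gamma_T$ when $L$ is odd (there are $L^2-(L-1)^2>0$ of these), and for even $L$ note that deleting one vertex of a four-cycle yields a three-vertex path, still of rank $2$. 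Either way $\mathrm{rank}(\tilde B_{pp})$ on the free plaquettes stays $2\lfloor L/2\rfloor^2$, so $\mathrm{rank}(D_{\mathbf{a}}c_{\mathrm{toric}})\ge 2\lfloor L/2\rfloor^2$ and Eq.~\eqref{eq:derivative-bound1} gives $p_{\mathrm{cl}}^*(\mathrm{toric})\le \tfrac34 + 2^{-\lfloor L/2\rfloor^2-2}$.

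I expect the main obstacle to be the combinatorial bookkeeping: checking that the sublattice $T$ really is king-independent once torus wraparound is included (this is where the floor and the even/odd split enter), that $\Gamma_T$ is \emph{exactly} a disjoint union of four-cycles with no stray extra edges, and that eliminating the redundancy variables does not generate new plaquette--plaquette couplings — so that $\tilde B_{pp}$ is genuinely a principal submatrix of the full symplectic form and the $\mathbb{F}_2$-rank count is tight.
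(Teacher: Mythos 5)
Your proposal is correct and follows essentially the same route as the paper's proof: both choose a derivative direction $\mathbf{a}$ supported on $\lfloor L/2\rfloor^2$ pairwise non-adjacent stars, lower-bound $\mathrm{rank}(D_{\mathbf{a}}c)$ by the rank of the plaquette--plaquette principal submatrix, and observe that this submatrix decomposes into $\lfloor L/2\rfloor^2$ rank-two blocks (four-cycles of plaquettes around each selected star) before invoking Eq.~\eqref{eq:derivative-bound1}. Your explicit verification that eliminating the two redundant generators preserves the rank count (isolated plaquette for odd $L$, four-cycle degrading to a rank-two path for even $L$) is a detail the paper's proof leaves implicit, but it does not change the argument.
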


\begin{proof}

Let $\{\mathbf{e}_i\}_{i=1}^n$ denote the standard basis of $\mathbb{F}_2^n$. We compute derivatives along these basis directions. We label the star generators with indices in $A = \{1,\dots,L^2\}$ and the plaquette generators with indices in $B = \{L^2+1,\dots,2L^2\}$.  
By Eq.~\eqref{eq:parity_toric} and using analogous notation, the derivative along the basis element $\mathbf{e}_i$ with $i \in A$ gives rise to
four plaquette-plaquette terms, eight star-plaquette terms and a linear sum over four plaquettes:
\begin{equation}\label{eq:derivative_cycle_0}
D_{\mathbf{e}_i}c(\mathbf{x}) \;=\; \sum_{\{\{p,p'\} : s_i \cap p \cap p' \neq \emptyset\}} x_p x_{p'} 
\;+\; \sum_{\{\{p,s\}: s_i \cap s \cap p \neq \emptyset\}} x_s x_p + \sum_{\{p : s_i \cap p \neq \emptyset\}} x_p. 
\end{equation}


Similarly, taking the derivative along the basis element $\mathbf{e}_i$ with $i \in B$ gives rise to four star-star terms, eight star-plaquette terms and a linear sum over four stars:
\begin{equation}
\label{eq:derivative_cycle}
D_{\mathbf{e}_i} c(\mathbf{x})
= \sum_{ \{ \{s,s'\} : p_{i-L^2} \cap s \cap s' \neq \emptyset\}} x_s x_{s'} + \sum_{\{\{s,p\}:\\ p_{i - L^2} \cap s \cap p \neq \emptyset\}} x_s x_p + \sum_{\{s : p_{i-L^2}\cap s \neq \emptyset\}} x_s.
\end{equation}
Fig.~\ref{fig:toric_code} illustrates schematically how each of these terms arises.

As shown in Fig.~\ref{fig:toric_code}, we now select a direction $\mathbf{a}$ for the derivative that couples to
every other star. The $\lfloor L/2 \rfloor^{2}$ stars in the support of $\mathbf{a}$ are thus equally spaced and pairwise non-adjacent. Note that the rank of the quadratic form associated with $D_{\mathbf{a}} c$ is bounded below by the rank of the principal submatrix consisting of plaquette-plaquette terms. Each such term contributes rank $2$ to this submatrix. Therefore, the rank of the derivative of the parity function along this direction is at least $2\lfloor L/2 \rfloor^{2}$. Applying Eq.~\eqref{eq:derivative-bound1}, we deduce Eq.~\eqref{eq:toric-bound}.
\end{proof}

The above result can also be understood as a consequence of combining Corollary~\ref{col:corollary1} with the translation invariance and local stabilizer structure of the toric code. A suitable coset $\mathcal{M}$ is obtained by setting $x_i = 1$ on every other star (i.e. on the support of the vector $\mathbf{a}$ in the above proof) and setting all remaining star variables to zero. Then the quadratic terms in the resulting $c'$ are precisely the plaquette-plaquette contributions shown in Fig.~\ref{fig:toric_code}. It follows by Eq. \eqref{eq:bound-nl1-rank} that the corresponding coset satisfies $\Delta = 1$.

\begin{figure}[t]
    \centering
    \includegraphics[width=\linewidth]{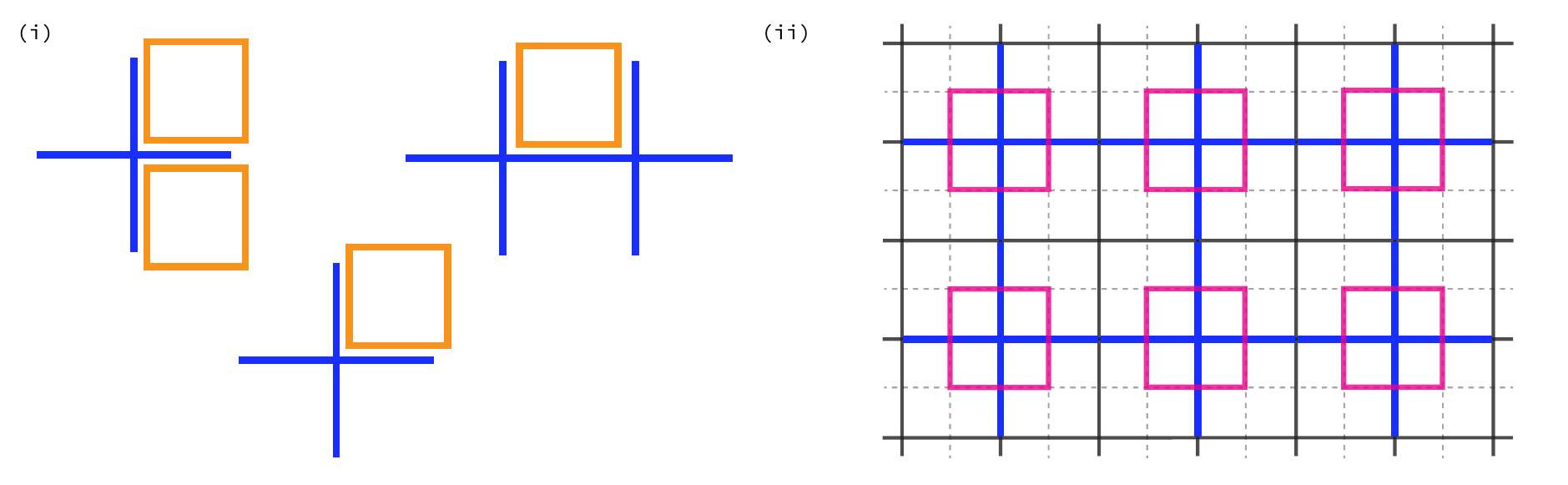}
        
    \caption{Visualizing the adjacency graph of the quadratic form obtained from the Boolean derivative of the toric-code parity function.
    (a) Schematic illustration of the three types of terms contributing to the derivatives in Eqs.~\eqref{eq:derivative_cycle_0} and~\eqref{eq:derivative_cycle}, following the visual conventions of Fig. \ref{fig:toric_code_stabilizers}. Overlaps between triplets of distinct stabilizers give rise to quadratic terms while overlaps between pairs of distinct stabilizers give rise to linear terms in these expressions. (b) Choice of a Boolean derivative direction $\mathbf{a}$ that couples to every other star. The stars in the support of $\mathbf{a}$ form a set of $\lfloor L/2\rfloor^2$ equally spaced, pairwise non-adjacent vertices of the direct lattice, leading to plaquette--plaquette interactions around each star, illustrated in the figure by pink bonds on the dual lattice. The plaquette--plaquette block of the resulting quadratic form has rank at least $2\lfloor L/2\rfloor^2$, leading to the bound in Eq.~(57).}
    \label{fig:toric_code}
\end{figure}

\section{Transfer-matrix bounds for cyclic cluster states}
\label{sec:cluster}
The stabilizer-testing game for cyclic cluster states
\begin{equation}
Z_{i-1} X_i Z_{i+1} |C_n\rangle = |C_n\rangle, \quad i=1,2,\ldots,n,
\end{equation}
on a ring $n+1 \equiv 1$ was first introduced as a Bell inequality in Ref. \cite{guhne2005bell} before its formulation as a nonlocal game in Ref. \cite{Cabello_2013}, and subsequent six-qubit realization on a trapped-ion quantum computer~\cite{Daniel_2022}. Classical values of this stabilizer-testing game have been computed for $n \leq 10$ and it is further known~\cite{guhne2005bell} that $p_{\mathrm{l.h.v.}}^*(G)<1$ for all $n$, but the classical value of this game has not been characterized further for $n>10$~\cite{Daniel_2022}. In this section, we introduce an efficiently computable transfer-matrix expression for the winning probability attained by any given classical strategy for this game and use this to derive asymptotically tight lower and upper bounds on $p_{\mathrm{cl}}^*(G)$ for all $n \geq 3$. We also use this to obtain exact classical values of this game for $3 \leq n \leq 16$ qubits or players via exhaustive search, see Table \ref{tab:cycliccluster} in Appendix \ref{app:numerics}. We note that cyclic cluster states are amenable to a transfer-matrix approach by virtue of their one-dimensional and translationally invariant nature.

Thus let $S$ denote the full stabilizer group of $|C_n\rangle$, $g_j = Z_{j-1}X_jZ_{j+1}$ its generators and let $M = \prod_{j=1}^n g_j^{x_j} \in S$ be a query in the stabilizer-testing game, which we specify by the bit-string $\mathbf{x} \in \{0,1\}^n$ as in Section~\ref{sec:codingbounds}. The corresponding element $c(\mathbf{x})$ of the parity vector $\mathbf{c}$ can be expressed as a cyclic (also known as rotation-symmetric~\cite{cusick2000fastevaluationweightsnonlinearity}) and cubic Boolean function of $\mathbf{x}$, namely~\cite{Daniel_2022}
\begin{equation}
\label{eq:cyclicB}
c({\mathbf{x}}) \equiv R_3(\mathbf{x}) := \sum_{j=1}^n x_{j-1}x_j x_{j+1} \pmod{2}.
\end{equation}
Organizing the columns of the incidence matrix as
\begin{equation}
A^{\mathbbm{1}}_j(\mathbf{x}) = A_{\mathbf{x},4(j-1)+1}, \quad  A^{X}_j(\mathbf{x}) = A_{\mathbf{x},4(j-1)+2}, \quad A^{Y}_j(\mathbf{x}) = A_{\mathbf{x},4(j-1)+3}, \quad A^{Z}_j(\mathbf{x}) = A_{\mathbf{x},4(j-1)+4},
\end{equation}
we can similarly express these as the quadratic Boolean functions
\begin{align}
A^{\mathbbm{1}}_j(\mathbf{x}) &\equiv (1+x_j)(1+x_{j-1}+x_{j+1}) \pmod{2}, \\
A^{X}_j(\mathbf{x}) &\equiv x_j(1+x_{j-1}+x_{j+1}) \pmod{2},\\
A^{Y}_j(\mathbf{x}) &\equiv x_j(x_{j-1}+x_{j+1}) \pmod{2}, \\
A^{Z}_j(\mathbf{x}) &\equiv (1+x_j)(x_{j-1}+x_{j+1}) \pmod{2}.
\end{align}
Let us also write $b^{\mathbbm{1}}_j= b_{4(j-1)+1}$,  $b^X_j = b_{4(j-1)+2}$, $b^Y_j = b_{4(j-1)+3}$, $b^Z_j = b_{4(j-1)+4}$. Using the above expressions and collecting prefactors, we find that
\begin{align}
\label{eq:formulaforAb}
A \mathbf{b} \equiv \sum_{j=1}^n u_jx_j + v_j x_j(x_{j-1}+x_{j+1}) \pmod{2},
\end{align}
where $u_j := b^{\mathbbm{1}}_{j-1} + b^{\mathbbm{1}}_{j}+b^{\mathbbm{1}}_{j+1} + b^Z_{j-1} + b^X_j + b^Z_{j+1}$ and $v_j := b_j^{\mathbbm{1}} + b_j^X+b_j^Y + b_j^Z$.

Now for a given classical strategy $\mathbf{b}$, let $f_n(\mathbf{b})$ denote the difference between the number of satisfied and unsatisfied rows of Eq. \eqref{eq:classconstr}, so that $f_n(\mathbf{b}) = 2^{n} - 2d_{\mathrm{H}}(A\mathbf{b},\mathbf{c})$ and $p_{\mathrm{cl}}(\mathbf{b}) = \frac{1}{2}+\frac{1}{2} \frac{f_n(\mathbf{b})}{2^n}$. Then Eq. \eqref{eq:formulaforAb} implies that
\begin{equation}
\label{eq:fbfull}
f_n(\mathbf{b}) = \sum_{\mathbf{x}\in \{0,1\}^n} \prod_{j=1}^n (-1)^{x_{j-1}x_jx_{j+1}+u_j x_j + v_j x_j(x_{j-1}+x_{j+1})}.
\end{equation}
This expression can be interpreted physically as the partition function for an interacting, disordered one-dimensional classical Ising model evaluated at an imaginary temperature $\beta = i \pi$. This physical analogy suggests a transfer-matrix formulation, and indeed we can write
\begin{equation}
\label{eq:fbtm}
f_n(\mathbf{b}) = \mathrm{tr}\left[\prod_{j=1}^n T(u_j,v_j)\right], \quad T(u,v) =  \begin{pmatrix}
    1 & 1 & 0 & 0 \\ 
    0 & 0 & (-1)^{u} & (-1)^{u+v} \\
    1 & 1 & 0 & 0 \\  0 & 0 & (-1)^{u+v} & (-1)^{u+1}
\end{pmatrix},
\end{equation}
where the definition of the three-site (or ``two-bond'') transfer matrices 
\begin{equation}
T_{x_{j-1}x_j,x_{j}'x_{j+1}}(u_j,v_j) = \delta_{x_j x'_j} (-1)^{x_{j-1}x_jx_{j+1}+u_jx_j +v_j(x_{j-1}+x_{j+1})}
\end{equation}
follows directly from Eq. \eqref{eq:fbfull}. This expression can be simplified using the fact that the matrices $T(u,v)$ always have rank three, via the QR decomposition 
\begin{equation}
\begin{pmatrix} 1 & 1 & 0 & 0 \\
 0 & 0 & (-1)^u & (-1)^{u+v} \\
 1 & 1 & 0 & 0 \\
 0 & 0 & (-1)^{u+v} & (-1)^{u+1}\end{pmatrix} = \frac{1}{\sqrt{2}}\begin{pmatrix} 1 & 0 & 0 \\ 0 & (-1)^u & (-1)^{u+v} \\ 1 & 0 & 0 \\ 0 & (-1)^{u+v} & (-1)^{u+1} \end{pmatrix}\begin{pmatrix} \sqrt{2} & \sqrt{2} & 0 & 0 \\
 0 & 0 & \sqrt{2} & 0 \\
 0 & 0 & 0 & \sqrt{2}
 \end{pmatrix}
\end{equation}
that holds for any $u,v \in \{0,1\}$. Substituting this QR decomposition into Eq. \eqref{eq:fbtm} and applying cyclicity of the trace then yields an expression
\begin{align}
\label{eq:fbrtm}
f_n(\mathbf{b}) =\mathrm{tr}\left[\prod_{j=1}^n \tilde{T}(u_j,v_j)
\right], \quad \tilde{T}(u,v) = \begin{pmatrix}
    1 & (-1)^{u} & (-1)^{u+v} \\ 1 & 0 & 0 \\ 0 & (-1)^{u+v} & (-1)^{u+1}
\end{pmatrix}
\end{align}
in terms of ``reduced'' transfer matrices $\tilde{T}(u,v)$. Since all possible values of $\mathbf{u},\mathbf{v} \in \{\pm{1}\}^{n}$ are attained by the subset of classical strategies with $b_j^{\mathbbm{1}}=0$, we deduce that for this stabilizer-testing game,
\begin{equation}
\label{eq:pclstarcluster}
p^*_{\mathrm{l.h.v.}}(G) = p^*_{\mathrm{cl}}(G) = \frac{1}{2} + \frac{1}{2^{n+1}} \max_{\mathbf{u},\mathbf{v}\in \{0,1\}^{n}} \mathrm{tr}\left[\prod_{j=1}^n \tilde{T}(u_j,v_j)
\right],
\end{equation}
so that a na{\"i}ve optimization over $2^{4n}$ variables reduces to an optimization over $2^{2n}$ variables. Nevertheless, the primary advantage conferred by this transfer-matrix representation is that it allows for an analytical derivation of asymptotically tight upper and lower bounds on $p_{\mathrm{cl}}^*(G)$ for all $n$. First we derive a lower bound that corresponds to the Hamming weight of the cyclic Boolean function Eq. \eqref{eq:cyclicB}, a quantity that was considered in the context of stabilizer-testing games in Ref.~\cite{Cabello_2013} and computed explicitly in Ref.~\cite{cusick2000fastevaluationweightsnonlinearity}. Our transfer-matrix formula yields an elementary derivation of this Hamming weight, as follows.
\begin{prop}
\label{prop:cslowerbound}
Let $\lambda_0 \approx 1.769,\, \lambda_{\pm} \approx -0.885 \pm 0.590i$ denote the three roots of the cubic equation $
\lambda^3 - 2 \lambda - 2 = 0$. Then
\begin{equation}
\label{eq:pcllowerbound}
p_{\mathrm{cl}}^*(G) \geq \frac{1}{2} + \frac{1}{2^{n+1}}\left(\lambda_0^n + \lambda_+^n + \lambda_-^n\right).
\end{equation}
\end{prop}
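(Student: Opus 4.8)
The plan is to derive the bound by evaluating the transfer-matrix trace in Eq.~\eqref{eq:pclstarcluster} at one explicit classical strategy, which lower-bounds the maximum over all of them. The obvious candidate is the trivial strategy $b_j=0$ for every player (equivalently $b_j^{\mathbbm 1}=b_j^X=b_j^Y=b_j^Z=0$ for all $j$), for which the definitions of $u_j$ and $v_j$ given below Eq.~\eqref{eq:formulaforAb} immediately give $u_j=v_j=0$ for every $j$. Since Eq.~\eqref{eq:pclstarcluster} writes $p_{\mathrm{cl}}^*(G)$ as a maximum over $\mathbf{u},\mathbf{v}\in\{0,1\}^n$, restricting to $\mathbf{u}=\mathbf{v}=\mathbf{0}$ yields
\begin{equation}
p_{\mathrm{cl}}^*(G)\;\ge\;\frac{1}{2}+\frac{1}{2^{n+1}}\,\mathrm{tr}\!\left[\tilde{T}(0,0)^n\right].
\end{equation}

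It then remains only to analyze the single $3\times 3$ matrix
\begin{equation}
\tilde{T}(0,0)=\begin{pmatrix}1 & 1 & 1\\ 1 & 0 & 0\\ 0 & 1 & -1\end{pmatrix}.
\end{equation}
A direct cofactor expansion shows that its characteristic polynomial is $\det\!\big(\lambda I-\tilde{T}(0,0)\big)=\lambda^3-2\lambda-2$, whose roots are precisely $\lambda_0\approx 1.769$ and $\lambda_\pm\approx -0.885\pm 0.590\,i$. Because the trace of the $n$-th power of a matrix equals the sum of the $n$-th powers of its eigenvalues, $\mathrm{tr}\!\left[\tilde{T}(0,0)^n\right]=\lambda_0^n+\lambda_+^n+\lambda_-^n$, and substituting into the previous display gives Eq.~\eqref{eq:pcllowerbound}.

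There is no serious obstacle here; the only steps requiring care are checking that the trivial strategy indeed corresponds to $\mathbf{u}=\mathbf{v}=\mathbf{0}$ and verifying the $3\times 3$ determinant. It is worth noting that this bound is exactly the statement, in transfer-matrix language, that the Hamming weight of the cyclic cubic function Eq.~\eqref{eq:cyclicB} is $\mathrm{wt}(R_3)=2^{n-1}-\tfrac12\big(\lambda_0^n+\lambda_+^n+\lambda_-^n\big)$, which recovers the closed form of Ref.~\cite{cusick2000fastevaluationweightsnonlinearity}; this is the ``elementary derivation'' promised just before the proposition. The complementary and substantially harder task of showing that this lower bound is asymptotically tight is handled separately by a matching upper bound on $\max_{\mathbf{u},\mathbf{v}}\mathrm{tr}\!\left[\prod_{j=1}^n\tilde{T}(u_j,v_j)\right]$, which must control the spectral radius of arbitrary disordered products of the reduced transfer matrices rather than a single ordered power.
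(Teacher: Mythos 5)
Your proposal is correct and follows exactly the paper's own argument: restrict to the trivial strategy $\mathbf{b}=\mathbf{0}$ (so $u_j=v_j=0$), and identify $\mathrm{tr}\bigl[\tilde{T}(0,0)^n\bigr]$ as the sum of the $n$-th powers of the roots of the characteristic polynomial $\lambda^3-2\lambda-2$. The explicit verification of $\tilde{T}(0,0)$ and its characteristic polynomial is accurate, so nothing further is needed.
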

\begin{proof}
Consider the classical strategy $\mathbf{b}=0$ that returns zero for all questions. By Eq. \eqref{eq:fbrtm}, this strategy satisfies
\begin{equation}
f_n(\mathbf{0}) = \mathrm{tr}\left[\tilde{T}(0,0)^n\right] = \lambda_0^N + \lambda_+^N + \lambda_-^N,
\end{equation}
where $\{\lambda_0,\lambda_\pm\}$ are the eigenvalues of $\tilde{T}(0,0)$ and therefore correspond to the three distinct roots of its characteristic equation $\lambda^3 - 2\lambda - 2 = 0$.
\end{proof}
However, in order to obtain an experimentally useful Bell inequality for $|C_n\rangle$, we require an upper bound on $p_{\mathrm{cl}}^*(G)$. This is complicated by the fact that $\tilde{T}(u,v)$ is not a normal matrix, so that its singular values do not coincide with its eigenvalues. We first note that a crude (but still asymptotically tight) upper bound comes from the singular-value spectrum of $\tilde{T}$, as follows.
\begin{prop}
Let $s_1 = \sqrt{2+\sqrt{2}}$, $s_2 = \sqrt{2}$ and $s_3 = \sqrt{2-\sqrt{2}}$. Then
\begin{equation}
\label{eq:svupperbound}
p_{\mathrm{cl}}^*(G) \leq \frac{1}{2} + \frac{1}{2^{n+1}} (s_1^n + s_2^n + s_3^n).  
\end{equation}
\end{prop}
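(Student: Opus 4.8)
The plan is to bound the trace appearing in Eq.~\eqref{eq:pclstarcluster} by the singular values of the reduced transfer matrices, exploiting the fact that all four matrices $\tilde T(u,v)$ with $u,v\in\{0,1\}$ share the same singular-value spectrum, together with a classical trace--majorization inequality for products of matrices. First I would compute the singular values of $\tilde T(0,0)$ directly: forming $\tilde T(0,0)^{\mathsf T}\tilde T(0,0)=\bigl(\begin{smallmatrix}2&1&1\\1&2&0\\1&0&2\end{smallmatrix}\bigr)$ and substituting $\nu=\mu-2$ in its characteristic polynomial gives $\nu^{3}-2\nu=\nu(\nu^{2}-2)$, so the eigenvalues of $\tilde T(0,0)^{\mathsf T}\tilde T(0,0)$ are $2$ and $2\pm\sqrt2$, i.e.\ the singular values of $\tilde T(0,0)$ are exactly $s_1=\sqrt{2+\sqrt2}$, $s_2=\sqrt2$, $s_3=\sqrt{2-\sqrt2}$. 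Next I would observe that every $\tilde T(u,v)$ equals $\tilde T(0,0)$ pre- and post-multiplied by diagonal sign matrices: for instance $\tilde T(1,0)=\tilde T(0,0)\,\mathrm{diag}(1,-1,-1)$, $\tilde T(0,1)=\mathrm{diag}(1,1,-1)\,\tilde T(0,0)\,\mathrm{diag}(1,1,-1)$, and $\tilde T(1,1)=\mathrm{diag}(1,1,-1)\,\tilde T(0,0)\,\mathrm{diag}(1,-1,1)$. Since these conjugating matrices are orthogonal, each $\tilde T(u,v)$ has the same singular values $(s_1,s_2,s_3)$ as $\tilde T(0,0)$.

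The final ingredient is the inequality $\bigl|\mathrm{tr}(B_1 B_2\cdots B_n)\bigr|\le\sum_{i=1}^{d}\prod_{j=1}^{n}\sigma_i(B_j)$ for arbitrary $d\times d$ matrices $B_j$, with singular values listed in non-increasing order. I would derive it by applying von Neumann's trace inequality $|\mathrm{tr}(XY)|\le\sum_i\sigma_i(X)\sigma_i(Y)$ with $X=B_1\cdots B_{n-1}$ and $Y=B_n$, then using Horn's log-majorization inequality $\prod_{i=1}^{k}\sigma_i(B_1\cdots B_{n-1})\le\prod_{i=1}^{k}\prod_{j=1}^{n-1}\sigma_i(B_j)$; since weak log-majorization of non-negative decreasing sequences implies weak majorization, and $(\sigma_i(B_n))_i$ is non-negative and decreasing, one gets $\sum_i\sigma_i(B_1\cdots B_{n-1})\,\sigma_i(B_n)\le\sum_i\prod_{j=1}^{n}\sigma_i(B_j)$. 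Applying this with $B_j=\tilde T(u_j,v_j)$ and $\sigma_i(\tilde T(u_j,v_j))=s_i$ for every $j$ yields $\mathrm{tr}\bigl[\prod_{j=1}^{n}\tilde T(u_j,v_j)\bigr]\le s_1^{n}+s_2^{n}+s_3^{n}$ for all $\mathbf u,\mathbf v\in\{0,1\}^{n}$, and Eq.~\eqref{eq:pclstarcluster} then gives Eq.~\eqref{eq:svupperbound}.

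The main obstacle is conceptual rather than computational: because $\tilde T(u,v)$ is not normal (as the surrounding text notes), one cannot replace the transfer matrices by their eigenvalues, and passing to singular values costs tightness at any fixed $n$; the bound nonetheless remains asymptotically tight since $s_1=\sqrt{2+\sqrt2}>\lambda_0$, so $p_{\mathrm{cl}}^*(G)$ still decays to $\tfrac12$ exponentially. A secondary point needing care is that producing the sharp right-hand side $s_1^{n}+s_2^{n}+s_3^{n}$, as opposed to the cruder $3\,s_1^{n}$ obtainable from submultiplicativity of the operator norm and $|\mathrm{tr}(\cdot)|\le 3\,\|\cdot\|$, genuinely requires the Horn--von Neumann majorization machinery rather than just $|\mathrm{tr}(\cdot)|\le\|\cdot\|_1$.
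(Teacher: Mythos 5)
Your proposal is correct and follows essentially the same route as the paper: show that all four reduced transfer matrices $\tilde{T}(u,v)$ share the singular values $(s_1,s_2,s_3)$ (the paper does this by noting $\tilde{T}\tilde{T}^\dagger$ is independent of $u,v$; your diagonal sign-matrix equivalence is an equally valid way to see it), and then bound the trace of the product by $\sum_i s_i^n$ using standard majorization results. The only difference is one of explicitness: where the paper writes $|\Lambda_j|\le\Sigma_j\le s_j^n$ and cites the matrix-analysis literature, you spell out the correct underlying chain (von Neumann's trace inequality, Horn's log-majorization, and the passage from weak log-majorization to weak majorization), which is the precise form in which these inequalities actually hold.
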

\begin{proof}
Note that all the matrices $\{\tilde{T}(u_j,v_j)\}_{j=1}^N$ in Eq. \eqref{eq:fbrtm} have the same singular values, since
\begin{equation}
\tilde{T}(u,v)\tilde{T}(u,v)^\dagger = \begin{pmatrix} 3 & 1 & 0 \\ 1 & 1 & 0 \\ 0 & 0 & 2
\end{pmatrix}
\end{equation}
is independent of $u$ and $v$. This matrix has characteristic equation $(\lambda-2)(\lambda-2+\sqrt{2})(\lambda-2-\sqrt{2})=0$, which implies that the singular values of $\tilde{T}(u,v)$ are $(s_1,s_2,s_3) = (\sqrt{2+\sqrt{2}},\sqrt{2},\sqrt{2-\sqrt{2}})$ for all $u,v\in \{0,1\}$. Now fix an arbitrary classical strategy $\mathbf{b}$,  let $(\Lambda_1,\Lambda_2,\Lambda_3)$ denote the eigenvalues of $\prod_{j=1}^n \tilde{T}(u_j,v_j)$ ordered such that $|\Lambda_1| \geq |\Lambda_2| \geq |\Lambda_3|$ and let $\Sigma_1 \geq \Sigma_2 \geq \Sigma_3$ denote its singular values. Then $|\Lambda_j| \leq \Sigma_j \leq s_j^n$ by standard results in matrix analysis~\cite{Marshall2011}, which yields the uniform upper bound 
\begin{equation}
|f_n(\mathbf{b})| = |\Lambda_1+\Lambda_2+\Lambda_3| \leq s_1^n + s_2^n + s_3^n.
\end{equation}
\end{proof}
Thus our rough singular-value estimate already suffices to capture exponentially fast convergence of $p_{\mathrm{cl}}^*(G)$ to its asymptotic value of $1/2$ as $n \to \infty$. However, the leading asymptotic behaviour $\sim \frac{1}{2}+ \frac{1}{2}(0.923...)^n$ of the upper bound Eq. \eqref{eq:svupperbound} decays slower than the lower bound $\sim \frac{1}{2}+ \frac{1}{2}(0.884...)^n$ which is found empirically to be tight for $4 < n \leq 16$ (see Table \ref{tab:cycliccluster}), suggesting that our singular-value estimate may be looser than the slowest possible decay of the leading eigenvalue $\Lambda_1$ defined above. This indeed turns out to be the case.
\begin{theorem}\label{th2}
Let $\lambda_0 \approx 1.769$ denote the real root of $\lambda^3-2\lambda-2=0$ as in Proposition \ref{prop:cslowerbound}. Then
\begin{equation}
\label{eq:eigenvalueupperbound}
p_{\mathrm{cl}}^*(G) \leq \frac{1}{2} + \frac{1}{2^{n+1}}(3\lambda_0^n).
\end{equation}
\end{theorem}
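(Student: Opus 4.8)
The plan is to start from the transfer-matrix identity Eq.~\eqref{eq:pclstarcluster}, which reduces the theorem to the purely linear-algebraic claim that $\mathrm{tr}\!\left[\prod_{j=1}^n \tilde T(u_j,v_j)\right] \le 3\lambda_0^n$ for every $\mathbf u,\mathbf v \in \{0,1\}^n$. Write $P = \prod_{j=1}^n \tilde T(u_j,v_j)$, a real $3\times 3$ matrix with (complex) eigenvalues $\Lambda_1,\Lambda_2,\Lambda_3$. Since $\mathrm{tr}(P)=\Lambda_1+\Lambda_2+\Lambda_3$ is real, $\mathrm{tr}(P)\le|\mathrm{tr}(P)|\le 3\max_i|\Lambda_i|=3\,\rho(P)$, so it is enough to prove the uniform bound $\rho(P)\le\lambda_0^n$ — equivalently, that the joint spectral radius of the four matrices $\{\tilde T(u,v)\}_{u,v\in\{0,1\}}$ is at most $\lambda_0$. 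This cannot be extracted from the Euclidean operator norm used to prove Eq.~\eqref{eq:svupperbound}, since each $\tilde T(u,v)$ has Euclidean operator norm exactly $s_1=\sqrt{2+\sqrt{2}}>\lambda_0$; a genuinely non-Euclidean (tilted ellipsoidal) norm is needed.

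Accordingly, I would look for a positive-definite matrix $Q$ such that the induced norm $\|x\|_Q=(x^{\mathsf T}Qx)^{1/2}$ makes every $\tilde T(u,v)$ a $\lambda_0$-contraction, i.e. $\tilde T(u,v)^{\mathsf T}Q\,\tilde T(u,v)\preceq\lambda_0^2 Q$ for all four $(u,v)$. Given such a $Q$, since the spectral radius is bounded by any submultiplicative matrix norm one gets $\rho(P)\le\|P\|_Q\le\prod_{j}\|\tilde T(u_j,v_j)\|_Q\le\lambda_0^n$, which closes the argument. To construct $Q$ I would exploit the two diagonal sign-symmetries of the transfer-matrix family, $\tilde T(u,1)=D\,\tilde T(u,0)\,D$ with $D=\mathrm{diag}(1,1,-1)$ and $\tilde T(1,v)=\tilde T(0,v)\,E$ with $E=\mathrm{diag}(1,-1,-1)$ (both easily checked from the explicit matrices): for a \emph{diagonal} $Q$, which commutes with $D$ and $E$, the four matrix inequalities collapse to the single inequality $\tilde T(0,0)^{\mathsf T}Q\,\tilde T(0,0)\preceq\lambda_0^2 Q$.

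Because $\rho(\tilde T(0,0))=\lambda_0$, this last inequality is forced to be tight along the dominant right eigenvector $w_0$ of $\tilde T(0,0)$, which means $Qw_0$ must be parallel to the dominant left eigenvector $\ell_0$. Solving this proportionality, using the defining relation $\lambda_0^3=2\lambda_0+2$, pins down $Q=\mathrm{diag}\!\left(1,\ \lambda_0^2-\lambda_0,\ \lambda_0\right)$, which is positive definite since $\lambda_0>1$. It then remains to verify $M:=\lambda_0^2 Q-\tilde T(0,0)^{\mathsf T}Q\,\tilde T(0,0)\succeq0$; by construction $w_0\in\ker M$, so $\det M=0$, and a short computation (again reducing everything via $\lambda_0^3=2\lambda_0+2$) gives $M_{11}=\lambda_0-1>0$ and leading $2\times 2$ minor $2\lambda_0-6/\lambda_0>0$ (since $\lambda_0<2$), so the two nonzero eigenvalues of $M$ are positive and $M\succeq0$. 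Combining the pieces and substituting back into Eq.~\eqref{eq:pclstarcluster} yields the claimed bound.

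The step I expect to be the crux is the \emph{choice} of $Q$: one first has to recognize that the Euclidean/singular-value norm is provably too weak, and then identify the shape of the tilted norm. The symmetry reduction (four matrix inequalities to one, with $Q$ diagonal) together with the tightness condition at $w_0$ does essentially all of this work, after which only a routine positive-semidefiniteness check remains; every other step — submultiplicativity, $\mathrm{tr}(P)\le 3\rho(P)$, and the final substitution — is immediate.
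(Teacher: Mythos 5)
Your proposal is correct, and it takes a genuinely different route from the paper. Both arguments reduce, via Eq.~\eqref{eq:pclstarcluster} and $\mathrm{tr}(P)\le 3\rho(P)$, to certifying that the joint spectral radius of $\{\tilde T(u,v)\}$ is at most $\lambda_0$; the paper does this with Protasov's invariant-polytope theorem, exhibiting a $24$-vertex symmetric polytope $P$ with $\lambda_0 P=\mathrm{Conv}(\Sigma P)$, whereas you construct an explicit ellipsoidal extremal norm, i.e.\ a diagonal $Q\succ 0$ with $\tilde T(u,v)^{\mathsf T}Q\,\tilde T(u,v)\preceq\lambda_0^2Q$. I have checked your key computations: the sign-symmetries $\tilde T(u,1)=D\tilde T(u,0)D$ and $\tilde T(1,v)=\tilde T(0,v)E$ hold and do collapse the four LMIs to one for any diagonal $Q$; the tightness condition $Qw_0\parallel\ell_0$ (with $w_0=(\lambda_0^2+\lambda_0,\,1+\lambda_0,\,1)$ and $\ell_0=(1+\lambda_0,\,\lambda_0^2-1,\,1)$) does pin down $Q=\mathrm{diag}(1,\lambda_0^2-\lambda_0,\lambda_0)$; and $M=\lambda_0^2Q-\tilde T(0,0)^{\mathsf T}Q\tilde T(0,0)$ indeed annihilates $w_0$ and has $M_{11}=\lambda_0-1>0$ and leading $2\times2$ minor $2\lambda_0-6/\lambda_0>0$ (using $\lambda_0^3=2\lambda_0+2$ and $\lambda_0>\sqrt3$). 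The only step you should spell out is why these facts give $M\succeq0$: Sylvester's criterion for leading minors does not directly certify positive \emph{semi}definiteness, so invoke Cauchy interlacing — the positive-definite leading $2\times2$ block forces the two largest eigenvalues of $M$ to be positive, and $\det M=0$ then forces the third to vanish. What your approach buys is a more self-contained and easily verifiable certificate (one $3\times3$ PSD check versus mapping $24$ vertices through four matrices), at the cost of only proving the upper bound $\hat\rho(\Sigma)\le\lambda_0$ rather than the exact value; but the upper bound is all the theorem needs, and the matching lower bound is free from $\rho(\tilde T(0,0))=\lambda_0$. Your opening observation that the Euclidean norm is provably too weak (each $\tilde T$ has operator norm $\sqrt{2+\sqrt2}>\lambda_0$) correctly identifies why a tilted norm is unavoidable.
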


The proof of Theorem \ref{th2} relies on the following definitions and results:

\begin{defn}\emph{(Joint spectral radius)}
Let $\rho(\cdot)$ denote the usual spectral radius of a matrix. Two closely related definitions of the joint spectral radius of a finite set of matrices $\Sigma$ are
\begin{equation}
\rho_t(\Sigma) :=
 \max\limits_{(T_1,...,T_t) \in \Sigma^t} \left\{ \left(\rho\left( \prod_{i=1}^t T_i\right)\right)^{1/t}\right\}
 \end{equation}
and
\begin{equation}
\hat{\rho}(\Sigma) :=
 \lim\limits_{t\rightarrow \infty} \rho_t(\Sigma).
 \end{equation}
\end{defn}
We have
\begin{prop}\label{prop3}\emph{\citep[Proposition 1.6]{jungers2009joint}}
For any $t>0$, $\rho_t(\Sigma) \leq \hat{\rho}(\Sigma)$.
\end{prop}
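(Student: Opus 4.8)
The plan is to exploit the elementary identity $\rho(M^k) = \rho(M)^k$ for the ordinary spectral radius, together with the observation that the $k$-th power of a length-$t$ product of matrices from $\Sigma$ is itself a length-$tk$ product of matrices from $\Sigma$. This lets me compare $\rho_t$ with $\rho_{tk}$ and then pass to the limit $k \to \infty$ to recover $\hat{\rho}$.

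First I would fix an arbitrary positive integer $t$ and an arbitrary tuple $(T_1,\ldots,T_t) \in \Sigma^t$, writing $P = \prod_{i=1}^t T_i$. For any positive integer $k$, the matrix $P^k = (T_1 \cdots T_t)^k$ is a product of exactly $tk$ matrices drawn from $\Sigma$ (namely the block $T_1,\ldots,T_t$ repeated $k$ times). Hence $P^k$ is one of the products over which the maximum defining $\rho_{tk}(\Sigma)$ is taken, so
\[
\rho_{tk}(\Sigma) \;\geq\; \left(\rho(P^k)\right)^{1/(tk)}.
\]
Applying $\rho(P^k) = \rho(P)^k$ to the right-hand side gives $\left(\rho(P)^k\right)^{1/(tk)} = \rho(P)^{1/t}$, and therefore $\rho_{tk}(\Sigma) \geq \rho(P)^{1/t}$.

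Next I would maximize this bound over all tuples $(T_1,\ldots,T_t) \in \Sigma^t$. Since the right-hand side $\rho(P)^{1/t}$ ranges over exactly the quantities whose maximum defines $\rho_t(\Sigma)$, while the left-hand side $\rho_{tk}(\Sigma)$ does not depend on the chosen tuple, I obtain $\rho_{tk}(\Sigma) \geq \rho_t(\Sigma)$ for every $k \geq 1$. Finally, letting $k \to \infty$ and using that $\{\rho_{tk}(\Sigma)\}_{k \geq 1}$ is a subsequence of the convergent sequence $\{\rho_m(\Sigma)\}_{m \geq 1}$ defining $\hat{\rho}(\Sigma)$, and hence converges to the same limit, yields
\[
\hat{\rho}(\Sigma) \;=\; \lim_{k\to\infty} \rho_{tk}(\Sigma) \;\geq\; \rho_t(\Sigma),
\]
which is the claim.

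The only real subtlety — and the step I would be most careful about — is the passage to the limit, which relies on the convergence of the full sequence $\rho_m(\Sigma)$ to $\hat{\rho}(\Sigma)$, so that the subsequence indexed by multiples of $t$ shares the same limit. This convergence is built into the definition of $\hat{\rho}$ adopted here, so no additional argument is needed; the substantive content of the proof is entirely contained in the power trick $P^k \in \Sigma^{tk}$ combined with $\rho(P^k) = \rho(P)^k$, everything else being a maximization and a subsequence limit.
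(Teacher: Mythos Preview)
The paper does not supply its own proof of this proposition; it is quoted directly from \cite[Proposition~1.6]{jungers2009joint}. Your argument is correct and is essentially the standard one: the power trick $P^k\in\Sigma^{tk}$ together with $\rho(P^k)=\rho(P)^k$ yields $\rho_{tk}(\Sigma)\ge\rho_t(\Sigma)$ for every $k\ge 1$, and since the paper \emph{defines} $\hat\rho(\Sigma)$ as the limit $\lim_{m\to\infty}\rho_m(\Sigma)$ (rather than as a $\limsup$ or via norms), the passage to the subsequence limit goes through without further justification, exactly as you observe.
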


We need the following to calculate the joint spectral radius $\hat{\rho}(\Sigma)$:
\begin{defn}\emph{(Invariant polytope)}
An invariant polytope for a set of matrices $\Sigma$ is a convex polytope $P$ such that $r P = \mathrm{Conv}(\Sigma P)$ for some $r>0$.
\end{defn}
and we have 
\begin{prop}\label{prop4}\emph{\citep[Theorem 1]{protasov2005geometric}}
Suppose $\Sigma$ is a finite matrix set whose invariant polytope $P$ satisfies $r P = \mathrm{Conv}(\Sigma P)$. Then $\hat{\rho}(\Sigma) = r$.
\end{prop}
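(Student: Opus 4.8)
The plan is to bound the trace appearing in the transfer-matrix formula Eq.~\eqref{eq:pclstarcluster} by $3\lambda_0^n$ using the joint spectral radius of the four reduced transfer matrices $\Sigma = \{\tilde T(u,v) : u,v \in \{0,1\}\}$. The opening step is elementary. For any product $P_n = \prod_{j=1}^n \tilde T(u_j,v_j)$ with eigenvalues $\Lambda_1,\Lambda_2,\Lambda_3$, the trace is real and obeys $|\mathrm{tr}(P_n)| \le |\Lambda_1|+|\Lambda_2|+|\Lambda_3| \le 3\rho(P_n)$. By the definition of $\rho_n(\Sigma)$ we have $\rho(P_n)\le \rho_n(\Sigma)^n$, and Proposition~\ref{prop3} gives $\rho_n(\Sigma)\le \hat\rho(\Sigma)$. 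Hence $\max_{\mathbf u,\mathbf v}\mathrm{tr}(P_n)\le 3\,\hat\rho(\Sigma)^n$, and the entire theorem reduces to the single claim $\hat\rho(\Sigma)=\lambda_0$; substituting this into Eq.~\eqref{eq:pclstarcluster} yields Eq.~\eqref{eq:eigenvalueupperbound} immediately.

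The lower bound $\hat\rho(\Sigma)\ge\lambda_0$ is free: by Proposition~\ref{prop3} at $t=1$, $\hat\rho(\Sigma)\ge \rho_1(\Sigma)\ge \rho(\tilde T(0,0))=\lambda_0$, since $\lambda_0$ is the strictly dominant root of the characteristic polynomial $\lambda^3-2\lambda-2$ of $\tilde T(0,0)$, the remaining roots $\lambda_\pm$ obeying $|\lambda_\pm|<\lambda_0$ by Proposition~\ref{prop:cslowerbound}. The substance is the matching upper bound, which I would establish through the invariant-polytope criterion of Proposition~\ref{prop4}. Concretely, I would exhibit a centrally symmetric convex polytope $P=-P$ satisfying $\tilde T(u,v)\,P\subseteq\lambda_0 P$ for all four pairs $(u,v)$, i.e.\ $\lambda_0 P=\mathrm{Conv}(\Sigma P)$; Proposition~\ref{prop4} then certifies $\hat\rho(\Sigma)=\lambda_0$. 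Equivalently, the polytope norm $\|\cdot\|_P$ with unit ball $P$ would satisfy $\|\tilde T(u,v)\|_P\le\lambda_0$ for each matrix, whence $\rho(P_n)\le\|P_n\|_P\le\lambda_0^n$, bypassing the joint-spectral-radius machinery altogether if desired.

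To construct $P$ I would run the Guglielmi--Protasov invariant-polytope algorithm seeded by the spectrum-maximizing product. Since all four $\tilde T(u,v)$ are pairwise conjugate by diagonal sign matrices (e.g.\ $\tilde T(0,1)=D\,\tilde T(0,0)\,D^{-1}$ with $D=\mathrm{diag}(1,1,-1)$), they share the spectral radius $\lambda_0$, and each is an s.m.p.\ of length one. I would take the leading real eigenvector $\mathbf v_0$ of $\tilde T(0,0)$ (and, if needed, the conjugate images giving the leading eigenvectors of the other three) as initial vertices, then iteratively apply the normalized maps $\tilde T(u,v)/\lambda_0$, appending any image lying strictly outside the symmetric convex hull of the vertices produced so far. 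Because $\lambda_0$ is real, positive, simple, and strictly dominant, the procedure should terminate after finitely many steps, yielding a finite vertex set whose hull $P$ satisfies the invariance relation above.

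The main obstacle is proving termination of this construction: one must verify that the orbit of $\mathbf v_0$ under the normalized matrices generates only finitely many extremal vertices and that no product of the $\tilde T(u,v)$ has normalized spectral radius exceeding $\lambda_0$ (so that $\tilde T(0,0)$ is genuinely spectrum-maximizing rather than being beaten by some mixed word). This is a finite but delicate computation, and the non-normality of $\tilde T(u,v)$ is precisely what forces a polytopic norm: the gap between the singular-value estimate Eq.~\eqref{eq:svupperbound}, which decays only as $(0.923\ldots)^n$, and the true rate $\lambda_0/2=0.884\ldots$ shows that any ellipsoidal (singular-value) certificate is strictly too weak, whereas the invariant polytope closes the gap to the eigenvalue bound $3\lambda_0^n$.
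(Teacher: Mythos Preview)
Your proposal does not address the statement you were asked to prove. Proposition~\ref{prop4} is a general result from the literature (cited as \cite[Theorem~1]{protasov2005geometric}) asserting that any finite matrix set admitting an invariant polytope $P$ with $rP=\mathrm{Conv}(\Sigma P)$ has joint spectral radius $\hat\rho(\Sigma)=r$. The paper does not prove this result; it merely invokes it. Your write-up does not prove it either: you explicitly cite Proposition~\ref{prop4} as a black box and then use it. What you have actually written is a proof sketch for Theorem~\ref{th2} (together with Proposition~\ref{prop5}), not for Proposition~\ref{prop4}.

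Viewed as an attempt at Theorem~\ref{th2}, your reduction to $\hat\rho(\Sigma)=\lambda_0$ and the trace bound $|\mathrm{tr}(P_n)|\le 3\rho(P_n)\le 3\rho_n(\Sigma)^n\le 3\hat\rho(\Sigma)^n$ are exactly what the paper does. The difference is in Proposition~\ref{prop5}: the paper \emph{exhibits} the invariant polytope explicitly, giving closed-form vertices $P_1,P_2,P_3$ (Eq.~\eqref{p1p2p3}) and their sign-reflections (24 vertices total), and states that $\lambda_0 P=\mathrm{Conv}(\Sigma P)$ holds by direct calculation. You instead propose to \emph{discover} the polytope by running the Guglielmi--Protasov algorithm, and you correctly flag termination as the crux. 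That is a reasonable plan for how the paper's polytope was presumably found, but as a proof it is incomplete: you never produce the polytope, never verify the invariance relation, and never confirm that $\tilde T(0,0)$ is spectrum-maximizing. The paper's explicit construction closes precisely this gap.
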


Below we apply Proposition \ref{prop4} to the problem at hand.
\begin{prop}\label{prop5}
Define $\Sigma := \{\tilde{T}(u,v)\}_{u,v = \pm 1}$. We have $\hat{\rho}(\Sigma) = \lambda_0$.
\end{prop}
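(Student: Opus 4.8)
The plan is to sandwich $\hat\rho(\Sigma)$: the lower bound $\hat\rho(\Sigma)\ge\lambda_0$ is immediate, while the upper bound $\hat\rho(\Sigma)\le\lambda_0$ is obtained by exhibiting an invariant polytope and invoking Proposition~\ref{prop4}.

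For the lower bound, the same direct expansion that produced the characteristic polynomial $\lambda^3-2\lambda-2$ of $\tilde T(0,0)$ shows $\rho(\tilde T(0,0))=\lambda_0$: the remaining two roots $\lambda_\pm$ are complex conjugates with $\lambda_0\lambda_+\lambda_-=2$, so $|\lambda_\pm|=\sqrt{2/\lambda_0}<\lambda_0$. Since $\tilde T(0,0)\in\Sigma$ we have $\rho_1(\Sigma)\ge\rho(\tilde T(0,0))=\lambda_0$, and Proposition~\ref{prop3} with $t=1$ gives $\hat\rho(\Sigma)\ge\rho_1(\Sigma)\ge\lambda_0$.

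For the upper bound I would apply the invariant-polytope construction to the rescaled family $\hat\Sigma:=\{\lambda_0^{-1}\tilde T(u,v)\}_{u,v\in\{0,1\}}$, for which the Perron eigenvector $v_0\propto(\lambda_0^2+\lambda_0,\ \lambda_0+1,\ 1)$ of $\lambda_0^{-1}\tilde T(0,0)$ is a fixed point. Because $\lambda_0$ is a simple eigenvalue of $\tilde T(0,0)$ that strictly dominates $|\lambda_\pm|$, the standard orbit construction applies: starting from the centrally symmetric seed $\mathcal V_0=\{\pm v_0\}$, build $\mathcal V_{m+1}=\mathcal V_m\cup\{\pm\hat T w:\hat T\in\hat\Sigma,\ w\in\mathcal V_m\}$, at each step deleting any new point already lying in $\mathrm{Conv}(\mathcal V_m)$. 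The claim is that this terminates after finitely many rounds with a finite set $\mathcal V$ such that $P:=\mathrm{Conv}(\mathcal V)$ is full-dimensional in $\mathbb R^3$ and $\hat T P\subseteq P$ for every $\hat T\in\hat\Sigma$, i.e. $\mathrm{Conv}(\Sigma P)=\lambda_0 P$. Proposition~\ref{prop4} then forces $\hat\rho(\Sigma)=\lambda_0$, which together with the lower bound proves the claim. (This is exactly what Theorem~\ref{th2} requires: from $\hat\rho(\Sigma)=\lambda_0$ and Proposition~\ref{prop3}, $\rho\!\left(\prod_{j}\tilde T(u_j,v_j)\right)\le\lambda_0^n$ for every classical strategy, hence $|f_n(\mathbf b)|=|\Lambda_1+\Lambda_2+\Lambda_3|\le 3\lambda_0^n$.)

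The remaining work is the finite verification that $P$ is invariant. Concretely: enumerate the vertices of $P$ with coordinates in the number field $\mathbb Q(\lambda_0)$, and for each of the four matrices $\tilde T(u,v)$ and each vertex $v_i$ confirm $\lambda_0^{-1}\tilde T(u,v)v_i\in P$ via a small linear-feasibility check over $\mathbb Q(\lambda_0)$. The main obstacle I anticipate is proving \emph{termination} of the orbit construction and pinning down the exact vertex set: one must rule out that iterating $\hat\Sigma$ on $v_0$ generates an ever-growing family of extreme points, and must ensure no ``leaf'' of the orbit tree escapes the candidate polytope (mere approximate invariance is not enough). If the seed $\{\pm v_0\}$ does not close up directly, the usual remedy---enlarging the seed by a handful of short products $\lambda_0^{-k}\tilde T(u_1,v_1)\cdots\tilde T(u_k,v_k)$ and restarting---would be used; once a genuinely invariant $P$ is in hand, the containment checks are mechanical.
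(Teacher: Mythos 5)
Your strategy is the same as the paper's---sandwich $\hat\rho(\Sigma)$ using the trivial lower bound from $\tilde T(0,0)$ and an invariant polytope for the upper bound---but the proposal stops exactly where the real work begins. The entire content of the paper's proof is the explicit exhibition of the invariant polytope: it writes down three points $P_1,P_2,P_3\in\mathbb{R}^3$ with coordinates in $\mathbb{Q}(\lambda_0)$ (Eq.~\eqref{p1p2p3}), takes the symmetric convex hull of these points together with their reflections about the three coordinate planes (24 vertices in total, Fig.~\ref{invariant_polytope}), and verifies directly that $\lambda_0 P=\mathrm{Conv}(\Sigma P)$, at which point Proposition~\ref{prop4} gives $\hat\rho(\Sigma)=\lambda_0$ outright (making your separate lower-bound argument unnecessary, though harmless). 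You correctly identify that the orbit construction seeded at the Perron eigenvector might not terminate and that ``mere approximate invariance is not enough,'' but you then leave precisely this issue unresolved; a statement of the form ``the claim is that this terminates after finitely many rounds'' is not a proof, and the dominance of $\lambda_0$ over $|\lambda_\pm|$ for the single matrix $\tilde T(0,0)$ does not by itself guarantee termination for the four-matrix family (invariant-polytope algorithms can fail to terminate even when the joint spectral radius is attained by a single matrix). Note also that your seed $\{\pm v_0\}$ with $v_0\propto(\lambda_0^2+\lambda_0,\ \lambda_0+1,\ 1)$ is, up to normalization, the paper's $P_1$, so your orbit would have to be run and closed off to recover $P_2$, $P_3$ and the reflection symmetry; without carrying that computation through and certifying exact (not numerical) containment of $\lambda_0^{-1}\tilde T(u,v)\,P_i$ in $P$ for all four matrices and all 24 vertices, the upper bound $\hat\rho(\Sigma)\le\lambda_0$ is not established.

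In short: right method, but the proposal defers the one step that constitutes the proof. To close the gap you would need to either reproduce the explicit 24-vertex polytope of Eq.~\eqref{p1p2p3} and perform the finite exact verification over $\mathbb{Q}(\lambda_0)$, or supply an independent argument that your orbit algorithm terminates with an exactly invariant polytope.
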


\begin{proof}

\begin{figure}[t]
\centering
\includegraphics[width=0.45\textwidth]{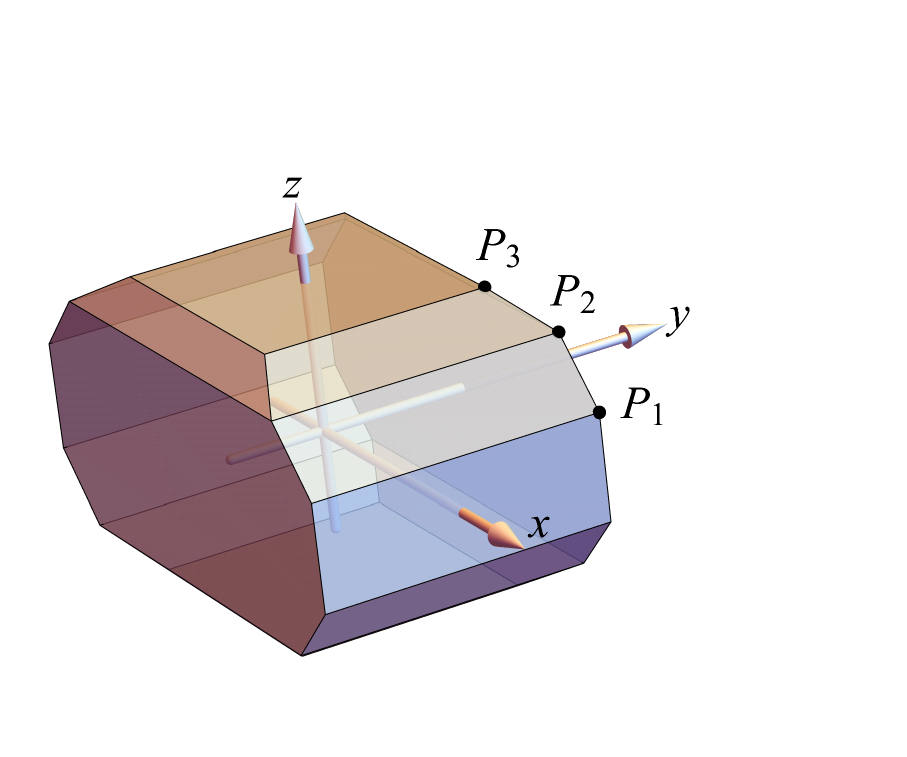}
\caption{Invariant polytope $P$ defined by Eq.~\eqref{p1p2p3}.}\label{invariant_polytope}
\end{figure}

We prove this by explicitly constructing the invariant polytope $P$ of $\Sigma$. Define $s(\lambda_0) := \sqrt{4 \lambda_0 ^2+8 \lambda_0 +6}$, and points $P_{1,2,3} \in \mathbb{R}^3$:
\begin{equation}\label{p1p2p3}
\begin{aligned}
P_1 &= \frac{1}{s( \lambda_0)}\left(  \lambda_0  ( \lambda_0 +1),  \lambda_0 +1, 1\right),\\
P_2 &= \frac{1}{s( \lambda_0)} \left( \lambda_0 +2,  \lambda_0 +1, \frac{ \lambda_0 +2}{ \lambda_0 }\right),\\
P_3 &= \frac{1}{s( \lambda_0)} \left( \frac{2 \left( \lambda_0 ^2+ \lambda_0 -1\right)}{ \lambda_0 ^2}, \frac{ \lambda_0 +2}{ \lambda_0 },  \lambda_0 +1 \right).
\end{aligned}
\end{equation}
By direct calculation, one can verify that a symmetric convex polytope $P$ spanned by $P_i \in \{P_1,P_2,P_3\}$ and their mirror reflections about $xy$, $xz$, and $yz$ planes (so 24 vertices in total; see Fig.~\ref{invariant_polytope}) is the invariant polytope for $\Sigma$, with $\lambda_0 P = \mathrm{conv}(\Sigma P)$. By Proposition \ref{prop3}, $\hat{\rho}(\Sigma) =  \lambda_0$. 
\end{proof}

With these we can finally prove Theorem \ref{th2}.
\begin{proof}[Proof of Theorem~\ref{th2}]
We have $p^*_{\mathrm{cl}}(G) = \frac{1}{2} + \frac{1}{2^{n+1}}f(\mathbf{b})
=
\frac{1}{2} +\frac{1}{2^{n+1}} \mathrm{tr} \left[ \prod_{j=1}^n \tilde{T}(u_j,v_j)\right] \leq 
\frac{1}{2} +\frac{3}{2^{n+1}}  (\rho_n(\Sigma))^n
\leq
\frac{1}{2} +\frac{3}{2^{n+1}}  \lambda_0^n
$, where the first inequality follows from the definition of the joint spectral radius, and the second from Propositions \ref{prop3} and \ref{prop5}.
\end{proof}
 
\begin{figure}[t]
    \centering
\includegraphics[width=0.7\linewidth]{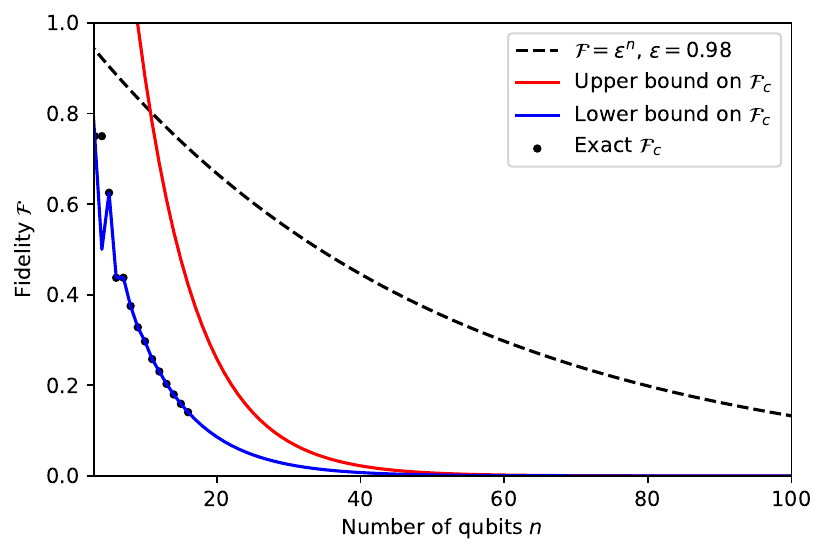}
    \caption{Comparison between our rigorous bound, Eq. \eqref{eq:clusterstatefidelity}, on the fidelity $\mathcal{F} > \mathcal{F}_c = 2p_{\mathrm{cl}}^*-1$ to the cyclic cluster state required to witness contextuality (red line) and a plausible~\cite{Iqbal_2024,kumar2025quantumclassicalseparationboundedresourcetasks} estimate of the global fidelity $\mathcal{F} = \epsilon^n,\, \epsilon = 0.98$ to the $n$-qubit cluster state attainable on state-of-the-art quantum devices (black dotted line). The broad region above the red curve and below the black dotted line suggests that witnessing contextuality of the cluster state in many-body regimes of $n=50-100$ qubits should be feasible on current quantum devices. For completeness, we include the rigorous lower bound Eq. \eqref{eq:pcllowerbound} (blue line) and exact numerical results for $n \leq 16$ (black dots and Table \ref{tab:cycliccluster}) for $\mathcal{F}_c = 2p_{\mathrm{cl}}^*-1$ that suggest tightness of this lower bound.}
    \label{fig:experiment}
\end{figure}

Our results demonstrate that the classical value of the stabilizer-testing game for the cyclic cluster state satisfies $p_{\mathrm{cl}}^*(G) - \frac{1}{2} = \mathcal{O}(0.885^n)$ as $n \to \infty$, yielding behaviour that is qualitatively similar to but quantitatively distinct from the usual Mermin-Brassard-Broadbent-Tapp parity game for the GHZ state~\cite{mermin1990extreme,brassard2004recasting}. Based on our numerical results in Table \ref{tab:cycliccluster}, we conjecture that the lower bound in Eq. \eqref{eq:pcllowerbound} is in fact tight for all $n>4$.

Finally, we note that the upper bound Eq. \eqref{eq:eigenvalueupperbound} implies by Eq. \eqref{eq:fidelityformula} that an exponentially small fidelity
\begin{equation}
\label{eq:clusterstatefidelity}
\mathcal{F} > 3(\lambda_0/2)^n = 3(0.884...)^n
\end{equation}
to the cyclic cluster state will suffice to witness its contextuality. This is remarkable given that global fidelities attainable for state preparation on state-of-the-art quantum devices are reported~\cite{Iqbal_2024,kumar2025quantumclassicalseparationboundedresourcetasks} to be as high as $\mathcal{F} \approx \epsilon^n$ with $\epsilon \approx 0.98-0.99$ for tens of qubits. The implication of Eq. \eqref{eq:clusterstatefidelity} is that a fidelity-per-qubit $\epsilon \gtrsim 0.89$ will suffice to witness contextuality of the cyclic cluster state for sufficiently large $n$, though larger values of $\epsilon$ would be required to achieve this for currently accessible numbers of qubits. Making an estimate $\epsilon = 0.98$ slightly smaller than the fidelity-per-qubit $\epsilon \gtrsim 0.984$ reported for preparing a highly non-trivial 27-qubit wavefunction in Ref.~\cite{Iqbal_2024} (including readout error) and consistent with the fidelity reported for a 71-qubit GHZ state in Fig. S6 of Ref.~\cite{kumar2025quantumclassicalseparationboundedresourcetasks}, we predict that witnessing contextuality of the cyclic cluster state should be feasible on state-of-the-art quantum devices of $n=50-100$ physical qubits, see Fig. \ref{fig:experiment}. 

We emphasize that the statistical error in estimating the fidelity by playing the game repeatedly with freshly prepared cluster states and using Eq. \eqref{eq:fidelityformula} is determined by the number of times the game is played, and is therefore independent of $n$ whenever the experimentally measured fidelity is clearly distinguishable from the exponentially small threshold in Eq. \eqref{eq:clusterstatefidelity}, $\mathcal{F} \gg 3(0.884...)^n$. In this intermediate regime of system sizes, we expect contextuality of the cyclic cluster state to be robust enough to be witnessed experimentally. Fig. \ref{fig:experiment} indicates that this regime is within reach.

Note that from an experimental viewpoint, the stabilizer-testing game for the cluster state is more robust than that for the GHZ state, for which a fidelity $\mathcal{F} > \frac{1}{2}$ is required to witness contextuality, in contrast to the exponentially small number in Eq. \eqref{eq:clusterstatefidelity}. Similarly, the quantum probability of winning the standard parity game~\cite{mermin1990extreme,brassard2004recasting} by applying its default protocol to an arbitrary state is penalized by the fidelity of that state to the odd-parity GHZ state~\cite{bulchandani2023playing}, because this game queries only a subset of the full stabilizer group of the GHZ state, see Eq. \eqref{eq:fidelityformulaGHZ}. No such penalty appears in Eq. \eqref{eq:fidelityformula}.

\section{Connection to measurement-based quantum computing}
\label{sec:MBQC}
We now elucidate the connection between our results and measurement-based quantum computation (MBQC)~\cite{raussendorf2001one}. Recall that a measurement-based quantum computation consists of a quantum resource state (typically a cluster state) to which a sequence of (typically adaptive) measurements are applied. However, deciding which measurement bases to use at any given time step based on previous measurement outcomes, as well as converting measurement outcomes into computational output, requires access to a classical control computer that performs classical side-processing. In the standard formulation of MBQC known as ``l2-MBQC''~\cite{Anders_2009,PhysRevA.88.022322}, all such classical side-processing is constrained to be linear modulo two. It was shown previously~\cite{Hoban_2011} that l2-MBQC is capable of computing any Boolean function $f : \{0,1\}^n \to \{0,1\}$ on $n$ input bits (albeit with a worst-case overhead of $2^{n}-1$ qubits). It is also known that an l2-MBQC that evaluates any non-linear Boolean function $f$ with success probability~\cite{PhysRevA.88.022322}
\begin{equation}
\label{eq:Rausseq}
p_{\mathrm{succ}} > 1-\frac{\nu(f)}{2^n}    
\end{equation}
is contextual, where $\nu(f)$  denotes the Hamming distance from $f$ to the nearest linear Boolean function. For the specific example of bent functions and even $n$, this contextuality threshold is an exponentially small improvement over random guessing, namely~\cite{PhysRevA.88.022322}
\begin{equation}
\label{eq:expsmallbent}
p_{\mathrm{succ}} > \frac{1}{2}+\frac{1}{2^{n/2+1}} = \frac{1}{2} + \frac{1}{2}(0.707...)^n.
\end{equation}

Our results connect to this understanding of MBQC in two different ways. First, our analysis of the cyclic cluster state in Section \ref{sec:cluster} suggests that computing the cyclic, cubic Boolean function $R_3(\mathbf{x})$ specified in Eq. \eqref{eq:cyclicB} in l2-MBQC yields qualitatively similar behaviour to Eq. \eqref{eq:expsmallbent}. To see this, let $R_3(\mathbf{x})$ be the cyclic cubic Boolean function on $n$ bits defined in Eq. \eqref{eq:cyclicB}. It was conjectured~\cite{cusick2000fastevaluationweightsnonlinearity} and later proved~\cite{ciungu2010cryptographic,zhang2010proofconjecturerotationsymmetric} that the nonlinearity $\mathrm{nl}_1(R_3)$ of this function is equal to its weight, implying that~\cite{cusick2000fastevaluationweightsnonlinearity}
\begin{equation}
\mathrm{nl}_1(R_3) = \mathrm{wt}(R_3) = 2^{n-1} - \frac{1}{2}(\lambda_0^n + \lambda_{-}^n + \lambda_+^n),
\end{equation}
with $\{\lambda_0,\lambda_{\pm}\}$ defined in Proposition \ref{prop:cslowerbound}. Now suppose that an l2-MBQC computes $R_3(\mathbf{x})$ with probability of success $p_{\mathrm{succ}}$. Since by definition $\mathrm{nl}_1(R_3) \leq \nu(R_3)$, we deduce by Eq. \eqref{eq:Rausseq} that a probability of success 
\begin{equation}
p_{\mathrm{succ}} > \frac{1}{2} - \frac{\mathrm{nl}_1(R_3)}{2^{n}} = \frac{1}{2} + \frac{1}{2^{n+1}}(\lambda_0^n + \lambda_-^n + \lambda_+^n) \sim \frac{1}{2} + \frac{1}{2}(0.884...)^n, \quad n \to \infty,
\end{equation}
is sufficient to witness contextuality of the quantum computation~\cite{PhysRevA.88.022322}. The qualitative similarity to the case of bent functions Eq. \eqref{eq:expsmallbent} is clear, with the approach to the limit slower than in Eq. \eqref{eq:expsmallbent} as expected for a non-bent function such as $R_3(\mathbf{x})$. Thus contextuality thresholds with the qualitative form of Eq. \eqref{eq:expsmallbent} can arise even for non-bent functions and for odd values of $n$.

Second, we note that given an arbitrary $n$-qubit stabilizer state $|\psi\rangle$ on $n$ qubits, with stabilizer group $S = \langle g_1,g_2,\ldots,g_n\rangle$, the default quantum strategy $\mathcal{S}_0$ for its stabilizer-testing game can be interpreted as an MBQC~\footnote{R. Raussendorf, private communication}. Specifically, given an input bit string $\mathbf{x} \in \{0,1\}^n$, this MBQC computes the parity bit $c(\mathbf{x}) \in \{0,1\}$ corresponding to the stabilizer $M = \prod_{j=1}^n g_j^{x_j}$. However, this MBQC differs from the standard formulation of l2-MQBC in two important respects: (i) the classical side-processing that relates input bit strings to measurement settings is quadratic in general, by Eqs. \eqref{eq:incidence-polynomial} and \eqref{eq:incidence-polynomial-P}, and (ii) only an input-dependent subset of the resource qubits are measured during the course of the computation, corresponding to the fact that a generic stabilizer will include identity matrices in its Pauli representation Eq. \eqref{eq:Paulirep}. An alternative way to connect our results to l2-MBQC is to consider reformulating l2-MBQC with three measurement settings per qubit (corresponding to Pauli $X$, $Y$ and $Z$) rather than the conventional two. This presumably modifies the contextuality threshold Eq. \eqref{eq:Rausseq}. We defer a more detailed analysis of the connection between our results and MBQC to future work. 

\section{Conclusion}
We have developed a systematic theoretical framework for understanding the classical values of stabilizer-testing games, that both unifies several related recent constructions in the literature~\cite{Bravyi_2018,Daniel_2021,Bulchandani_2023,bulchandani2023playing,Hart_2025,hart2025braiding} and significantly extends previous theoretical analyses of such games~\cite{guhne2005bell,Cabello_2013,Daniel_2022}. For example, an open question resolved by our work is the large $n$ behaviour of the classical value for cyclic cluster states~\cite{guhne2005bell,Cabello21,danielsen2005self}. Our solution to this problem leads to the most physically striking result of this paper, namely that imperfect fidelities-per-qubit $\epsilon \gtrsim 0.89$ (including readout and measurement errors) should be sufficient to experimentally detect contextuality of the cyclic cluster state as $n \to \infty$. This improves upon earlier related results that were sensitive to the global parity of the state~\cite{bulchandani2023playing,Bulchandani_2023}, namely Eq. \eqref{eq:fidelityformulaGHZ}. We note that while the global-parity penalty for the conventional GHZ state in Eq. \eqref{eq:fidelityformulaGHZ} does not appear to be a serious obstacle for tests of quantum contextuality  on quantum computers, given very recent demonstrations of quantum advantage for GHZ states of up to 71 qubits~\cite{kumar2025quantumclassicalseparationboundedresourcetasks}, it would be problematic for demonstrating contextuality via the natural condensed-matter realization of the GHZ state as the ground state of a quantum Ising model, for which stray longitudinal fields or temperatures above the ground-state energy gap would rapidly eliminate any such quantum advantage~\cite{bulchandani2023playing}. Thus our results provide the first unambiguous theoretical demonstration that nonlocal games querying exponentially many stabilizers of a given state are robust enough to be realized on imperfect quantum devices with an imperfect fidelity-per-qubit $\epsilon < 1$, in truly many-body regimes of $n \gg 1$ qubits.

In this respect, our work is complementary to related theoretical~\cite{hart2025braiding} and experimental~\cite{Hart_2025} results that exploit a different potential source of ``robustness'' for stabilizer-testing games, namely topological order. By contrast, the robustness that we identify in this work is a consequence of the nonquadraticity of the parity function defined by the stabilizer algebra, and thus has no straightforward interpretation in terms of the $\mathbb{Z}_2 \times \mathbb{Z}_2$ SPT order~\cite{Gu_2009,Else_2012} of the cyclic cluster state; indeed, we expect that there should exist similarly robust examples with no ``order'' at all in the conventional sense of condensed matter physics.

This improved understanding of the origins of nonclassicality for stabilizer-testing games raises the question of whether we can extend this understanding beyond qubit stabilizer states. While the latter is the most natural setting for benchmarking the preparation of quantum codewords on current quantum devices, there are other constructions that might be natural to consider in different contexts. For example, performing an analogous benchmarking procedure for qudit or continuous-variable systems would require extending our results accordingly. We expect that our Reed-Muller-code based analysis in this paper should generalize straightforwardly to qudits with local Hilbert space dimension $d>2$. The generalization to continuous variable systems with $d \to \infty$ seems less trivial, and would be interesting to consider given the rapid recent improvements in the experimental control of continuous-variable quantum systems~\cite{Gottesman_2001,Fl_hmann_2019,Campagne_Ibarcq_2020}. 

From the perspective of achieving a starker separation between quantum and classical values, it seems natural to consider states arising from higher levels $k>2$ of the Clifford hierarchy, which have been related to the evaluation of degree-$k$ Boolean polynomials in the setting of MBQC~\cite{frembs2023hierarchies}. This might provide one route towards generalizing our observations on how cubic Boolean polynomials give rise to non-zero quantum advantage for stabilizer-testing games to higher-degree Boolean polynomials, with a higher polynomial degree presumably corresponding to an enhancement of such quantum advantage.

Most challenging would be an extension of our results towards more realistic ground states of the type that arise in condensed matter physics. In this case, identifying the optimal set of commuting operators away from zero-correlation-length fixed points, such as stabilizer states~\cite{verstraete2005renormalization}, is a non-trivial problem~\cite{Daniel_2021,bulchandani2023playing,Hart_2025}. For example, if one applies a ``fixed-point protocol'', such as the default protocol of Section~\ref{sec:stabtest}, away from a given fixed point, one should strictly optimize over all possible single-qubit rotations to reliably estimate the attainable quantum advantage~\cite{miller2012optimal,lin2023quantumtasksassistedquantum}, but even this step seems intractable in practice. One possible generalization of our approach in this paper, which might be easier than embracing the full complexity of many-body quantum ground states, would be to frustration-free Hamiltonians, for which measurements of the commuting summands in the Hamiltonian would be the natural starting point for an analysis in terms of quantum contextuality.

Finally, some recent motivation for considering quantum contextuality comes from the pivotal role played by nonlocal games in the Bravyi-Gosset-K{\"o}nig demonstration of quantum computational advantage with shallow circuits~\cite{Bravyi_2018,bravyi2020quantum,watts2019exponential}. Formally, this amounts to constructing a search problem (the ``hidden linear function problem'') that lies in the class of constant-depth quantum circuits with bounded fan-in gates $\mathrm{QNC}_0$ but not in the corresponding class of classical circuits $\mathrm{NC}_0$, thereby demonstrating an unconditional separation $\mathrm{NC}_0 \subsetneq \mathrm{QNC}_0$~\cite{watts2019exponential}. The question of whether or not scalable nonlocal games, of the type we have considered here, could be used to extend these results in a nontrivial fashion merits further investigation.\\

\noindent \textbf{Note added}: while this manuscript was in preparation, we became aware of independent related work on this topic, which will appear in the same arXiv posting~\cite{toappear}. This work also estimates classical values for the cyclic cluster and toric-code states, and where our results overlap, they agree.

\section{Acknowledgments}
We thank D.S. Borgnia, F.J. Burnell, Z. Han, C. Lin, P. Roushan and S.L. Sondhi for helpful discussions on this topic. We especially thank R. Raussendorf for numerous discussions on contextuality and for pointing out the connection between our results and measurement-based quantum computing. This work was supported in part by the NOTS cluster operated by Rice University's Center for Research Computing (CRC). S.~L.~and W.~W.~H.~are supported by the National Research Foundation (NRF), Singapore, through the NRF Felllowship NRF-NRFF15-2023-0008. W.~W.~H.~is supported through the National Quantum Office, hosted in A*STAR, under its Centre for Quantum Technologies Funding Initiative (S24Q2d0009).
\bibliography{games.bib}

\appendix 
\section{Proofs of Propositions}\label{app:proofs}
\renewcommand{\theequation}{A\arabic{equation}}
\setcounter{equation}{0}
\begin{lem}[Gaussian elimination on stabilizers]\label{lem: Gaussian elimination}
Let $V=\mathbb F_2^{2n}$ be written as a direct sum of $n$ two–dimensional sites
\[
V=\bigoplus_{i=1}^n W_i,\qquad W_i\cong\mathbb F_2^2,
\]
with the usual Pauli encoding on each site $i$:
\[
(0,0)\!\leftrightarrow\! I,\quad (1,0)\!\leftrightarrow\! X,\quad (0,1)\!\leftrightarrow\! Z,\quad (1,1)\!\leftrightarrow\! Y.
\]
For any linear subspace $L\le V$ there exists a basis $\{r_1,\dots,r_k\}$ of $L$ ($k=\dim L$) such that, on every site $i\in\{1,\dots,n\}$, at most two distinct nontrivial Pauli operators occur among these generators (all the rest being $I$).
\end{lem}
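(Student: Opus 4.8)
The plan is to argue by induction on the number of sites $n$. It is convenient to identify $L$ with its image under the injective map $V \to \bigoplus_{i=1}^n W_i$, $v \mapsto (\pi_1(v),\dots,\pi_n(v))$, where $\pi_i \colon V \to W_i$ is the projection onto the $i$-th site; the claim then becomes that $L$ admits a basis $\{r_1,\dots,r_k\}$ such that for every $i$ the set $\{\pi_i(r_1),\dots,\pi_i(r_k)\}$ misses at least one of the three nonzero elements of $W_i$. I will use two trivial observations throughout: a site $i$ with $\dim \pi_i(L) \le 1$ satisfies the conclusion automatically, for \emph{any} basis (all $\pi_i(r_j)$ lie in a fixed line); and any subspace of dimension at most $2$ has a good basis, since it has at most two basis vectors. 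In particular the cases $n \le 1$ are immediate.

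For the inductive step the key move is to ``peel off'' a set of sites $T$ with $1 \le |T| < n$. Set $K := L \cap \bigoplus_{i \in T} W_i$ and let $\bar L$ be the image of $L$ under the projection that forgets the sites in $T$. By the inductive hypothesis, applied to $\bar L$ (which lives on $n - |T| < n$ sites) and separately to $K$ (which lives on $|T| < n$ sites), both have good bases. Lifting a good basis $\{\bar t_m\}$ of $\bar L$ to vectors $\hat t_m \in L$ and appending a good basis of $K$ produces a basis of $L$ that is automatically good on every site outside $T$, since there the appended vectors vanish and the $\hat t_m$ agree with $\bar t_m$. The remaining task is to modify the $\hat t_m$ by elements of $K$ so that the basis is good on the sites of $T$ as well. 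This is painless when $L$ contains a nonzero vector supported on a single site: taking $T = \{i\}$ for such a site, the corrections act only on site $i$ and commute across the $\hat t_m$, and a short case distinction on $\dim K \in \{1,2\}$ shows that the site-$i$ Paulis of the corrected $\hat t_m$, together with the (at most two) Paulis of the basis of $K$, still number at most two. Hence the induction closes cleanly whenever some site has $\pi_i(L) = 0$ (then $L$ already lives on $n-1$ sites) or some site has $L \cap W_i \ne 0$.

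This leaves the case in which $L \cap W_i = 0$ for all $i$ --- so that every nonzero vector of $L$ is supported on at least two sites --- and $\dim L \ge 3$. Here I would take $T$ to be the support of a nonzero vector $v \in L$ of \emph{minimal} support. A Singleton-type bound forces $|T| < n$: if $v$ had full support, then by minimality so would every nonzero vector of $L$, and projecting away all but one site would embed $L$ into a single $W_i \cong \mathbb{F}_2^2$, contradicting $\dim L \ge 3$. Thus the peeling construction applies; moreover minimality of $v$ makes $\pi_i|_K$ injective for each $i \in T$, so $\dim K \le 2$ and $K$ ``spreads out'' over the sites of $T$. The delicate point --- which I expect to be the main obstacle in a full write-up --- is the cleanup on the peeled sites: the correction vectors now range over the $\le 2$-dimensional space $K$, and a single correction moves all sites of $T$ simultaneously, so the good basis of $\bar L$ and the corrections must be chosen together. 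I would handle this by choosing the basis of $K$ so that the forbidden Pauli at every site of $T$ is the image of one fixed vector $\kappa_0 \in K \setminus \{0\}$, and then arguing --- by a counting argument that exploits $\dim K \le 2$ and the injectivity of each $\pi_i|_K$ --- that within each $K$-coset there is a representative of $\hat t_m$ whose $T$-components all avoid the forbidden Paulis. Verifying that this can be done consistently across all the $\hat t_m$ is the heaviest part of the bookkeeping.
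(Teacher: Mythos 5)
Your reduction is set up sensibly, and the first two branches of your case analysis are sound: a site with $\pi_i(L)=0$ reduces $n$, and a site with $L\cap W_i\neq 0$ is handled correctly by your per-site corrections (this part closely parallels the second pass of the paper's argument). The genuine gap is exactly where you flag it, in the final case, and it is not mere bookkeeping: the cleanup strategy you propose --- lifting a good basis of $\bar L$ and then correcting each lift \emph{independently} by an element of $K=L\cap\bigoplus_{i\in T}W_i$ --- can provably fail. Take $n=3$ and $L=\langle (X,X,I),\,(Z,Z,X),\,(Z,Y,Z)\rangle$. Every nonzero element has support at least $2$, a minimal-support vector is $(X,X,I)$ with $T=\{1,2\}$, and $K=\langle (X,X,I)\rangle$ is one-dimensional. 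Then $\bar L=W_3$, and $\{X,Z\}$ is a perfectly legitimate good basis of $\bar L$ returned by your inductive hypothesis. The lifts of $X$ are $(Z,Z,X)$ and $(Y,Y,X)$, and the lifts of $Z$ are $(Z,Y,Z)$ and $(Y,Z,Z)$; together with the mandatory basis vector $(X,X,I)$ of $K$, all four combinations exhibit all three of $X,Y,Z$ at site $1$ or at site $2$. A good basis of this $L$ does exist, e.g. $\{(X,X,I),\,(Z,Z,X),\,(I,X,Y)\}$, but reaching it requires replacing one lift by the \emph{sum} of the two lifts (equivalently, changing the good basis of $\bar L$ from $\{X,Z\}$ to $\{X,Y\}$), an operation your correction scheme forbids and which, for larger $\dim\bar L$, threatens the goodness already secured on the sites outside $T$. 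So the counting argument you gesture at cannot be carried out coset-by-coset for each $\hat t_m$ separately; you would need either to strengthen the inductive hypothesis (so that $\bar L$ comes with a basis compatible with good lifts) or to choose $T$ and the lifts jointly, and you supply no mechanism for either.

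For comparison, the paper sidesteps the induction-on-sites structure entirely. It fixes an ordering of the sites and performs a two-round Gaussian elimination: a forward pass that puts a basis of $L$ into a block echelon form with at most two pivot generators per site, and a backward pass that, at each site $i$, adds only elements of the site-$i$ pivot block to the non-pivot generators --- which cannot disturb earlier sites because those pivots vanish there. The pivot-block structure is what replaces your set $K$ and avoids the coupling problem above, since the normalization at site $i$ is always performed against a subspace whose site-$i$ projection is under explicit control.
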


\begin{proof}
The proof is essentially Gaussian elimination. To apply this, we must fix an elimination order so that operations on later sites do not affect the results already obtained on earlier sites. For each site $i$, let $\pi_i:V\to W_i$ be the coordinate projection, and define the descending chain of residual subspaces
\[
L^{(1)}:=L,\qquad
L^{(i+1)}:=\{\,v\in L^{(i)}:\ \pi_i(v)=0\,\}\quad (1\le i\le n).
\]
Write $d_i:=\dim\bigl(\pi_i(L^{(i)})\bigr)\in\{0,1,2\}$. Since $W_i\cong\mathbb F_2^2$, $d_i\le2$, and by construction
\[
\dim L^{(i+1)}=\dim L^{(i)}-d_i\quad(1\le i\le n),\qquad 
\sum_{i=1}^n d_i=\dim L=:k.
\]

\paragraph{Round I}
At step $i$ we work inside $L^{(i)}$.

\emph{Case $d_i=0$.} Do nothing and pass to $L^{(i+1)}=L^{(i)}$.

\emph{Case $d_i=1$.} Choose $r_{i,1}\in L^{(i)}$ with $\pi_i(r_{i,1})\ne0$.
For each $u\in L^{(i)}$, replace $u\mapsto u+\alpha(u)r_{i,1}$ with the unique $\alpha(u)\in\{0,1\}$ making $\pi_i(u+\alpha r_{i,1})=0$.
Since $r_{i,1},u\in L^{(i)}$, they vanish on all sites $<i$, so earlier sites are unaffected. Let the updated rows span $L^{(i+1)}$.

\emph{Case $d_i=2$.} Choose $r_{i,1},r_{i,2}\in L^{(i)}$ such that $\{\pi_i(r_{i,1}),\pi_i(r_{i,2})\}$ is a basis of $\pi_i(L^{(i)})$.
For each $u\in L^{(i)}$, replace $u\mapsto u+\alpha(u)r_{i,1}+\beta(u)r_{i,2}$ with the unique $\alpha(u),\beta(u)\in\{0,1\}$ making $\pi_i(u+\alpha r_{i,1}+\beta r_{i,2})=0$.
Again earlier sites are not affected. Let the updated rows span $L^{(i+1)}$.

Iterating over $i=1,\dots,n$, we obtain a basis in ``block'' row echelon form
\[
\mathcal B=\bigsqcup_{i=1}^n \mathcal B^{(i)},\qquad
\mathcal B^{(i)}:=\{\,r_{i,\ell}:\ 1\le \ell\le d_i\,\},
\]
where vectors in $\mathcal B^{(i)}$ are the pivots for site $i$ and $|\mathcal B^{(i)}|=d_i$.

\smallskip
For the second round, fix the following notation from the outcome of Round I:
\[
S_i:=\mathrm{span}\!\Bigl(\bigcup_{m<i}\mathcal B^{(m)}\Bigr),\qquad
T_i:=\mathrm{span}\!\Bigl(\bigcup_{m\ge i}\mathcal B^{(m)}\Bigr),
\]
so that $L=S_i\oplus T_i$ and $T_i=\mathrm{span}(\mathcal B^{(i)})\oplus T_{i+1}$.
We also set the (global) site dimension indicator
\[
k_i:=\dim\bigl(\pi_i(L)\bigr)\in\{0,1,2\}.
\]

\paragraph{Round II}
Proceed for $i=n,\dots,1$ in this order. At step $i$, we keep every vector in $T_i$ fixed, and we are allowed to replace vectors $s\in S_i$ by
\[
s\ \longmapsto\ s'\ :=\ s+\sum_{u\in\mathcal B^{(i)}}\alpha_u(s)\,u\quad (\alpha_u(s)\in\{0,1\}),
\]
i.e., we only add linear combinations of the pivot block $\mathcal B^{(i)}$ to vectors in $S_i$.
This keeps $T_i$ unchanged and cannot alter the values of $s$ at any site $<i$, because each $u\in\mathcal B^{(i)}$ has zero projection on all sites $<i$ (by the echelon property from Round I).

\emph{Case $k_i\in\{0,1\}$.} Do nothing and pass to step $i+1$.

\emph{Case $k_i=2$ and $d_i=2$.} Then $\pi_i(\mathcal B^{(i)})$ already spans $\pi_i(L)$.
For any $s\in S_i$ choose $\alpha_u(s)$ so that $\pi_i\!\bigl(s'\bigr)=0$.
Thus only the two pivot vectors in $\mathcal B^{(i)}$ may remain nonzero at site $i$. Hence two nontrivial Paulis occur there.

\emph{Case $k_i=2$ and $d_i=1$.} Let $v:=\pi_i(r_{i,1})\neq 0$ be the unique pivot direction on site $i$.
Then $\pi_i(L)=\langle v\rangle \oplus (a+\langle v\rangle)$ for any $a\in\pi_i(L)\setminus\langle v\rangle$.
Fix one representative $w\in \pi_i(L)\setminus\langle v\rangle$.
For each $s\in S_i$,
if $\pi_i(s)\in\langle v\rangle$, choose $\alpha_{r_{i,1}}(s)$ to make $\pi_i(s')=0$;
if $\pi_i(s)\in w+\langle v\rangle$, choose $\alpha_{r_{i,1}}(s)$ so that $\pi_i(s')=w$ (the two elements of the affine line $w+\langle v\rangle$ differ by $v$).
Thus, after the replacements, every vector in $S_i$ has projection either $0$ or the single fixed vector $w$ on site $i$, while the only other nonzero projection on site $i$ among the basis vectors is $v$ coming from the pivot row $r_{i,1}$. Hence two nontrivial Paulis occur on site $i$.

\emph{Case $k_i=2$ and $d_i=0$.} In this case all nonzero projections on site $i$ come from $S_i$.
Choose $s^{(1)},s^{(2)}\in S_i$ such that $\{\pi_i(s^{(1)}),\pi_i(s^{(2)})\}$ is a basis of $\pi_i(L)$.
Keep them as part of the basis; for every other $s\in S_i$, replace
\[
s\ \longmapsto\ s+\alpha\,s^{(1)}+\beta\,s^{(2)}\quad(\alpha,\beta\in\{0,1\})
\]
to make the $\pi_i$-projection zero.
Since these operations occur entirely inside $S_i$, $T_i$ remains fixed.
Therefore on site $i$ only the two chosen vectors can be nonzero, and only two nontrivial Paulis occur.

Executing Round II for $i=n,\dots,1$ yields a new basis of $L$ and on each site $i$ at most two distinct nontrivial Pauli operators appear among the basis vectors. This proves the lemma.
\end{proof}

\begin{lem}\label{lem:full-rank}
Let $f_i : \mathbb{F}_2^{\,d} \to \mathbb{F}_2$ be affine functions of the form
\[
    f_i(\mathbf{x}) \;=\; \mathbf{u}_i \cdot \mathbf{x} \,\oplus\, v_i ,
    \qquad i=1,\dots,m,
\]
and define the affine map
\[
    \phi : \mathbb{F}_2^{\,d} \to \mathbb{F}_2^{\,m},
    \qquad
    \phi(\mathbf{x}) = \bigl(f_1(\mathbf{x}),\dots,f_m(\mathbf{x})\bigr).
\]
If $\phi$ is injective, then the coefficient vectors
\[
    \mathbf{u}_1,\dots,\mathbf{u}_m \in \mathbb{F}_2^{\,d}
\]
span the entire space $\mathbb{F}_2^{\,d}$.
\end{lem}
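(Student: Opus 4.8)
The statement is a linear-algebra fact over $\mathbb{F}_2$, and I would prove it by contraposition: assume the coefficient vectors $\mathbf{u}_1,\dots,\mathbf{u}_m$ fail to span $\mathbb{F}_2^d$, and exhibit a nontrivial collision of $\phi$. The key observation is that an affine map only ``sees'' its input through the linear functionals $\mathbf{x}\mapsto \mathbf{u}_i\cdot\mathbf{x}$, so any direction orthogonal to all the $\mathbf{u}_i$ is invisible to $\phi$.

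First I would set $U \defeq \mathrm{span}(\mathbf{u}_1,\dots,\mathbf{u}_m) \le \mathbb{F}_2^d$ and suppose, for contradiction, that $U \ne \mathbb{F}_2^d$, i.e.\ $\dim U \le d-1$. Then the orthogonal complement $U^{\perp}$ with respect to the standard bilinear form on $\mathbb{F}_2^d$ has dimension $d - \dim U \ge 1$, so there exists a nonzero vector $\mathbf{w}\in\mathbb{F}_2^d$ with $\mathbf{u}_i\cdot\mathbf{w} = 0$ for every $i=1,\dots,m$.

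Next I would compute, for an arbitrary $\mathbf{x}\in\mathbb{F}_2^d$ and each $i$,
\[
f_i(\mathbf{x}+\mathbf{w}) = \mathbf{u}_i\cdot(\mathbf{x}+\mathbf{w}) \oplus v_i = (\mathbf{u}_i\cdot\mathbf{x}) \oplus (\mathbf{u}_i\cdot\mathbf{w}) \oplus v_i = (\mathbf{u}_i\cdot\mathbf{x}) \oplus v_i = f_i(\mathbf{x}),
\]
using bilinearity of the dot product over $\mathbb{F}_2$ and $\mathbf{u}_i\cdot\mathbf{w}=0$. Assembling these componentwise gives $\phi(\mathbf{x}+\mathbf{w}) = \phi(\mathbf{x})$ while $\mathbf{x}+\mathbf{w}\ne\mathbf{x}$ (since $\mathbf{w}\ne 0$ in characteristic $2$), contradicting injectivity of $\phi$. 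Hence $U = \mathbb{F}_2^d$, which is the claim.

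There is essentially no obstacle here: the only point requiring a moment's care is that the argument genuinely needs the characteristic-$2$ (or more generally finite-field) setting so that ``orthogonal complement has complementary dimension'' holds and so that $\mathbf{w}\ne 0$ really does move $\mathbf{x}$; everything else is a one-line verification. If one prefers to avoid mentioning $U^{\perp}$ explicitly, an equivalent route is to note that the linear part $L(\mathbf{x}) = (\mathbf{u}_1\cdot\mathbf{x},\dots,\mathbf{u}_m\cdot\mathbf{x})$ of $\phi$ has $\ker L \ne \{0\}$ whenever the $\mathbf{u}_i$ do not span, and $\phi$ is injective iff $L$ is, by affine translation; I would present whichever phrasing is shortest.
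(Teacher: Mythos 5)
Your proof is correct and is essentially the same argument as the paper's: both reduce injectivity of $\phi$ to injectivity of its linear part and conclude that the $\mathbf{u}_i$ must span (the paper phrases this via $\mathrm{rank}(U)=d$ and rank–nullity, you via a nonzero vector in $U^{\perp}$ producing a collision, and your closing remark about $\ker L$ is precisely the paper's version). No gaps.
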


\begin{proof}
Let $U = (\mathbf{u}_1\,|\,\cdots\,|\,\mathbf{u}_m)$ be the $d\times m$ matrix of linear
coefficients. Then
\[
    \phi(\mathbf{x}) = \mathbf{x}^{\mathsf{T}}U \,\oplus\, \mathbf{v}
\]
for a fixed $\mathbf{v}\in\mathbb{F}_2^{\,m}$.
Injectivity of $\phi$ implies injectivity of the linear map
$\mathbf{x}\mapsto \mathbf{x}^{\mathsf{T}}U$, so $U$ has trivial kernel and $\mathrm{rank}(U)=d$.
Thus the columns $\mathbf{u}_1,\dots,\mathbf{u}_m$ span $\mathbb{F}_2^{\,d}$.
\end{proof}

\begin{lem}
[Proposition 2 in Ref.~\cite{carlet2008recursive}]\label{lem:recursive-nl-bound1}
Let $f$ be any $r$-variable Boolean function, and let $s$ be a positive integer smaller than $r$. Then
\begin{equation}\label{eq:recursive-nl}
    \mathrm{nl}_s(f) \;\ge\; \tfrac{1}{2} \max_{\mathbf{a} \in \mathbb{F}_2^r} \mathrm{nl}_{s-1}\!\big(D_{\mathbf{a}} f\big) \, .
\end{equation}
\end{lem}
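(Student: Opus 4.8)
The plan is to run the standard ``differentiate the optimal approximant'' argument. Fix an arbitrary direction $\mathbf{a}\in\mathbb{F}_2^r$ (the case $\mathbf{a}=\mathbf{0}$ is trivial since $D_{\mathbf{0}}f\equiv 0$ and $\mathrm{nl}_{s-1}(0)=0$, so assume $\mathbf{a}\neq\mathbf{0}$). Let $g\in\mathrm{RM}(s,r)$ be a closest degree-$\le s$ polynomial to $f$, so that $d_{\mathrm{H}}(f,g)=\mathrm{nl}_s(f)$, and write $E:=\{\mathbf{x}\in\mathbb{F}_2^r : f(\mathbf{x})\neq g(\mathbf{x})\}$, with $|E|=\mathrm{nl}_s(f)$. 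First I would observe that $D_{\mathbf{a}}g(\mathbf{x})=g(\mathbf{x})\oplus g(\mathbf{x}+\mathbf{a})$ has algebraic degree at most $s-1$, because taking a discrete derivative strictly lowers the degree of a nonconstant polynomial over $\mathbb{F}_2$; hence $D_{\mathbf{a}}g\in\mathrm{RM}(s-1,r)$ is an admissible competitor for approximating $D_{\mathbf{a}}f$.

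The key step is a containment of error sets. I would show that
\[
\{\mathbf{x} : D_{\mathbf{a}}f(\mathbf{x})\neq D_{\mathbf{a}}g(\mathbf{x})\}\;\subseteq\; E\,\cup\,(E+\mathbf{a}).
\]
Indeed, if $\mathbf{x}\notin E$ and $\mathbf{x}\notin E+\mathbf{a}$, then $f(\mathbf{x})=g(\mathbf{x})$ and $f(\mathbf{x}+\mathbf{a})=g(\mathbf{x}+\mathbf{a})$, so $D_{\mathbf{a}}f(\mathbf{x})=f(\mathbf{x})\oplus f(\mathbf{x}+\mathbf{a})=g(\mathbf{x})\oplus g(\mathbf{x}+\mathbf{a})=D_{\mathbf{a}}g(\mathbf{x})$. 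Taking cardinalities and using $|E+\mathbf{a}|=|E|$ gives
\[
\mathrm{nl}_{s-1}(D_{\mathbf{a}}f)\;\le\; d_{\mathrm{H}}\bigl(D_{\mathbf{a}}f,\,D_{\mathbf{a}}g\bigr)\;\le\;|E\cup(E+\mathbf{a})|\;\le\;2|E|\;=\;2\,\mathrm{nl}_s(f).
\]
Since $\mathbf{a}$ was arbitrary, taking the maximum over $\mathbf{a}\in\mathbb{F}_2^r$ and dividing by $2$ yields the claimed inequality $\mathrm{nl}_s(f)\ge\tfrac12\max_{\mathbf{a}\in\mathbb{F}_2^r}\mathrm{nl}_{s-1}(D_{\mathbf{a}}f)$.

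There is essentially no serious obstacle here: the argument is an elementary counting estimate. The only points requiring a modicum of care are (i) verifying that the discrete derivative of a degree-$\le s$ function indeed lands in $\mathrm{RM}(s-1,r)$, which one can check on monomials, and (ii) keeping the direction of the inequality straight, since we are upper-bounding $\mathrm{nl}_{s-1}(D_{\mathbf{a}}f)$ by twice an \emph{achieved} value $\mathrm{nl}_s(f)=|E|$ rather than the other way around. For a self-contained account I would simply cite Proposition~2 of Ref.~\cite{carlet2008recursive}, but the three-line proof above can be included for completeness.
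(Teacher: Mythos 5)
Your proof is correct. The paper itself states this lemma only by citation to Proposition~2 of Ref.~\cite{carlet2008recursive} and does not reproduce an argument, and your derivation is precisely the standard one from that reference: differentiate the optimal $\mathrm{RM}(s,r)$ approximant $g$, note $D_{\mathbf{a}}g\in\mathrm{RM}(s-1,r)$, and bound the error set of the derivatives by $E\cup(E+\mathbf{a})$. All steps check out, including the degree drop under discrete differentiation and the trivial $\mathbf{a}=\mathbf{0}$ case.
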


\begin{lem}[Proposition 3 in Ref.~\cite{carlet2008recursive}]
    \label{lem:recursive-nl-bound2}
Let $f$ be any $r$-variable function and $s$ a positive integer smaller than $r$. We have
\[
\mathrm{nl}_s(f) \;\ge\; 2^{\,r-1} \;-\; \frac{1}{2}\sqrt{\,2^{2r} \;-\; 2 \sum_{\mathbf{a}\in \mathbb{F}_2^r} \mathrm{nl}_{s-1}\!\big(D_{\mathbf{a}} f\big)\,}\,.
\]

\end{lem}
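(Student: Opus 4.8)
The plan is to bound the order-$s$ nonlinearity of $f$ in terms of a sum over the order-$(s-1)$ nonlinearities of its first-order derivatives $D_{\mathbf{a}}f$, via a second-moment (squaring) identity for the relevant sign sums. This is the recursive argument of Ref.~\cite{carlet2008recursive}, and I would carry it out as follows.

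First, I would record the elementary reformulation
\[
\mathrm{nl}_s(f) \;=\; 2^{\,r-1} - \tfrac{1}{2}\,\max_{\deg g\le s}\,\Bigl|\sum_{\mathbf{x}\in\mathbb{F}_2^r}(-1)^{f(\mathbf{x})+g(\mathbf{x})}\Bigr|,
\]
which follows from $d_{\mathrm H}(f,g) = 2^{\,r-1} - \tfrac12\sum_{\mathbf{x}}(-1)^{f(\mathbf{x})+g(\mathbf{x})}$ together with the observation that, since the constant function $1$ lies in $\mathrm{RM}(s,r)$, the maximum of the signed sum over $g$ with $\deg g\le s$ equals the maximum of its absolute value.

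Next, I would fix an arbitrary $g$ with $\deg g\le s$, square the sign sum, and introduce the shift $\mathbf{y}=\mathbf{x}+\mathbf{a}$ to obtain the exact identity
\[
\Bigl(\sum_{\mathbf{x}}(-1)^{f(\mathbf{x})+g(\mathbf{x})}\Bigr)^{2}
\;=\; \sum_{\mathbf{a}\in\mathbb{F}_2^r}\ \sum_{\mathbf{x}}(-1)^{D_{\mathbf{a}}f(\mathbf{x})+D_{\mathbf{a}}g(\mathbf{x})}.
\]
The key point is that $\deg D_{\mathbf{a}}g \le \deg g - 1 \le s-1$, so for each $\mathbf{a}$ the inner sum is a correlation of $D_{\mathbf{a}}f$ with a codeword of $\mathrm{RM}(s-1,r)$, hence at most $2^{\,r} - 2\,\mathrm{nl}_{s-1}(D_{\mathbf{a}}f)$; in particular the degenerate term $\mathbf{a}=\mathbf{0}$ contributes exactly $2^{\,r} = 2^{\,r} - 2\,\mathrm{nl}_{s-1}(0)$ since $D_{\mathbf{0}}f = 0$. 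Summing over $\mathbf{a}$ gives the uniform bound $\bigl(\sum_{\mathbf{x}}(-1)^{f+g}\bigr)^{2} \le 2^{\,2r} - 2\sum_{\mathbf{a}}\mathrm{nl}_{s-1}(D_{\mathbf{a}}f)$; taking square roots, maximizing over $g$, and inserting into the reformulation of $\mathrm{nl}_s(f)$ yields the claimed inequality.

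I expect the only non-routine step to be the degree bookkeeping in the inner sums: one must check that an arbitrary degree-$\le s$ choice of $g$ is precisely what is needed to force $D_{\mathbf{a}}g$ into $\mathrm{RM}(s-1,r)$, and that the per-$\mathbf{a}$ estimate $\sum_{\mathbf{x}}(-1)^{D_{\mathbf{a}}f+h}\le 2^{\,r} - 2\,\mathrm{nl}_{s-1}(D_{\mathbf{a}}f)$ holds for every $h\in\mathrm{RM}(s-1,r)$ and is consistent at $\mathbf{a}=\mathbf{0}$. Everything else reduces to the standard Fourier identity for Hamming distance and the change of summation variable, with no appeal to Cauchy--Schwarz or to Lemma~\ref{lem:recursive-nl-bound1} required.
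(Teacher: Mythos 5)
Your argument is correct: the reformulation $\mathrm{nl}_s(f) = 2^{r-1} - \tfrac{1}{2}\max_{\deg g\le s}\bigl|\sum_{\mathbf{x}}(-1)^{f+g}\bigr|$, the squaring/shift identity, the degree drop $\deg D_{\mathbf{a}}g\le s-1$, and the per-$\mathbf{a}$ estimate (consistent at $\mathbf{a}=\mathbf{0}$) together give exactly the stated bound. The paper does not prove this lemma itself but imports it as Proposition 3 of Ref.~\cite{carlet2008recursive}; your derivation is precisely the standard proof of that cited result, so there is nothing to reconcile.
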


\begin{prop}[Bound via $\mathrm{nl}_2$]\label{prop:bound via derivative}
Let $(\mathcal{C},S)$ be a stabilizer testing game with parity function $c$. If the game admits non-zero quantum advantage, then its classical value satisfies the following two bounds:
\begin{subequations}\label{eq:derivative-bounds}
    \begin{align}
    \text{(i)}\quad
    p_{\mathrm{cl}}^*
    &\le
    \tfrac{3}{4} 
    + 2^{-\tfrac{1}{2}\,\max_{\mathbf{a} \in \mathbb{F}_2^r}\,\mathrm{rank}(D_{\mathbf{a}} c)-2},
    \tag{\ref{eq:derivative-bound1}}\\[4pt]
    \text{(ii)}\quad
    p_{\mathrm{cl}}^*
    &\le
    \tfrac{1}{2}
    + \tfrac{1}{2}\,
    \sqrt{\tfrac{1}{2^r}\sum_{a \in \mathbb{F}_2^r}
    \!\left[2^{-\tfrac{1}{2}\,\mathrm{rank}(D_{\mathbf{a}} c)}\right]},
    \tag{\ref{eq:derivative-bound2}}
\end{align}
\end{subequations}
where $D_{\mathbf{a}} c(\mathbf{x}) = c(\mathbf{x}) \oplus c(\mathbf{x}+\mathbf{a})$ denotes the Boolean derivative of $c$ in direction $\mathbf{a} \in \mathbb{F}_2^r$, and $\mathrm{rank}(D_{\mathbf{a}} c)$ denotes the rank of the associated quadratic form. 

\end{prop}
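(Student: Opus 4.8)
The plan is to reduce both inequalities to a single lower bound on the second-order nonlinearity $\mathrm{nl}_2(c)$ of the parity function, and then feed in the recursive nonlinearity estimates of Ref.~\cite{carlet2008recursive} recorded in Lemmas~\ref{lem:recursive-nl-bound1} and~\ref{lem:recursive-nl-bound2}. First I would invoke Proposition~\ref{prop:cubic}: since the game admits non-zero quantum advantage, there is a generating set of $S$ for which the parity function $c$ is a genuine cubic polynomial, so $c \in \mathrm{RM}(3,r)$, while by the same proof every column of the incidence matrix has degree at most two, i.e.\ $\operatorname{im}(A) \subseteq \mathrm{RM}(2,r)$. Exactly as in the proof of Theorem~\ref{thm:unifying bound}, this yields $d_{\mathrm H}(\operatorname{im}(A),\mathbf c) \ge \mathrm{nl}_2(c)$, and hence via Eq.~\eqref{eq:cl_val_r} the master inequality $p_{\mathrm{cl}}^* \le 1 - 2^{-r}\,\mathrm{nl}_2(c)$ of Eq.~\eqref{eq:bound-by-nl2}. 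Everything then rests on lower-bounding $\mathrm{nl}_2(c)$.

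The key observation is that, $c$ being cubic, each Boolean derivative $D_{\mathbf{a}}c$ has degree at most two, so it is an (affine or) quadratic function. Combining Eqs.~\eqref{eq:nl1-formula} and~\eqref{eq:walsh-quadratic} then gives the closed form $\mathrm{nl}_1(D_{\mathbf{a}}c) = 2^{r-1}\bigl(1 - 2^{-\mathrm{rank}(D_{\mathbf{a}}c)/2}\bigr)$, which also covers the degenerate rank-zero (affine) case, where it correctly returns $0$. For bound (i) I would substitute this into Lemma~\ref{lem:recursive-nl-bound1} with $s=2$: since $\mathrm{nl}_1(D_{\mathbf{a}}c)$ is monotone increasing in $\mathrm{rank}(D_{\mathbf{a}}c)$, one obtains $\mathrm{nl}_2(c) \ge \tfrac12 \max_{\mathbf{a}} \mathrm{nl}_1(D_{\mathbf{a}}c) = 2^{r-2}\bigl(1 - 2^{-\frac12\max_{\mathbf{a}}\mathrm{rank}(D_{\mathbf{a}}c)}\bigr)$, and inserting this into the master inequality and simplifying reproduces Eq.~\eqref{eq:derivative-bound1}.

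For bound (ii) I would instead feed the same closed form into Lemma~\ref{lem:recursive-nl-bound2} with $s=2$. Writing $S := \sum_{\mathbf{a}\in\mathbb{F}_2^r} 2^{-\mathrm{rank}(D_{\mathbf{a}}c)/2}$, one has $\sum_{\mathbf{a}}\mathrm{nl}_1(D_{\mathbf{a}}c) = 2^{r-1}(2^r - S)$, so that the leading $2^{2r}$ under the radical cancels and $2^{2r} - 2\sum_{\mathbf{a}}\mathrm{nl}_1(D_{\mathbf{a}}c) = 2^{r}S$. Hence $\mathrm{nl}_2(c) \ge 2^{r-1} - \tfrac12\sqrt{2^{r}S}$, and substituting into $p_{\mathrm{cl}}^* \le 1 - 2^{-r}\mathrm{nl}_2(c)$ while pulling the factor $2^{-r}$ inside the square root yields Eq.~\eqref{eq:derivative-bound2}.

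The computation is mostly bookkeeping, and there is no single hard step; the points requiring care are (a) confirming that $D_{\mathbf{a}}c$ is at most quadratic so that the quadratic Walsh identity Eq.~\eqref{eq:walsh-quadratic} applies — this is precisely where cubicity of $c$, and hence the particular generating set supplied by Proposition~\ref{prop:cubic}, is used; (b) checking that the quadratic nonlinearity formula degenerates correctly at rank zero; and (c) verifying that the hypothesis $s < r$ of Lemmas~\ref{lem:recursive-nl-bound1} and~\ref{lem:recursive-nl-bound2} is met, which holds since $r \ge 3$ is necessary for non-zero quantum advantage by Proposition~\ref{prop:cubic}.
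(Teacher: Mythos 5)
Your proposal is correct and follows essentially the same route as the paper: reduce both bounds to the master inequality $p_{\mathrm{cl}}^* \le 1 - 2^{-r}\,\mathrm{nl}_2(c)$, lower-bound $\mathrm{nl}_2(c)$ via the recursive estimates of Carlet (Lemmas~\ref{lem:recursive-nl-bound1} and~\ref{lem:recursive-nl-bound2} with $s=2$), and evaluate $\mathrm{nl}_1(D_{\mathbf{a}}c)$ in closed form using the quadratic Walsh--Hadamard identity. Your write-up is in fact somewhat more careful than the paper's terse version, explicitly checking the degenerate rank-zero case and the hypothesis $s<r$, and the algebra in both substitutions checks out.
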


\begin{proof}
    By Lemma~\ref{lem:recursive-nl-bound1} and Lemma~\ref{lem:recursive-nl-bound2}, taking $r=2$, we obtain the following two bounds:
\begin{subequations}\label{eq:nl2-bounds}
\begin{align}
    \text{(i)}\quad
    \mathrm{nl}_2(c) 
    &\;\ge\; 
    \tfrac{1}{2}\, \max_{\mathbf{a} \in \mathbb{F}_2^r} 
    \mathrm{nl}_1\!\big(D_{\mathbf{a}} c\big) , 
    \label{eq:nl2-bound-a}\\[4pt]
    \text{(ii)}\quad
    \mathrm{nl}_2(c) 
    &\;\ge\;
    2^{\,r-1}
    \;-\;
    \tfrac{1}{2}
    \sqrt{\,2^{2r} - 2 \!\sum_{\mathbf{a} \in \mathbb{F}_2^r} 
    \mathrm{nl}_1\!\big(D_{\mathbf{a}} c\big)} \, .
    \label{eq:nl2-bound-b}
\end{align}
\end{subequations}
  
    By Eq.~\eqref{eq:nl1-formula} and Eq.~\eqref{eq:walsh-quadratic}, the first-order nonlinearity of $D_{\mathbf{a}} c$ can be written as
    \begin{equation}\label{eq:nl1-derivative}
        \mathrm{nl}_{1}(D_{\mathbf{a}} c) \;=\; 2^{\,r-1} \;-\; 2^{\,r-1 - \tfrac{1}{2}\,\mathrm{rank}(D_{\mathbf{a}} c)} \, .
    \end{equation}

    Substituting \eqref{eq:nl1-derivative} into the bound \eqref{eq:nl2-bounds}, and combining with Eq.~\eqref{eq:bound-by-nl2}, we arrive at Eq.~\eqref{eq:derivative-bounds}.

\end{proof}

\section{Exact classical values for cyclic cluster states}
\label{app:numerics}
In this Appendix, we tabulate the exact classical values of the stabilizer-testing game for cyclic cluster states, extending previous results for $n \leq 10$~\cite{Cabello_2013,Daniel_2022} to all values of $3 \leq n \leq 16$ by an exhaustive (and therefore exact) numerical maximization of Eq. \eqref{eq:pclstarcluster}. We record the value of the lower bound Eq. \eqref{eq:pcllowerbound}, the exact value of $p_{\mathrm{cl}}^*$ and the threshold $\mathcal{F}_c = 2p_{\mathrm{cl}}^*-1$ such that a fidelity $\mathcal{F} > \mathcal{F}_c$ to the cyclic cluster state witnesses contextuality by Eq. \eqref{eq:fidelityformula}. 
See Table. \ref{tab:cycliccluster}.
\begin{table}[h]
    \centering
    \begin{tabular}{c|c|c|c}
         $n$ & Lower bound from Eq. \eqref{eq:pcllowerbound} & Exact value of $p_{\mathrm{cl}}^*$ & $\mathcal{F}_c$ \\
         \hline
 $3$ & $\frac{7}{8}$ & $\frac{7}{8}$ & $0.7500$ \\
$4$ & $\frac{12}{16}$ & $\frac{14}{16}$ & $0.7500$ \\
$5$ & $\frac{26}{32}$ & $\frac{26}{32}$ & $0.6250$ \\
$6$ & $\frac{46}{64}$ & $\frac{46}{64}$ & $0.4375$ \\
$7$ & $\frac{92}{128}$ & $\frac{92}{128}$ & $0.4375$ \\
$8$ & $\frac{176}{256}$ & $\frac{176}{256}$ & $0.3750$ \\
$9$ & $\frac{340}{512}$ & $\frac{340}{512}$ & $0.3281$ \\
$10$ & $\frac{664}{1024}$ & $\frac{664}{1024}$ & $0.2969$ \\
$11$ & $\frac{1288}{2048}$ & $\frac{1288}{2048}$ & $0.2578$ \\
$12$ & $\frac{2520}{4096}$ & $\frac{2520}{4096}$ & $0.2305$ \\
$13$ & $\frac{4928}{8192}$ & $\frac{4928}{8192}$ & $0.2031$ \\
$14$ & $\frac{9664}{16384}$ & $\frac{9664}{16384}$ & $0.1797$\\
$15$ & $\frac{18992}{32768}$ & $\frac{18992}{32768}$ & $0.1592$\\
$16$ & $\frac{37376}{65536}$ & $\frac{37376}{65536}$ & $0.1406$
    \end{tabular}
    \caption{Exact classical values $p_{\mathrm{cl}}^*$ and corresponding fidelity thresholds $\mathcal{F}_c$ to four decimal places for the stabilizer-testing game played with cyclic cluster states.}
    \label{tab:cycliccluster}
\end{table}

\end{document}